\pdfoutput=1
\documentclass[12pt,leqno]{amsart}
\usepackage{fullpage}
\usepackage{mathrsfs}
\usepackage{color}
\usepackage{amsthm}
\usepackage{amssymb}
\usepackage{amsmath}
\usepackage{amsxtra}     
\usepackage{graphicx}
\usepackage{verbatim}
\def\Xint#1{\mathchoice
   {\XXint\displaystyle\textstyle{#1}}%
   {\XXint\textstyle\scriptstyle{#1}}%
   {\XXint\scriptstyle\scriptscriptstyle{#1}}%
   {\XXint\scriptscriptstyle\scriptscriptstyle{#1}}%
   \!\int}
\def\XXint#1#2#3{{\setbox0=\hbox{$#1{#2#3}{\int}$}
     \vcenter{\hbox{$#2#3$}}\kern-.5\wd0}}

\def\dashint{\Xint-}

\newcommand{\mat}[1]{\ensuremath{\mathbf{#1}}}

\theoremstyle{plain}
\newtheorem{theorem}{Theorem}
\newtheorem{prop}{Proposition}
\newtheorem{corollary}{Corollary}
\newtheorem{lemma}{Lemma}

\theoremstyle{definition}
\newtheorem{definition}{Definition}

\theoremstyle{remark}
\newtheorem{remark}{Remark}
\numberwithin{theorem}{section}        
\numberwithin{lemma}{section}
\numberwithin{definition}{section}
\numberwithin{remark}{section}
\numberwithin{prop}{section}
\numberwithin{corollary}{section}

\bibliographystyle{plain}

\newcommand{\ueps}{u_\epsilon}

\newcommand{\ut}{\tilde{u}_\epsilon}
\newcommand{\Ut}{\tilde{U}_\epsilon}
\newcommand{\lamt}{\tilde{\lambda}}
\newcommand{\gamt}{\tilde{\gamma}}

\title{On the Zero-Dispersion Limit of the Benjamin-Ono Cauchy Problem for Positive Initial Data}
\author{Peter D. Miller \and Zhengjie Xu}
\address{Department of Mathematics\\University of Michigan\\East Hall\\530 Church St.\\ Ann Arbor, MI 48109}
\date{\today}
\begin{document}

\maketitle
\begin{abstract}
  We study the Cauchy initial-value problem for the Benjamin-Ono
  equation in the zero-disperion limit, and we establish the existence
  of this limit in a certain weak sense by developing an appropriate
  analogue of the method invented by Lax and Levermore to analyze the
  corresponding limit for the Korteweg-de Vries equation.
\end{abstract}

\section{Introduction}
\label{sec:intro}
The Benjamin-Ono (BO) equation 
\begin{equation}
\label{BOE} \frac{\partial \ueps}{\partial t}+2\ueps\frac{\partial\ueps}{\partial x}+
\epsilon \mathcal{H}\left[\frac{\partial^2\ueps}{\partial x^2}\right]=0,
\quad x\in\mathbb{R},\quad t>0,
\end{equation}
where $\epsilon>0$ is a constant and $\mathcal{H}$ is the Hilbert
transform operator defined by the Cauchy principal value integral
\begin{equation}
\label{BOE operator H}
\mathcal{H}[f](x):=\frac{1}{\pi}\dashint_\mathbb{R}\frac{f(y)}{y-x}\,dy
\end{equation}
is a model for weakly nonlinear dispersive waves on the interface
between two ideal immiscible fluids, one of which may be considered to
be infinitely deep.  Applications include the modeling of internal
waves in deep water \cite{Benjamin 1967, Davis 1967, Ono 1975, Choi
  1996}, and also the modeling of atmospheric waves like the dramatic
``morning glory'' phenomenon of northeastern Australia
\cite{Porter 2002}.
The relevant Cauchy problem 
is to determine the solution $\ueps(x,t)$ of \eqref{BOE} subject to a suitable 
initial condition $\ueps(x,0)=u_0(x)$ given for all $x\in\mathbb{R}$.

The parameter $\epsilon>0$ is a measure of the relative strength of
the dispersive and nonlinear effects in the system.  In many applications
one thinks of $\epsilon$ as a small parameter in part because numerical
experiments show that in this situation the finite-time formation
of a shock wave (gradient catastrophe) in the formal limiting equation
(obtained simply by setting $\epsilon=0$ in \eqref{BOE}) 
is dispersively regularized by the generation of a
smoothly modulated train of approximately periodic traveling waves, which
correspond to so-called undular bores frequently 
observed in the evolution of physical internal waves.  Snapshots from
the solution of a Cauchy problem for \eqref{BOE} illustrating the
averted shock and onset of an undular bore are shown in 
Figure~\ref{fig:numerics1}.
\begin{figure}[h]
\begin{center}
\includegraphics{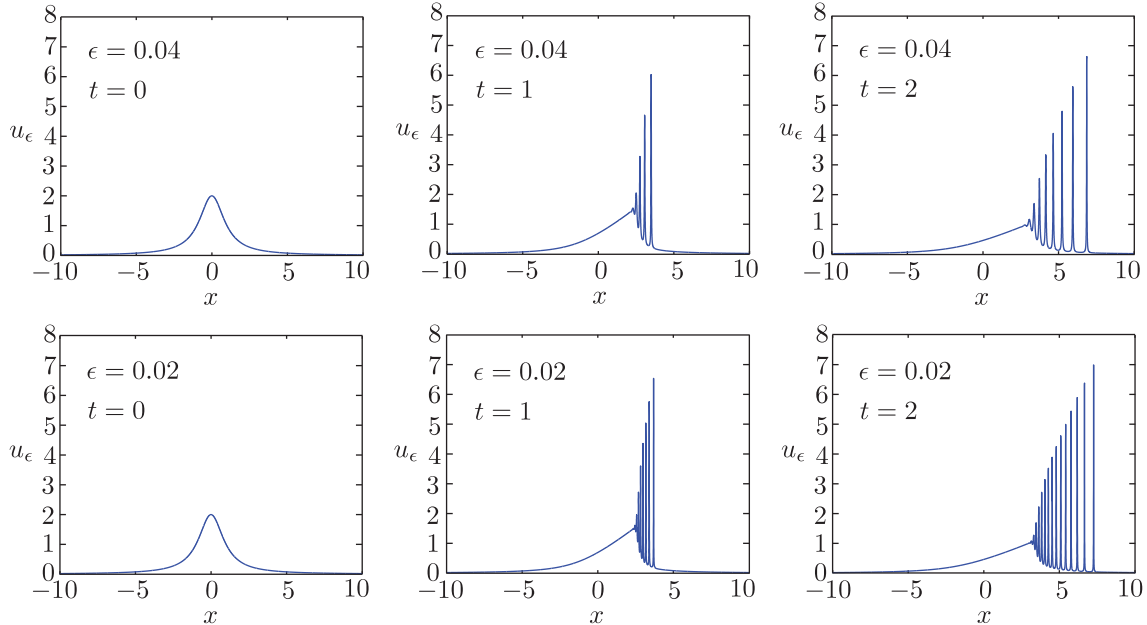}
\end{center}
\caption{\emph{The evolution of a pulse under the BO
    equation.  Top row: $\epsilon=0.04$.  Bottom row: $\epsilon=0.02$.  In
    both cases the initial condition is the same:
    $u_0(x)=2(1+x^2)^{-1}$.}}
\label{fig:numerics1}
\end{figure}
These figures clearly show that the mathematical description of the
undular bore consists of waves of amplitude independent of $\epsilon$
and wavelength approximately proportional to $\epsilon$.  We refer to
the asymptotic analysis of the solution of the Cauchy problem with
$\epsilon$-independent initial data $u_0(x)$ in the limit
$\epsilon\downarrow 0$ as the \emph{zero-dispersion limit}.

\subsection{A related problem and its history}
A more famous nonlinear dispersive wave equation is the Korteweg-de Vries
(KdV) equation
\begin{equation}
\frac{\partial v_\epsilon}{\partial t} + 2v_\epsilon
\frac{\partial v_\epsilon}{\partial x} + 
\frac{1}{3}\epsilon^2 \frac{\partial^3v_\epsilon}{\partial x^3}=0,
\quad x\in\mathbb{R},\quad
t>0,
\label{eq:KdV}
\end{equation}
a model for long surface waves on shallow water among a wide variety
of other physical phenomena.  When $\epsilon>0$ is small, this
equation displays qualitatively similar behavior to that just
illustrated for the BO equation: the dispersive term arrests
the shock in the $\epsilon=0$ equation with the formation of a train
of waves of amplitude approximately independent of $\epsilon$ and
wavelength proportional to $\epsilon$.

The modeling of the zero-dispersion limit for the KdV
equation has a long history going back to the work of Whitham
\cite{Whitham65} who used the method of averaging to propose a
nonlinear hyperbolic system of three partial differential equations to
describe the modulational variables (\emph{e.g.} slowly-varying
amplitude, mean, and wavelength of the wavetrain).  Whitham noted that
the system of modulation equations he obtained had the nongeneric
property that by choice of special dependent variables $v^1$, $v^2$, and
$v^3$, it could be written in so-called Riemann invariant form,
in which the three equations are only coupled through the characteristic
velocities:
\begin{equation}
\frac{\partial v^i}{\partial t} + c^i(v^1,v^2,v^3)
\frac{\partial v^i}{\partial x}=0,\quad i=1,2,3.
\end{equation}
Later, Gurevich and Pitaevskii \cite{Gurevich 1974} considered the problem
of patching together solutions of Whitham's modulational system with
solutions of the formal limiting equation (obtained by setting $\epsilon=0$
in \eqref{eq:KdV}) at two moving
boundary points that delineate the oscillation zone; their goal was to
provide a reasonable global approximation scheme for the solution of 
the initial-value problem for the KdV equation \eqref{eq:KdV}
in the zero-dispersion limit
subject to given initial data $v_\epsilon(x,0)=v_0(x)$ independent of $\epsilon$.

In the meantime, it was of course discovered that the KdV 
equation is a completely integrable system, posessing a compatible structure
now called a \emph{Lax pair} and a coincident solution procedure for addressing
the Cauchy (initial-value) problem: the \emph{inverse-scattering transform}.
This development suggested that the methodology invented by Whitham could
perhaps be placed on completely rigorous mathematical footing.  After
the exact periodic (and quasiperiodic) solutions of the KdV
equation \eqref{eq:KdV} were given a spectral interpretation by 
Its and Matveev \cite{ItsM75} and Dubrovin, Matveev, and Novikov
\cite{DubrovinMN76}, the Whitham modulation equations themselves were
reinterpreted within the framework of integrability by Flaschka, Forest,
and McLaughlin \cite{FlaschkaFM80}.  (In particular this work made clear the
reason why Whitham's equations could be placed in Riemann invariant form;
it is a consequence of integrability.)  

The task that remained in the use of integrable machinery to study the
zero-dispersion limit of the KdV equation was to
rigorously analyze the Cauchy problem using the inverse-scattering
transform.  The first step in this program was taken by Lax and
Levermore \cite{Lax 1983-1} who considered positive initial data
$v_0(x)$ rapidly decaying to zero for large $|x|$.  They used WKB
methods to argue that the Schr\"odinger operator with potential
$-v_0(x)$ that arises in the scattering theory is approximately
reflectionless in the limit $\epsilon\downarrow 0$.  On an
\emph{ad-hoc} basis they replaced the true scattering data by its WKB
analogue, retaining only contributions from a set of $N(\epsilon)\sim
\epsilon^{-1}$ discrete eigenvalues which are approximated by a Bohr-Sommerfeld
quantization rule, which amounts to replacing the solution $v_\epsilon(x,t)$
of the Cauchy problem with another solution $\tilde{v}_\epsilon(x,t)$ 
of \eqref{eq:KdV} having $\epsilon$-dependent initial data close to $v_0$.  
In this situation, the inverse-scattering procedure
reduces to finite-dimensional (of dimension $N(\epsilon)$) linear algebra,
and in fact the solution obtained by Cramer's rule can be reduced to the
determinantal formula
\begin{equation}
\tilde{v}_\epsilon(x,t)=
2\epsilon^2\frac{\partial^2}{\partial x^2}\log(\tau(x,t)),\quad
\tau(x,t)=\det(\mathbb{I}+\mat{G}(x,t)),
\end{equation}
where $\mat{G}(x,t)$ is a positive-definite real symmetric matrix of
dimension $N(\epsilon)\times N(\epsilon)$.  Lax and Levermore then
established the existence of the limit of $2\epsilon^2\log(\tau(x,t))$
as $\epsilon\downarrow 0$, uniformly on compact subsets of the
$(x,t)$-plane.  This yields weak convergence of
$\tilde{v}_\epsilon(x,t)$ by differentiation of the limit function with 
respect to $x$.
The Lax-Levermore method is to expand the determinant $\tau(x,t)$ in
principal minors indexed by subsets of the set of eigenvalues; noting
that each term is positive they showed that the sum of terms is
asymptotically dominated by its largest term, and they further
approximated this discrete optimization problem with an
$\epsilon$-independent (limiting) convex variational problem,
explicitly parametrized by $x$ and $t$, for measures.  The weak
zero-dispersion limit of the Cauchy problem for the KdV
equation is therefore encoded implicitly in the solution of this
variational problem.  The Lax-Levermore method reproduces the
specified initial data $v_0(x)$ at $t=0$ as $\epsilon\downarrow 0$,
which establishes validity, in a certain sense, of the WKB-based
spectral approximation procedure in the first step.  

Later, Venakides \cite{Venakides90} was able to extend the method of
Lax and Levermore to higher order, capturing the form of the
oscillations that are averaged out in the weak limit.  This work at
last made clear that the solution of the Cauchy problem for the
KdV equation with smooth, $\epsilon$-independent initial
data $v_0(x)$ really does generate after some fixed breaking time 
a train of high-frequency waves of
exactly the kind originally considered without complete justification
by Whitham.  More recently, the powerful steepest-descent method 
for matrix Riemann-Hilbert problems developed by Deift and Zhou
was used to analyze the zero-dispersion limit for the KdV
equation \cite{Deift 1997}.  This technique is best viewed as a tool for
converting weak asymptotics (the solution of the Lax-Levermore
variational problem) into strong asymptotics (an improvement of the
Venakides asymptotics in which the phase of the waveform is accurate
to very high order).

\subsection{The zero-dispersion limit of the Benjamin-Ono equation}
It turns out that the BO equation \eqref{BOE} is also an
integrable equation, in the sense that it has a representation as the
compatibility condition of an overdetermined Lax pair of linear
problems \cite{Bock 1979}.  
In fact, both BO and KdV equations
may be viewed as limiting cases (as depth of a fluid layer tends to
infinity and zero, respectively) of the so-called intermediate
long-wave equation \cite{KodamaAS82}, itself an integrable system for
arbitrary layer depth.  However, the integrable structure of the BO
equation is markedly different from that of the KdV equation.
In particular, the nonlocality in the equation due to the presence of
the Hilbert transform is mirrored in a certain nonlocality of the 
scattering and inverse-scattering problems.  In place of the spectral
theory of the selfadjoint Schr\"odinger (Sturm-Liouville) differential operator
one has to work with the spectral theory of the nonlocal
operator 
\begin{equation}
\mathcal{L}:=-i\epsilon\frac{\partial}{\partial x} - \mathcal{C}_+
\circ \ueps \circ \mathcal{C}_+,\quad
\mathcal{C}_+[f](x):=\lim_{\delta\downarrow 0}\frac{1}{2\pi i}
\int_\mathbb{R}\frac{f(y)}
{y-x-i\delta}\,dy.
\label{eq:Loperator}
\end{equation}
Here, the operator $\mathcal{C}_+$ is the selfadjoint orthogonal
projection from $L^2(\mathbb{R})$ onto the Hardy space of the upper
half-plane, the Hilbert space on which $\mathcal{L}$ is selfadjoint,
and $\ueps$ denotes the operator of multiplication by $\ueps(\cdot,t)$.

Certainly a key step forward in the theory of the zero-dispersion
limit was taken by Dobrokhotov and Krichever \cite{Dobrokhotov 1991}
who noted that the second (time evolution) equation in the Lax pair
for the BO equation (see \eqref{Laxpair2} below) is simply a
time-dependent Schr\"odinger equation whose potential is a function
with an analytic continuation from the real $x$-axis into the upper
half-plane and were able to adapt a pre-existing construction of
``integrable'' potentials for this equation to the appropriate
Hardy-space setting.  This allowed them to construct, from the Lax
pair, a large family of periodic traveling wave solutions of the BO
equation \eqref{BOE}, along with quasiperiodic generalizations.
Remarkably, unlike the corresponding exact solutions of the KdV
equation \eqref{eq:KdV} which are highly transcendental objects
constructed from Riemann theta functions of hyperelliptic curves of
arbitrary genus, the periodic and quasiperiodic solutions of the BO
equation turn out to be simple rational functions of $P$ exponential
phases $e^{i(k_jx-\omega_jt)/\epsilon}$.  In the same paper,
Dobrokhotov and Krichever also carried out for the BO equation the
analogue of the calculation of Flaschka, Forest, and McLaughlin
\cite{FlaschkaFM80}, deriving by multiphase averaging a system of
equations governing the modulational variables for a slowly-varying
train of $P$-phase waves.  Here we arrive at a second remarkable fact:
not only can the modulation equations be written in Riemann invariant
form, they are completely diagonal:
\begin{equation}
\frac{\partial u^i}{\partial t} + 2u^i\frac{\partial u^i}{\partial x}=0,
\quad i=1,\dots,2P+1
\end{equation}
(the case of $P=1$ corresponds to simple traveling waves).  This again should
be contrasted with the situation for the KdV equation in which
the characteristic velocities not only provide coupling among the fields but
also are transcendental functions of the fields written in terms of ratios of
complete hyperelliptic integrals.

The analogue for the BO equation of the matching procedure developed
by Gurevich and Pitaevskii \cite{Gurevich 1974} to describe the evolution of
a dispersive shock was independently described by Matsuno 
\cite{Matsuno 1998-1,Matsuno 1998-2} and by Jorge, Minzoni, and Smyth 
\cite{Jorge 1999}.  This matching procedure provides a reasonable approach
to the Cauchy problem for the Benjamin-Ono equation \eqref{BOE} with fixed
initial data $\ueps(x,0)=u_0(x)$ when $\epsilon\ll 1$, but it is based on formal
asymptotics.  In \cite{Matsuno 1998-1}, Matsuno writes:
\begin{quote}
From a rigorously mathematical point of view, however, the various results
presented in this paper should be justified on the basis of an exact method
of solution such as [the inverse-scattering transform], or an analog of
the Lax-Levermore theory for the KdV equation.
\end{quote}
It is our intention in this paper to provide exactly such a justification,
by developing a new method that does for the BO Cauchy problem exactly what
the Lax-Levermore method does for the KdV Cauchy problem.

The main result of our analysis is remarkably easy to state, but first we need 
to recall some basic facts concerning the equation obtained from
\eqref{BOE} simply by setting $\epsilon=0$.  Recall
that while for general sufficiently smooth
initial data $u^\mathrm{B}(x,0)=u_0(x)$ 
the inviscid Burgers equation
\begin{equation}
\frac{\partial u^\mathrm{B}}{\partial t}+2u^\mathrm{B}
\frac{\partial u^\mathrm{B}}{\partial x}=0
\label{eq:inviscidBurgers}
\end{equation}
does not have a global solution as a function due to
gradient catastrophe (shock formation) in finite time, it does have a
global solution as a real multi-sheeted surface over the
$(x,t)$-plane; indeed this is the construction of the method of
characteristics.  The sheets of this surface are obtained as the real
solutions of the implicit equation 
\begin{equation}
u^\mathrm{B}=u_0(x-2u^\mathrm{B}t),
\label{eq:implicitBurgersintro}
\end{equation} 
and by implicit differentiation it
is easy to verify that away from singularities 
each sheet of the surface is a function
$u^\mathrm{B}=u^\mathrm{B}(x,t)$ that satisfies
\eqref{eq:inviscidBurgers}.  
A simple consequence of the Implicit Function Theorem is that
for sufficiently small $|t|$ there is a
unique solution of \eqref{eq:implicitBurgersintro}
for all $x\in\mathbb{R}$. 
New sheets of the multivalued solution are born from \emph{breaking points}
in the $(x,t)$-plane 
that are in one-to-one correspondence with generic inflection points $\xi$ of
$u_0$ for which $u_0'(\xi)\neq 0$ but $u_0''(\xi)=0$.  
If $\xi\in\mathbb{R}$ is such a point, then
the corresponding breaking point is given by 
\begin{equation}
(x_\xi,t_\xi):=\left(\xi-\frac{u_0(\xi)}{u_0'(\xi)},-\frac{1}{2u_0'(\xi)}\right).
\end{equation}
Each such breaking point is the location of 
a pitchfork bifurcation for $u^\mathrm{B}$
with respect to $t$ holding $x-2u_0(\xi)t=\xi$ fixed, with two new branches
emerging as $|t|$ increases.
Thus, assuming that $u_0'$ is a bounded function of total integral zero,
the solution of the Cauchy problem for \eqref{eq:inviscidBurgers}
is classical for 
\begin{equation}
T_-:=-\frac{1}{2\max_{x\in\mathbb{R}}u_0'(x)}<t<-\frac{1}{2\min_{x\in\mathbb{R}}u_0'(x)}=:T_+.
\end{equation}
Note that under our assumptions on $u_0'$ we have $T_-<0<T_+$.  Also, 
$T_-$ is the supremum of all $t_\xi<0$ while $T_+$ is the infimum of all 
$t_\xi>0$.  When we consider the Cauchy problem for $t>0$, we will refer
to $T:=T_+$ as the \emph{breaking time}.

For $t/t_\xi> 1$ there are \emph{caustic curves}
$x^-_\xi(t)<x^+_\xi(t)$ with limiting values as $t\to t_\xi$ given 
by $x^-_\xi(t_\xi)=x^+_\xi(t_\xi)=x_\xi$ 
that bound the triply-folded region emerging from $(x_\xi,t_\xi)$.
The caustic curves correspond to double roots of \eqref{eq:implicitBurgersintro},
and crossing one of them at a generic point results in a change in the
number of sheets by exactly two.  Except along the union of the caustic
curves and the breaking points from which they emerge, the number of
solutions of \eqref{eq:implicitBurgersintro} is always odd, and all are
simple roots.
See Figure~\ref{fig:CausticsAnnotated}.
\begin{figure}[h]
\begin{center}
\includegraphics{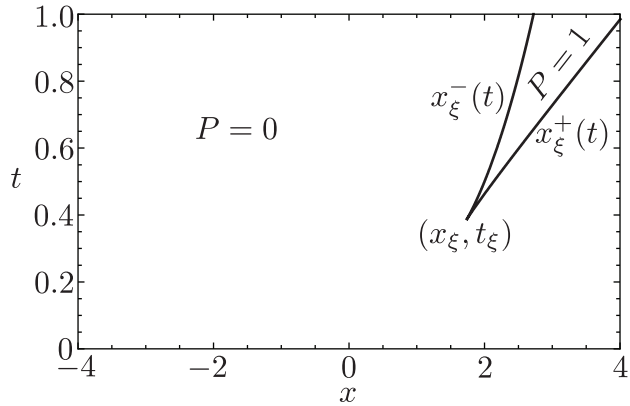}
\end{center}
\caption{\emph{Except along the caustic curves $x=x^-_\xi(t)$ and $x=x^+_\xi(t)$
the number of solutions of \eqref{eq:implicitBurgersintro} is of the
form $2P+1$, and these solutions are simple roots.
For this figure, $u_0(x):=2(1+x^2)^{-1}$.}}
\label{fig:CausticsAnnotated}
\end{figure}

For the initial data
$u_0(x)=2(1+x^2)^{-1}$ used in Figure~\ref{fig:numerics1}, the
breaking time before which there is a unique solution for all
$x\in\mathbb{R}$ and after which there is an expanding interval in
which there are three solutions, is exactly $T=2\sqrt{3}/9\approx
0.3849$.  Snapshots of the evolution of the multivalued solution of
\eqref{eq:inviscidBurgers} for this initial data are shown in
Figure~\ref{fig:numerics2}.
\begin{figure}[h]
\begin{center}
\includegraphics{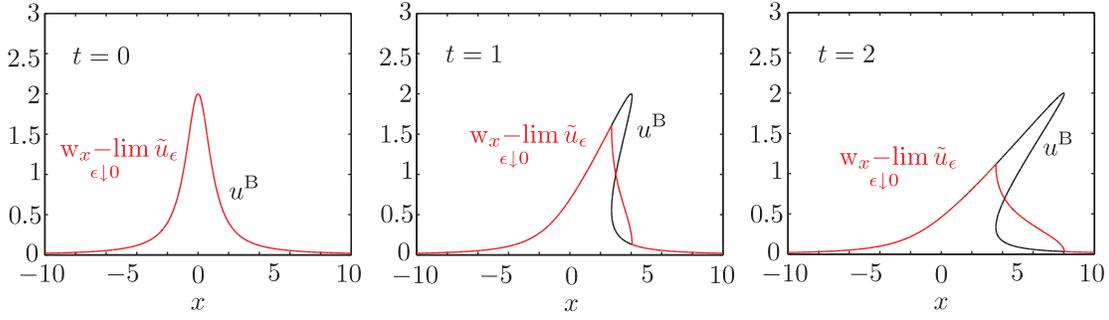}
\end{center}
\caption{\emph{The multivalued solution (black) of \eqref{eq:inviscidBurgers}
and the signed sum of branches (red) corresponding to $u_0(x)=2(1+x^2)^{-1}$.
Left:  $t=0$.  Middle:  $t=1$.  Right:  $t=2$.  Before the breaking time
as well as afterwards but outside the oscillation interval there is 
only 
one solution branch and hence no difference between the red and black curves.}}
\label{fig:numerics2}
\end{figure}
Our result is then the following.
\begin{theorem}
\label{MainTheorem} Let $u^{\mathrm{B}}_{0}(x,t) <
u^{\mathrm{B}}_{1}(x,t)< \cdots < u^{\mathrm{B}}_{2P(x,t)}(x,t)$ be the
branches of the multivalued (method of characteristics) solution of
the inviscid Burgers equation \eqref{eq:inviscidBurgers} subject to an
admissible initial condition $u^\mathrm{B}(x,0) = u_{0}(x)$. Then, the 
weak $L^2(\mathbb{R})$ (in $x$) limit of $\ut(x,t)$ is given by
\begin{equation}
\label{eq:MainResult}
\mathop{\mathrm{w}_x\mathrm{-lim}}_{\epsilon \downarrow
0}\ut(x,t)=\sum_{n=0}^{2P(x,t)}(-1)^{n}u^{\mathrm{B}}_{n}(x,t),
\end{equation}
uniformly for $t$ in arbitrary bounded intervals.  Note that the right-hand
side extends by continuity to the caustic curves.
\end{theorem}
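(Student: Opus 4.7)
The plan is to develop a Benjamin-Ono analogue of the Lax-Levermore method, with the nonlocal Hardy-space operator $\mathcal{L}$ in \eqref{eq:Loperator} playing the role of the Schr\"odinger operator. For positive, rapidly decaying $u_0$, I expect the spectrum of $\mathcal{L}$ used to construct $\ut$ to consist of $N(\epsilon)\sim\epsilon^{-1}$ discrete eigenvalues $\lambda_k(\epsilon)$ determined by a Bohr-Sommerfeld phase-integral rule involving only $u_0$, with $\ut$ itself arising from inverse scattering applied to purely discrete (reflectionless) data. Using the Dobrokhotov-Krichever observation that BO multi-phase solutions are rational functions of the exponentials $e^{i(k_jx-\omega_jt)/\epsilon}$, I would derive a tau-function representation
\[
\ut(x,t) \;=\; -2\epsilon\,\mathrm{Im}\,\frac{\partial}{\partial x}\log\det\!\bigl(\mathbb{I}+\mat{A}(x,t)\bigr),
\]
where $\mat{A}(x,t)$ is an explicit $N(\epsilon)\times N(\epsilon)$ matrix whose entries are built from the $\lambda_k(\epsilon)$, WKB-determined norming constants $\gamma_k(\epsilon)$, and the exponentials $e^{i(2\lambda_k x - 2\lambda_k^2 t)/\epsilon}$.

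Second, following the combinatorial core of the Lax-Levermore strategy, I would expand $\det(\mathbb{I}+\mat{A})$ as a sum of principal minors indexed by subsets $S\subset\{1,\dots,N(\epsilon)\}$ and identify the subsets that dominate as $\epsilon\downarrow 0$. Reinterpreting the characteristic function of $S$ as a Borel measure $\mu_\epsilon:=\epsilon\sum_{k\in S}\delta_{\lambda_k(\epsilon)}$ on $\mathbb{R}_+$, I expect the discrete optimization to pass in the continuum limit to a constrained variational problem, parametrized by $(x,t)$: extremize a functional of the form
\[
E[\mu;x,t] \;=\; \int F(\lambda;x,t)\,d\mu(\lambda) \;+\; \iint K(\lambda,\lambda')\,d\mu(\lambda)\,d\mu(\lambda')
\]
over measures satisfying the upper constraint $0\le\mu\le\mu^{\mathrm{WKB}}$ imposed by the Bohr-Sommerfeld density of states of $\mathcal{L}$. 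Here $F$ encodes the single-particle phase and $K$ comes from the pairwise logarithmic interaction in the expansion of $\log\det$.

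Third, I would solve this variational problem by complementary slackness: equality of the variational derivative to a constant on the band $\{0<\mu_\star<\mu^{\mathrm{WKB}}\}$, with the opposite inequalities on the void $\{\mu_\star=0\}$ and the saturated region $\{\mu_\star=\mu^{\mathrm{WKB}}\}$. Evaluating the Euler-Lagrange condition at a band endpoint $\beta$ should reproduce precisely the implicit characteristic equation \eqref{eq:implicitBurgersintro} with $u^{\mathrm{B}}$ replaced by $\beta$; consequently the band endpoints $\beta_0(x,t)<\cdots<\beta_{2P(x,t)}(x,t)$ of $\mu_\star$ should be identified with the multivalued Burgers branches $u_n^{\mathrm{B}}(x,t)$, with the number of bands $P(x,t)+1$ matching the sheet structure described in Figure~\ref{fig:CausticsAnnotated}. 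The weak limit of $\ut$ is then obtained by differentiating the optimal value of $E$ with respect to $x$; the boundary contributions from each band endpoint produce precisely the alternating sum on the right-hand side of \eqref{eq:MainResult}. Consistency at $t=0$ is the degenerate case $\mu_\star=\mu^{\mathrm{WKB}}$ with $P=0$, yielding $u_0^{\mathrm{B}}(x,0)=u_0(x)$; continuity across each caustic curve $x=x_\xi^\pm(t)$ follows because two adjacent band endpoints coalesce there and cancel in pairs in the alternating sum while $P(x,t)$ jumps by one.

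The principal obstacle is the asymptotic analysis in the second step. In the KdV setting the terms in the Lax-Levermore expansion are positive, so the log of the sum is easily controlled by the log of the largest term. For BO, the tau function is complex and each subset $S$ contributes a phase $e^{i\Phi_S(x,t)/\epsilon}$, so the monotone estimate is unavailable and one must replace it by a complex steepest-descent or subharmonic argument establishing that $\log\det$ is asymptotic to the maximum of the real part of the exponent over $S$, uniformly on compact subsets of the $(x,t)$-plane. A closely related difficulty is proving existence, uniqueness, and regularity of $\mu_\star$ and of its band endpoints $\beta_n(x,t)$ as $(x,t)$ crosses the critical locus where $P(x,t)$ changes. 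The remaining technical ingredients---rigorous WKB/Bohr-Sommerfeld analysis for the nonlocal operator $\mathcal{L}$ on Hardy space, and the derivation of the precise determinantal form of $\mat{A}(x,t)$---should be substantial but essentially tractable extensions of existing scattering-theoretic techniques.
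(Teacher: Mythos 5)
Your proposal is a literal transplant of the Lax–Levermore program: Bohr--Sommerfeld approximation of the spectrum, expansion of the reflectionless determinant in principal minors, identification of the dominant subset, passage to a constrained variational problem for a measure on the spectral axis, and extraction of the weak limit by differentiating the optimal functional value. You correctly flag the central obstacle — the minors of $\mathbb{I}+i\epsilon^{-1}\tilde{\mat{A}}_\epsilon$ are complex with rapidly oscillating phases, so ``largest term dominates'' is unavailable — but you leave its resolution as a placeholder (``a complex steepest-descent or subharmonic argument''). That placeholder is precisely where the proposal fails, and the paper takes a completely different route to avoid it.

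The paper explicitly abandons the principal-minor expansion at the outset of the inverse-problem analysis, for two reasons. First, the complex phases make the minor sum impossible to control by its largest term. Second, and more fundamentally, $\ut = 2\epsilon\,\partial_x\Im\log\tilde\tau_\epsilon$, so what is needed is the \emph{phase} of the determinant, not its modulus; the Lax--Levermore maximization naturally estimates $\log|\tilde\tau_\epsilon|$, not $\arg\tilde\tau_\epsilon$. Instead, the paper factors the determinant as a \emph{product} over the real eigenvalues $\alpha_n$ of the Hermitian matrix $\tilde{\mat{A}}_\epsilon$, giving $\Ut(x,t)=\epsilon\sum_n 2\arctan(\epsilon^{-1}\alpha_n)$, and then analyzes the normalized counting measure of the $\alpha_n$ by the method of moments in the spirit of Wigner. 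The moments $\tfrac{M}{N}\operatorname{tr}\tilde{\mat{A}}_\epsilon^p$ are computed asymptotically by splitting $\tilde{\mat{A}}_\epsilon=\mat{D}+\mat{H}$ and exploiting the near-Toeplitz structure of $\mat{H}$; the resulting moments determine an \emph{explicit} absolutely continuous limit measure $\mu$ (no variational problem, no band endpoints, no complementary slackness), and $U(x,t)=\int\pi\operatorname{sgn}(\alpha)\,d\mu(\alpha)$. Differentiating this explicit double integral in $x$ produces the alternating sum of Burgers branches via an elementary Leibniz computation. None of your second and third steps — minor expansion, variational problem, Euler--Lagrange/band-endpoint analysis — appears in the paper, and the authors state explicitly that the measure is specified directly rather than as the solution of a variational problem.

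Two further discrepancies worth noting. You posit a WKB/Bohr--Sommerfeld derivation of the eigenvalue density and norming constants for the nonlocal operator $\mathcal{L}$; the paper instead uses Matsuno's trace-formula / moment method for the density $F(\lambda)$, precisely because the nonlocality of $\mathcal{L}$ obstructs a direct WKB analysis. And the phase constants $\gamma_n$ are not derived by WKB at all in the paper — the formula $\gamma(\lambda)=-\tfrac12(x_+(\lambda)+x_-(\lambda))$ is introduced as a new ansatz designed to reproduce $u_0$ at $t=0$, and its justification is a posteriori via the strong $L^2$ convergence of Corollary~\ref{corr:strong}. As written, your program is not a proof: the complex-phase obstacle is acknowledged but not overcome, and the proposed scaffolding (variational problem, band structure) is speculative and would need to be constructed from scratch for BO, whereas the paper's product/moment-method route circumvents the obstacle entirely.
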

The signed sum of branches that is the weak limit is illustrated with
red curves in Figure~\ref{fig:numerics2} for the same initial data as
in Figure~\ref{fig:numerics1}.  Of course convergence in the weak
$L^2(\mathbb{R})$ (in $x$) topology means that for every 
$v\in L^2(\mathbb{R})$, we have
\begin{equation}
\lim_{\epsilon\downarrow 0}\int_{\mathbb{R}}\ut(x,t)v(x)\,dx = 
\int_{\mathbb{R}}
\left[\sum_{n=0}^{2P(x,t)}(-1)^{n}u^{\mathrm{B}}_{n}(x,t)\right]v(x)\,dx
\end{equation}
with the limit being uniform with respect to $t$ in arbitrary bounded
intervals.  Thus, the weak limit essentially smooths out the rapid
oscillations seen in Figure~\ref{fig:numerics1} and (if we think of
$v$ as the indicator function of a mesoscale interval) represents a
kind of local average in $x$.  What it means for an initial condition
to be admissible will be explained later (see
Definition~\ref{def:admissible}).  Here $\ut(x,t)$ is not
exactly the solution of the Cauchy problem for the BO equation
\eqref{BOE} with fixed initial data $u_0(x)$, but it is for every
$\epsilon>0$ an exact solution of \eqref{BOE} that satisfies an
$\epsilon$-dependent initial condition that converges (in the strong
$L^2$ sense, see Corollary~\ref{corr:strong} below) to $u_0(\cdot)$ as
$\epsilon\downarrow 0$.  See Definition~\ref{def:modified} for more
details.  This modification of the initial data is an analogue of the
replacement of the true scattering data by its reflectionless WKB
approximation in the Lax-Levermore theory.  

For $t$ before the breaking time $T$ for Burgers' equation, the weak limit
guaranteed by Theorem~\ref{MainTheorem} may be strengthened as follows.
\begin{corollary}
Suppose that $0\le t<T$, so that $P(x,t)=0$ for all
$x\in\mathbb{R}$ (that is, the solution $u^\mathrm{B}=u^\mathrm{B}_0(x,t)$ 
of Burgers' equation with initial data $u_0(x)$ is
classical).  Then 
\begin{equation}
\lim_{\epsilon\downarrow 0}\ut(x,t) = u^\mathrm{B}_0(x,t)
\end{equation}
with the limit being in the (strong) $L^2(\mathbb{R}_x)$ topology.
\label{corr:strong}
\end{corollary}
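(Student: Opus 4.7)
The plan is to upgrade the weak $L^2$ convergence already provided by Theorem \ref{MainTheorem} to strong convergence via the classical Hilbert-space criterion: a sequence $f_\epsilon \rightharpoonup f$ in $L^2(\mathbb{R})$ converges strongly if and only if $\|f_\epsilon\|_{L^2} \to \|f\|_{L^2}$. Under the hypothesis $0\le t<T$ we have $P(x,t)\equiv 0$, so the right-hand side of \eqref{eq:MainResult} reduces to $u^{\mathrm{B}}_0(x,t)$ and Theorem \ref{MainTheorem} already yields the weak convergence $\ut(\cdot,t)\rightharpoonup u^{\mathrm{B}}_0(\cdot,t)$. The only remaining task is therefore the convergence of $L^2$ norms.

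The key ingredient is conservation of $\int u^2\,dx$ on both sides of the comparison. On the BO side, this is a classical conserved quantity of \eqref{BOE}: multiplying \eqref{BOE} by $\ueps$ and integrating, the nonlinear term contributes $\tfrac{2}{3}\partial_x(\ueps^3)$, which integrates to zero, while the dispersive term gives $\epsilon\int \ueps\,\mathcal{H}[\ueps_{xx}]\,dx$, which vanishes by skew-adjointness of $\mathcal{H}$ on $L^2(\mathbb{R})$ combined with $\mathcal{H}\partial_x=\partial_x\mathcal{H}$ (so that this integral equals its own negative). Since $\ut$ is by construction an exact solution of \eqref{BOE}, we obtain $\|\ut(\cdot,t)\|_{L^2}=\|\ut(\cdot,0)\|_{L^2}$ for all $t$. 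On the Burgers side, as long as the solution remains classical on $[0,T)$, \eqref{eq:inviscidBurgers} multiplied by $u^{\mathrm{B}}$ gives $\partial_t((u^{\mathrm{B}})^2)+\partial_x(\tfrac{4}{3}(u^{\mathrm{B}})^3)=0$, so $\|u^{\mathrm{B}}_0(\cdot,t)\|_{L^2}=\|u_0\|_{L^2}$ for all $t\in[0,T)$.

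Combining these two conservation identities with the strong $L^2$ convergence $\ut(\cdot,0)\to u_0$ built into the modified initial data (Definition \ref{def:modified}), one obtains the chain
\[
\|\ut(\cdot,t)\|_{L^2}=\|\ut(\cdot,0)\|_{L^2}\longrightarrow \|u_0\|_{L^2}=\|u^{\mathrm{B}}_0(\cdot,t)\|_{L^2},
\]
and together with the weak convergence from Theorem \ref{MainTheorem} this yields the desired strong $L^2$ convergence.

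The main point to pin down carefully is that the formal $L^2$ conservation law for BO is rigorously valid for the specific solutions $\ut$; this hinges on the smoothness and decay of the WKB-type modified initial data produced by Definition \ref{def:modified}, and on the resulting solution staying in a function space where the integration-by-parts manipulations are justified. One must also verify that the construction of Definition \ref{def:modified} genuinely delivers $\|\ut(\cdot,0)\|_{L^2}\to\|u_0\|_{L^2}$, either from an explicit spectral trace formula for the approximation or from a direct strong $L^2$ estimate on the difference $\ut(\cdot,0)-u_0$; apart from such bookkeeping, no deeper machinery beyond Theorem \ref{MainTheorem} and the two conservation laws is needed.
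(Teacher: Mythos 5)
Your argument is essentially the paper's argument: both prove strong convergence by showing the three integrals $\int\ut^2$, $\int(u_0^{\mathrm{B}})^2$, and $\int\ut\,u_0^{\mathrm{B}}$ all converge to $\int u_0^2$, and the weak--plus--norm Hilbert-space criterion you invoke is just a repackaging of the identity
\begin{equation}
\|\ut(\cdot,t)-u_0^{\mathrm{B}}(\cdot,t)\|_2^2=\|\ut(\cdot,t)\|_2^2+\|u_0^{\mathrm{B}}(\cdot,t)\|_2^2-2\int_{\mathbb{R}}\ut(x,t)u_0^{\mathrm{B}}(x,t)\,dx
\end{equation}
that the paper expands directly (following Lax--Levermore, part II, Thm.~4.5). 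The one point where your write-up is loose and, as stated, circular is the phrase ``strong $L^2$ convergence $\ut(\cdot,0)\to u_0$ built into the modified initial data.'' Definition~\ref{def:modified} does \emph{not} provide this; the paper explicitly says that convergence of $\ut(\cdot,0)$ to $u_0$ is itself a \emph{consequence} of Corollary~\ref{corr:strong}, so you cannot use it as an input. You do acknowledge at the end that $\|\ut(\cdot,0)\|_2\to\|u_0\|_2$ needs a separate verification, and you name the right tool (the spectral trace formula), so the gap is flagged rather than papered over. But note that once you appeal to the trace formula, the detour through $t=0$ and BO $L^2$-conservation is unnecessary: because the trace formula \eqref{eq:equivalentintegrals} with $k=2$ expresses $\int\ut(x,t)^2\,dx=-4\pi\epsilon\sum_n\lamt_n$ purely in terms of the $t$-independent approximate eigenvalues, it gives $\|\ut(\cdot,t)\|_2^2\to\int u_0^2\,dx$ directly and uniformly in $t$, with no PDE conservation argument and none of the associated regularity bookkeeping for the solutions $\ut$. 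This is the cleaner route the paper takes, and it also handles the norm convergence on the Burgers side in exactly the way you describe (classical solutions conserve $\int(u^{\mathrm{B}})^2\,dx$ on $[0,T)$).
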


It should be pointed out that the weak limit formula \eqref{eq:MainResult}
is much more explicit than the corresponding formula found by Lax and Levermore
\cite{Lax 1983-1} for the weak zero-dispersion limit of the Cauchy problem
for the KdV equation.  Indeed, the latter requires the solution, for each
$x$ and $t$, of a constrained functional variational problem, which can be 
solved in closed form only for the simplest initial data.  As we will now
see, there are several classical wave propagation problems whose asymptotic
behavior can be reduced to the multivalued solution of Burgers' equation;
however, even these simple problems involve more complicated schemes for
combining the solution branches than that exhibited in the simple formula
\eqref{eq:MainResult}.

After we introduce the necessary framework for our study (the inverse-scattering
transform for the BO equation) in 
\S\ref{sec:background}, we will analyze the direct scattering map in the zero-dispersion limit in \S\ref{sec:direct}.
Then we will prove Theorem~\ref{MainTheorem} in \S\ref{sec:inverse} by 
carrying out a detailed analysis of the inverse scattering map applied
to the asymptotic formulae for scattering data obtained in \S\ref{sec:direct}.
In \S\ref{sec:corollary} we prove Corollary~\ref{corr:strong}, and then
in \S\ref{sec:numerics} we will illustrate our results with numerical 
calculations and address the relation between $\ut(x,t)$ and the
true solution $\ueps(x,t)$ of the Cauchy problem for the BO equation 
with initial
data $u_0$.  Some comments about our continuing work can be found in 
the conclusion, \S\ref{sec:conclusion}.  But before we embark on our
study of the BO equation, we pause to consider some familiar analogues
of Theorem~\ref{MainTheorem}.

\subsection{Elementary examples}
The key role played in the zero-dispersion limit of the BO Cauchy problem
by the multivalued solution of the equation \eqref{eq:inviscidBurgers}
with the same initial
data is reminiscent of two basic example problems from the theory of linear
and nonlinear waves.
\subsubsection{The zero-viscosity limit of the viscous Burgers equation}
The Burgers equation with viscosity $\epsilon>0$ is the nonlinear wave equation
\begin{equation}
\frac{\partial w_\epsilon}{\partial t} + 2w_\epsilon
\frac{\partial w_\epsilon}{\partial x} -
\epsilon \frac{\partial^2w_\epsilon}{\partial x^2}=0,
\quad x\in\mathbb{R},\quad t>0
\end{equation}
and we take fixed initial data $w_\epsilon(x,0)=u_0(x)$.  As is well-known, this
Cauchy problem is solved by the Cole-Hopf transformation, leading to
the exact solution formula
\begin{equation}
w_\epsilon(x,t)=\frac{1}{2t}\frac{\displaystyle\int_\mathbb{R}
e^{R(\xi;x,t)/\epsilon}(x-\xi)\,d\xi}
{\displaystyle\int_\mathbb{R}
e^{R(\xi;x,t)/\epsilon}\,d\xi},\quad t>0,
\end{equation}
where the exponent function is defined as
\begin{equation}
R(\xi;x,t):=-\int_0^\xi u_0(\eta)\,d\eta -\frac{(x-\xi)^2}{4t}.
\end{equation}
One examines the asymptotic behavior in the limit $\epsilon\downarrow
0$ by using Laplace's method to analyze the integrals (see
\cite{Miller}, \S3.6).  The dominant contributions to the integrals
come from neighborhoods of points $\xi=\xi(x,t)\in\mathbb{R}$ at which
$R(\xi;x,t)$ achieves its maximum value.  The critical points of $R$
satisfy $\xi=x-2u_0(\xi)t$.  Writing $u^{\mathrm{B}}=u_0(\xi)$ and
applying $u_0$ to both sides gives the equation
\eqref{eq:implicitBurgersintro}, so the critical points
correspond to the sheets of the multivalued solution of the (inviscid)
Burgers equation \eqref{eq:inviscidBurgers} 
with initial data $u_0$.  It is easy to check that if
$x$ and $t$ are such that there is just one sheet, then the unique
critical point is the global maximizer of $R$ and Laplace's method
gives the result that $w_\epsilon(x,t)$ converges (strongly, pointwise
in $x$ and $t$) to $u^\mathrm{B}(x,t)$.  On the other hand, if there are
$2P+1>1$ sheets, then for generic $(x,t)$ exactly one of them
corresponds to the global maximum of $R$, and Laplace's method
predicts that $w_\epsilon(x,t)$ will converge to the maximizing sheet.
Shocks appear in the small viscosity limit as curves in the
$(x,t)$-plane along which there are jump discontinuities of the
pointwise limit corresponding to sudden changes in the choice of sheet
that maximizes the exponent $R$.  To summarize, we have the formula
\begin{equation}
\lim_{\epsilon\downarrow 0}w_\epsilon(x,t)=u^\mathrm{B}_n(x,t),\quad
n=\mathop{\mathrm{argmax}}_{m=0,\dots,2P(x,t)+1} R(x-2u^\mathrm{B}_m(x,t)t;x,t)
\end{equation}
for $(x,t)$ not on a shock.

Thus, one sees that for the zero-viscosity limit of the viscous
Burgers equation, different sheets of the multivalued solution of the
formal limiting Cauchy problem (set $\epsilon=0$) provide the strong
limit of $w_\epsilon(x,t)$ for different $x$ and $t$.  However, the
choice of sheet requires the solution of a discrete maximization problem
parametrized by $x$ and $t$, making the limiting behavior harder
to calculate than the weak zero-dispersion limit of the BO equation.
\subsubsection{The semiclassical limit of the free linear Schr\"odinger equation}
In this problem, one considers the equation
\begin{equation}
i\epsilon\frac{\partial \psi_\epsilon}{\partial t} +
2\epsilon^2\frac{\partial^2\psi_\epsilon}{\partial x^2}=0,\quad 
x\in\mathbb{R},\quad
t>0
\label{eq:Schroedinger}
\end{equation}
for small $\epsilon>0$, subject to initial data of WKB form
\begin{equation}
\psi_\epsilon(x,0)=A(x)e^{iS(x)/\epsilon}
\end{equation}
with $A$ and $S$ real-valued and independent of $\epsilon$.  For suitable
$A$ and $S$, the solution to
this problem can be written as an integral
\begin{equation}
\psi_\epsilon(x,t)=
\frac{e^{-i\pi/4}}{\sqrt{8\pi\epsilon t}}\int_\mathbb{R}
e^{iI(\xi;x,t)/\epsilon}A(\xi)\,d\xi,\quad t>0,\quad I(\xi;x,t):=S(\xi)
+\frac{(x-\xi)^2}{8t}.
\end{equation}
The dominant contributions to the solution are calculated via the
method of stationary phase (see \cite{Miller}, \S5.6), and these come
from small neighborhoods of points $\xi$ satisfying $I'(\xi;x,t)=0$,
that is, solutions $\xi$ of the implicit equation $\xi=x-4S'(\xi)t$.
Evaluating the function $2S'(\cdot)$ on both sides of this equation
and making the substitution $u^\mathrm{B}=2S'(\xi)$, one arrives at
the equivalent form \eqref{eq:implicitBurgersintro} where
$u_0(x):=2S'(x)$.  Thus, the branches of the multivalued solution of
Burgers' equation \eqref{eq:inviscidBurgers} with initial condition $u_0$
correspond to stationary phase points $\xi$ that
yield the leading term of the solution $\psi_\epsilon(x,t)$ in the
semiclassical limit $\epsilon\downarrow 0$.  Unlike in the analysis of
Laplace-type integrals, where only the critical points corresponding
to maxima matter in the limit, for oscillatory integrals all
stationary phase points contribute to the leading-order behavior, and
therefore we have an asymptotic representation of $\psi_\epsilon(x,t)$
as a sum over branches $u^\mathrm{B}_n(x,t)$ of the multivalued solution of
Burgers' equation with initial data $u_0$:
\begin{equation}
\psi_\epsilon(x,t)=\sum_{n=0}^{2P(x,t)}M_n(x,t)e^{i\theta_k(x,t;\epsilon)} + O(\epsilon),\quad t>0,
\label{eq:psiSP}
\end{equation}
where $M_n(x,t)$ are slowly-varying positive amplitudes given by
\begin{equation}
M_n(x,t):=A(x-2u_n^\mathrm{B}(x,t)t)\sqrt{\left|1-2t\frac{\partial 
u_{n}^\mathrm{B}}{\partial x}(x,t)\right|},
\end{equation}
and $\theta_k(x,t;\epsilon)$ are rapidly-varying real phases given by
\begin{equation}
\theta_n(x,t;\epsilon):=\frac{1}{\epsilon}I(x-2u_n^\mathrm{B}(x,t)t;x,t) + 
\frac{\pi}{4}
\left(\mathrm{sgn}\left(1-2t\frac{\partial u_{n}^\mathrm{B}}{\partial x}(x,t)
\right)-1\right),
\end{equation}
for $n=0,\dots,2P(x,t)$.

A more explicit connection with the multivalued solution of Burgers'
equation may be obtained by introducing the quantity
\begin{equation}
w_\epsilon(x,t):=2\epsilon\frac{\partial}{\partial x}
\Im\{\log(\psi_\epsilon(x,t))\}
\end{equation}
which is the fluid velocity in Madelung's interpretation of the wave
function $\psi_\epsilon$ as describing a quantum-corrected fluid
motion.  Under the condition that the error term in \eqref{eq:psiSP}
becomes $O(1)$ after differentiation with respect to $x$, some easy
calculations show that \eqref{eq:psiSP} implies
\begin{equation}
w_\epsilon(x,t)=\Re\left\{
\frac{\displaystyle\sum_{n=0}^{2P(x,t)}u_n^B(x,t)M_n(x,t)e^{i\theta_n(x,t;\epsilon)}}
{\displaystyle\sum_{n=0}^{2P(x,t)}M_n(x,t)e^{i\theta_n(x,t;\epsilon)}}\right\}
+O(\epsilon).
\end{equation}
It is then easy to see that if $P(x,t)=0$, $w_\epsilon(x,t)$ converges
strongly pointwise to $u_0^B(x,t)$, the unique solution (for this $x$
and $t$, anyway) of Burgers' equation.  On the other hand, if
$P(x,t)>0$, then there are interference effects among the terms in the
sums and these lead to rapid oscillations with the effect that
$w_\epsilon(x,t)$ no longer converges in the pointwise sense as
$\epsilon\downarrow 0$.  However, it does converge in the weak
topology.  The weak limit may be computed by multiphase averaging,
which we illustrate in the case $P(x,t)=1$.  The procedure is to
average the leading term in $w_\epsilon(x,t)$ over an interval in 
$x$ centered at the point of interest of radius, say, $\epsilon^{p}$ for
some $p\in (0,1)$, and then pass to the limit $\epsilon\downarrow 0$.
This produces the desired local average over rapid oscillations of
wavelength or period proportional to $\epsilon$.  Under an ergodic
hypothesis that is valid on a set of full measure in the $(x,t)$-plane,
this procedure is equivalent to holding $u^\mathrm{B}_n$ and $M_n$ fixed and 
averaging (with uniform measure) over the torus of relative angles
$\phi_1:=\theta_1-\theta_0$ and $\phi_2:=\theta_2-\theta_0$.  The double
integrals can be evaluated explicitly, with the result that 
\begin{equation}
\mathop{\mathrm{w}_x\mathrm{-lim}}_{\epsilon\downarrow 0}w_\epsilon(x,t)=
\sum_{n=0}^2 c_n(x,t)u_n^\mathrm{B}(x,t),
\end{equation}
where $c_n(x,t)$, $n=0,1,2$, are nonnegative coefficients with the
property that $c_0(x,t)+c_1(x,t)+c_2(x,t)=1$.  Specifically, the
coefficients only depend on $x$ and $t$ through $M_0(x,t)$,
$M_1(x,t)$, and $M_2(x,t)$.  If any of these, say $M_n$, exceeds the
sum of the other two, then $c_n=1$ and the two other coefficients
vanish.  Thus the weak limit produces in this case exactly the branch
$u_n^B(x,t)$ through multiphase averaging\footnote{ The strict
  inequality, $M_0(x,t)>M_1(x,t)+M_2(x,t)$ or a permutation thereof,
  defining this situation is an open condition on
  $(x,t)\in\mathbb{R}^2$, and therefore (depending on initial conditions) 
there can exist open domains
  in the $(x,t)$-plane on which the weak limit of $w_\epsilon(x,t)$ is
  given by a single branch of the solution of the inviscid Burgers
  equation while $w_\epsilon(x,t)$ itself exhibits wild oscillations.  
Interestingly, this is precisely the conjecture made by von Neumann
regarding grid-scale oscillations observed in the numerical solution
of Burgers' equation via a finite-difference scheme (which may be
viewed as a dispersive regularization of the equation).  While 
many model equations for finite-difference schemes (the KdV equation
is one example) do not yield such a simple interpretation of the weak
limit \cite{Lax86}, it would seem that von Neumann's conjecture can
hold true if the Schr\"odinger equation is viewed as a dispersive
correction to Burgers' equation.}.  On the other hand, if none of the
$M_n$ exceeds the sum of the other two, then $M_0$, $M_1$, and $M_2$
are the side lengths of a triangle, and the weak limit is a genuine
weighted average of the three branches, with weights proportional to
the opposite angles:
\begin{equation}
\begin{split}
c_0&=\frac{1}{\pi}\arccos\left(\frac{M_1^2+M_2^2-M_0^2}{2M_1M_2}\right)\\
c_1&=\frac{1}{\pi}\arccos\left(\frac{M_0^2+M_2^2-M_1^2}{2M_0M_2}\right)\\
c_2&=\frac{1}{\pi}\arccos\left(\frac{M_0^2+M_1^2-M_2^2}{2M_0M_1}\right).
\end{split}
\end{equation}
The most significant aspect of this analysis is that the weak limit
depends on information other than just the initial condition $u_0$
for Burgers' equation since the functions $M_n(x,t)$ involve also the
initial wave function amplitude $A$.  This makes the evaluation of the
weak limit a more complicated procedure than in the case of the BO
equation.

A further connection between the BO equation 
\eqref{BOE} and the linear Schr\"odinger equation \eqref{eq:Schroedinger}
in the zero-dispersion limit will be made in \S\ref{sec:reflectionless}.

\section{Relevant Aspects of the Inverse Scattering Transform for the BO Cauchy Problem}
\label{sec:background}
\subsection{The Lax pair for the BO equation and its basic properties}
The Lax pair \cite{Bock 1979}, whose compatibility condition is the BO
equation \eqref{BOE}, consists of the two equations
\begin{equation}
\label{Laxpair1} i\epsilon \frac{\partial w^+}{\partial x}+\lambda (w^{+}-w^{-})=-\ueps w^{+}
\end{equation}
\begin{equation}
\label{Laxpair2} i\epsilon \frac{\partial w^\pm}{\partial t}-
2i\lambda \epsilon \frac{\partial w^\pm}{\partial x}+\epsilon^2
\frac{\partial^2 w^\pm}{\partial x^2}-2i\mathcal{C}_{\pm}
\left[\epsilon \frac{\partial \ueps}{\partial x}\right]w^{\pm}=0,
\end{equation}
where $\lambda\in\mathbb{C}$ is a spectral parameter, $\ueps=\ueps(x,t)$ is a
solution of \eqref{BOE}, and $w^\pm=w^\pm(x,t;\lambda)$ are functions that
are required to be, for each fixed $t$ and $\lambda$, the boundary values on
the real $x$-axis of functions analytic in the upper ($+$) and lower
($-$) half complex $x$-plane.  Also, $\pm \mathcal{C}_\pm =
\tfrac{1}{2}\mathbb{I}\mp\tfrac{1}{2}i\mathcal{H}$ are the orthogonal
and complementary ($\mathcal{C}_+-\mathcal{C}_-=\mathbb{I}$, the
Plemelj formula) projections from $L^2(\mathbb{R})$ onto its upper and
lower Hardy subspaces $\mathbb{H}^\pm(\mathbb{R})$.  
From the point of view of the inverse-scattering
transform, \emph{i.e.\@} using the Lax pair as a tool to solve the Cauchy 
problem, equation \eqref{Laxpair1} may
be considered for fixed time $t$ and defines the scattering data
associated with $\ueps(x,t)$ at time $t$.  The function $w^-$
may be viewed as a kind of Lagrange multiplier present to satisfy the
constraint that $w^+$ be an ``upper'' function.  In fact, if
$w^\pm\in \mathbb{H}^\pm$ then by applying $\mathcal{C}_+$ to \eqref{Laxpair1}
and using the projective identities $\mathcal{C}_+[w^+]=w^+$ and
$\mathcal{C}_+[w^-]=0$, \eqref{Laxpair1}
can be written in the form of an eigenvalue problem
\begin{equation}
\mathcal{L}w^+ = \lambda w^+,\quad w^+\in H^+(\mathbb{R}),
\end{equation}
where $\mathcal{L}$ is the nonlocal selfadjoint operator \eqref{eq:Loperator}.
Equation \eqref{Laxpair2}
determines the (trivial, as we will recall) time dependence of the
scattering data.

\subsection{Scattering data}
The theory of the inverse-scattering transform solution of the Cauchy
problem for the BO equation was first developed by Fokas and Ablowitz
\cite{Fokas 1983}.  Certain analytical details of the theory were
clarified by Coifman and Wickerhauser \cite{Coifman 1990}, and more
recently Kaup and Matsuno \cite{Kaup 1998} found conditions on the
scattering data consistent with real-valued solutions of \eqref{BOE}.
As an operator on $\mathbb{H}^+(\mathbb{R})$, the essential spectrum
of $\mathcal{L}$ is the positive real $\lambda$-axis (for suitable $\ueps$,
$\mathcal{L}$ is a relatively compact perturbation of the ``free''
operator corresponding to $\ueps\equiv 0$).  For each fixed $t$ and each
real $\lambda >0$, there exists a unique solution $w^+=M$ of
\eqref{Laxpair1} with the property that (remarkably, despite the
nonlocal nature of the problem) it is determined by its asymptotic
behavior as $x\to -\infty$ on the real line: $M(x,t;\lambda)=1+o(1)$
as $x\to -\infty$.  As the problem is nonlocal, $M$ cannot be
characterized by a Volterra-type integral equation, but Fokas and
Ablowitz \cite{Fokas 1983} gave a Fredholm-type equation whose unique
solution is $M$.  The \emph{reflection coefficient} for the problem is
then defined for positive real $\lambda$ by the formula \cite{Fokas
  1983,Kaup 1998}
\begin{equation}
\beta(\lambda,t):= \frac{i}{\epsilon}
\int_\mathbb{R}\ueps(x,t)M(x,t;\lambda)e^{-i\lambda x/\epsilon}\,dx,\quad
\lambda>0.
\end{equation}
For each fixed $x$ and $t$ the function $M(x,t;\lambda)$ can be shown
to be the boundary value taken on the positive half-line $\lambda\in
\mathbb{R}_+$ from the upper half $\lambda$-plane of a function
$W(x,t;\lambda)$ that is meromorphic in the complex $\lambda$-plane
with $\mathbb{R}_+$ (a branch cut) deleted.  Fokas and Ablowitz refer
to the boundary value taken by $W(x,t;\lambda)$ on the positive
half-line from the \emph{lower} half-plane as
$\overline{N}(x,t;\lambda)$.  The poles of $W(x,t;\lambda)$ are all on
the negative real $\lambda$-axis (by self-adjointness of
$\mathcal{L}$) and correspond to the point spectrum of $\mathcal{L}$.
It turns out that one of the consequences of the Lax pair equation
\eqref{Laxpair2} is that the point spectrum is independent of time
$t$.  In \cite{Fokas 1983} it is shown that in the generic case when
$\lambda_n<0$ is a simple pole of $W$, the first two terms in the
Laurent expansion of $W$ at $\lambda=\lambda_n$ are both proportional
to the same function $\Phi_n(x,t)\in \mathbb{H}^+(\mathbb{R}_x)$,
which is an eigenfunction of $\mathcal{L}$ with eigenvalue
$\lambda=\lambda_n$.  The ratio of these two terms is in fact linear
in $x$:
\begin{equation}
W(x,t;\lambda)=-i\epsilon\frac{\Phi_n(x,t)}{\lambda-\lambda_n} + 
(x+\alpha_n(t))\Phi_n(x,t)
+ O(\lambda-\lambda_n),\quad \lambda\to \lambda_n.
\label{eq:Laurent}
\end{equation}
Kaup and Matsuno \cite{Kaup 1998} showed that for real $u$, the
complex-valued phase shift $\alpha_n(t)$ may be written in the form
\begin{equation}
\alpha_n(t) = \gamma_n(t) -\frac{i}{2\lambda_n},\quad \gamma_n(t)\in\mathbb{R}.
\end{equation}
The set of \emph{scattering data} corresponding to the potential 
$\ueps(\cdot,t)$ then consists of:
\begin{itemize}
\item The reflection coefficient $\beta(\lambda,t)$ for $\lambda>0$.  We write
$\beta(\lambda)$ for $\beta(\lambda,0)$.
\item The negative discrete eigenvalues 
$\{\lambda_n\}_{n=1}^N$, $\lambda_1<\lambda_2<\cdots <\lambda_N<0$.
\item The real \emph{phase constants} $\{\gamma_n(t)\}_{n=1}^N$.  We write
$\gamma_n$ for $\gamma_n(0)$.
\end{itemize}

\subsection{Time dependence of the scattering data and the inverse 
scattering transform}
As time varies, one may expect the scattering data to vary, but the
time dependence as implied by \eqref{Laxpair2} turns out to be very
simple.  As pointed out above, the discrete eigenvalues
$\{\lambda_n\}_{n=1}^N$ are constants of the motion, and Fokas and
Ablowitz \cite{Fokas 1983} showed that
\begin{equation}
\beta(\lambda,t) = \beta(\lambda)e^{i\lambda^2t/\epsilon},\quad \lambda>0,
\label{eq:timevariationbeta}
\end{equation}
and
\begin{equation}
\gamma_n(t)=\gamma_n + 2\lambda_nt,\quad n=1,\dots,N.
\label{eq:timevariationgamman}
\end{equation}

The inverse-scattering procedure for solving the Cauchy problem for
the BO equation with suitable real initial data $u_0(x)$ is then to
calculate the scattering data at time $t=0$ from $u_0$, evolve the
scattering data forward in time $t$ by the explicit formulae
\eqref{eq:timevariationbeta} and \eqref{eq:timevariationgamman}, and
then solve the inverse problem of constructing $\ueps(\cdot,t)$ from the
scattering data at time $t$.  Generally, this requires solving a
scalar Riemann-Hilbert problem for $W(x,t;\lambda)$ in the complex
$\lambda$-plane.  This Riemann-Hilbert problem is quite interesting as
it involves a jump condition across the continuous spectrum
$\lambda>0$ in which the boundary value from above,
$W=M(x,t;\lambda)$, is proportional to an integral from $\lambda'=0$
to $\lambda'=\lambda$ of the boundary value from below,
$W=\overline{N}(x,t;\lambda')$.  Thus the jump condition is nonlocal,
a fact that makes the inverse problem almost completely analogous to
the direct problem \eqref{Laxpair1} which, after integration becomes a
nonlocal Riemann-Hilbert problem of exactly the same type in the
complex $x$-plane.  This fact should perhaps be contrasted with the
situation for the KdV equation where the direct and inverse problems
are of quite different natures.  This remarkable symmetry between the
forward and inverse problems for the BO equation is a theme that will
be touched upon again in this paper in some detail.
\label{sec:timedependence}

\subsection{The reflectionless inverse scattering transform}
\label{sec:reflectionless}
If $\beta(\lambda)\equiv 0$ (\emph{i.e.\@} the problem is
\emph{reflectionless}), then the boundary values taken by $W(x,t;\lambda)$
on the positive half-line agree, so $W(x,t;\lambda)$ is a meromorphic
function on the whole complex $\lambda$-plane with simple poles at the
negative real eigenvalues.  The condition \eqref{eq:Laurent} 
 and the normalization
condition that $W(x,t;\lambda)\to 1$ as $\lambda\to\infty$ then provides
sufficient information to reconstruct $W(x,t;\lambda)$ from the discrete
data $\{\lambda_n\}_{n=1}^N$ and $\{\gamma_n\}_{n=1}^N$.  Via a partial-fractions
ansatz for $W(x,t;\lambda)$, this amounts to a
solving a linear algebra problem in dimension $N$.  Once $W$ is determined
in this way, one obtains $\mathcal{C}_+[\ueps(\cdot,t)]$ by the formula
\begin{equation}
\mathcal{C}_+[\ueps(\cdot,t)](x)=\lim_{\lambda\to\infty}\lambda(1-W(x,t;\lambda)).
\end{equation}
Since $\ueps$ is real, one then has
\begin{equation}
\ueps(x,t)=2\Re\{\mathcal{C}_+[\ueps(\cdot,t)](x)\}.
\end{equation}
This procedure clearly leads to a determinantal formula for $\ueps(x,t)$
in the reflectionless case.  It turns out to be the same (multisoliton)
formula that Matsuno \cite{Matsuno 1979} had obtained, before the relevant
inverse-scattering transform was discovered, by applying 
Hirota's bilinear method to the BO equation:
\begin{equation}
\ueps(x,t)=2\epsilon\frac{\partial}{\partial x}
\Im\left\{\log(\tau_\epsilon(x,t))\right\},
\label{eq:Hirotatransform}
\end{equation}
with the ``tau-function''
\begin{equation}
\tau_\epsilon(x,t):=\det(\mathbb{I}+i\epsilon^{-1}\mat{A}_\epsilon),
\label{eq:taudet}
\end{equation}
where $\mat{A}_\epsilon=\mat{A}_\epsilon(x,t)$ is an $N\times N$ Hermitean 
matrix with constant off-diagonal
elements
\begin{equation}
A_{nm}:=\frac{2i\epsilon\sqrt{\lambda_n\lambda_m}}{\lambda_n-\lambda_m},\quad n\neq m,
\label{eq:Aoffdiag}
\end{equation}
and diagonal elements depending explicitly $x$ and $t$:
\begin{equation}
A_{nn}:=-2\lambda_n(x+2\lambda_nt+\gamma_n).
\label{eq:Adiag}
\end{equation}
In \eqref{eq:Aoffdiag} we mean the positive square root of the
positive product $\lambda_n\lambda_m$.  For the purposes of this
paper, we will only require this reflectionless version of the
inverse-scattering transform.

In his paper \cite{Matsuno 1979}, Matsuno noted that regardless of the
value of $N$, the complex determinant $\tau_\epsilon(x,t)$ satisfies the real
equation (Hirota bilinear form of the BO equation)
\begin{equation}
\left(i\epsilon\frac{\partial \tau_\epsilon}{\partial t} + 2\epsilon^2
\frac{\partial^2\tau_\epsilon}{\partial x^2}\right)\tau_\epsilon^* +
\left(-i\epsilon\frac{\partial \tau_\epsilon^*}{\partial t} + 
2\epsilon^2\frac{\partial^2\tau_\epsilon^*}{\partial x^2}\right)\tau_\epsilon = 
\epsilon\frac{\partial}{\partial x}
\left(\epsilon\frac{\partial \tau_\epsilon}{\partial x}\tau_\epsilon^* +
\epsilon\frac{\partial \tau_\epsilon^*}{\partial x}\tau_\epsilon\right).
\label{eq:Hirota}
\end{equation}
The terms on the left-hand side should be compared with the linear
Schr\"odinger equation \eqref{eq:Schroedinger}.  If one makes a formal
WKB ansatz of the form $\tau_\epsilon(x,t)=A(x,t)e^{iS(x,t)/\epsilon}$, then 
the terms on the right-hand side of \eqref{eq:Hirota} are formally small
compared with those on the left-hand side, and to leading
order in $\epsilon$ \eqref{eq:Hirota} simply reduces to 
the inviscid Burgers equation \eqref{eq:inviscidBurgers}
with $u^\mathrm{B}=2\partial S/\partial x$ 
(as is consistent with \eqref{eq:Hirotatransform}).  

\subsection{Conservation laws and trace formulae}
As with the KdV equation, the time evolution of BO equation preserves
an infinite number of functionals of $u$.  These were first found by
Nakamura \cite{Nakamura 1979}.  The equivalent representation of these
functionals in terms of the time-independent portion of the scattering
data, \emph{i.e.\@} the eigenvalues $\{\lambda_n\}_{n=1}^N$ and the
modulus of the reflection coefficient $|\beta(\lambda)|^2$ for $\lambda>0$, was
obtained by Kaup and Matsuno \cite{Kaup 1998}.  These identities amount
to a hierarchy of \emph{trace formulae} for the operator $\mathcal{L}$.

The conservation laws take the form $d I_k/dt=0$, $k=1,2,3,\dots$.  
The integrals $I_k$ may be generated by the following recursive procedure:
first set $\rho_1:=1$ and then define
\begin{equation}
\rho_{k+1}(x,t):=\mathcal{C}_+[\ueps(\cdot,t)\rho_k(\cdot,t)](x) 
+i\epsilon\frac{\partial \rho_k}{\partial x}(x,t),\quad k=1,2,3,\dots.
\label{eq:rhorecurrence}
\end{equation}
Then, the integrals of motion are
\begin{equation}
I_k(t):=\int_\mathbb{R}\ueps(x,t)\rho_k(x,t)\,dx,\quad k=1,2,3,\dots.
\label{eq:integralsofmotion}
\end{equation}
The equivalent spectral representation given in
\cite{Kaup 1998} is
\begin{equation}
I_k(t) = 2\pi\epsilon\sum_{n=1}^N
(-\lambda_n)^{k-1} +\frac{(-1)^k\epsilon}{2\pi}\int_0^{+\infty}
|\beta(\lambda)|^2\lambda^{k-2}\,d\lambda,\quad k=1,2,3,\dots.
\label{eq:equivalentintegrals}
\end{equation}
In view of the results presented in \S\ref{sec:timedependence}, the
latter representation makes clear the fact that $dI_k/dt=0$.

The first two conserved quantities are quite simple and in fact they
are the only ones in the
hierarchy having local densities:
\begin{equation}
I_1:=\int_\mathbb{R}\ueps(x,t)\,dx \quad\text{and}\quad
I_2:=\frac{1}{2}\int_\mathbb{R}\ueps(x,t)^2\,dx.
\label{eq:I2}
\end{equation}

\section{The Scattering Data in the Zero-Dispersion Limit}
\label{sec:direct}
In this section we consider the following problem.  Given a suitable
function $u_0(x)$ representing the initial condition for the BO
equation, we wish to determine an asymptotic approximation, valid when
$\epsilon>0$ is small, to the scattering data
$\{\beta(\lambda),\{\lambda_n\}_{n=1}^N,\{\gamma_n\}_{n=1}^N\}$
corresponding to $u_0$.  Even though $u_0$ is held fixed as $\epsilon$
tends to zero, the scattering data will depend on $\epsilon$ as this
parameter appears in the equation \eqref{Laxpair1}.  As the operator
$\mathcal{L}$ is nonlocal, we cannot rely on the WKB method as is so
useful for analysis of differential operators (for example, the
analysis of Lax and Levermore \cite{Lax 1983-1} was based on the WKB
analysis of the Schr\"odinger operator that arises in the scattering
theory for the KdV equation).
\subsection{Admissible initial conditions}
The type of initial data for the BO equation \eqref{BOE} that we 
will consider for the rest of this paper is the following.
Many of these conditions are imposed for our convenience; we make no 
claim that they are necessary.  
\begin{definition}
A function $u_0:\mathbb{R}\to \mathbb{R}$ will be called an 
\emph{admissible initial condition} if it has the following properties:

\noindent\textbf{\textit{Smoothness:}}  $u_{0}\in C^{(3)}(\mathbb{R})$.

\noindent\textbf{\textit{Positivity:}}  $u_0(x)>0$ for all $x\in\mathbb{R}$.

\noindent\textbf{\textit{Existence of a Unique Critical Point:}}  There is a 
unique point $x_0\in\mathbb{R}$ for which $u_0'(x_0)=0$ and
\begin{equation}
\label{u0 maximum point condition}u''_{0}(x_{0})< 0,
\end{equation}
making $x_0$ the global maximizer of $u_0$.

\noindent\textbf{\textit{Tail Behavior:}}
$\lim_{x\to\pm\infty}u_0(x)=0$, and 
\begin{equation}
\label{u0 infinity boundary condition} \lim_{x\to \pm
\infty}|x|^{q+1}u'_{0}(x)=C_{\pm} \quad \text{for some $q>1$,}
\end{equation}
where $C_{+}<0$ and $C_{-}>0$ are constants. These two conditions together
imply that an admissible initial condition $u_0$ also satisfies 
\begin{equation}
\label{u0 infinity boundary condition1} \lim_{x\to \pm
\infty}|x|^{q}u_{0}(x)=\mp \frac{C_{\pm}}{q}.
\end{equation}

\noindent\textbf{\textit{Inflection Points:}}  In each bounded interval 
there exist at most finitely many points $x=\xi$ at which $u_0''(\xi)=0$, 
and each is a simple inflection point:  $u_0'''(\xi)\neq 0$.
\label{def:admissible}
\end{definition}
Corresponding to an admissible initial condition $u_0$ we define a positive
constant $L$ by
\begin{equation}
L:=\mathop{\mathrm{max}}_{x\in \mathbb{R}}u_{0}(x),
\label{eq:Ldef}
\end{equation}
and we let the \emph{mass} $M$ be defined by
\begin{equation}
M:=\frac{1}{2\pi}\int_\mathbb{R}u_0(x)\,dx.
\label{eq:massdef}
\end{equation}
Note that the mass is guaranteed to be finite according to 
\eqref{u0 infinity boundary condition1} since $u_0$ is bounded.
Also, if $u_{0}$ is an admissible initial condition, we can define
\emph{turning points} $x_\pm:[-L,0)\to\mathbb{R}$ as two monotone branches of
the inverse function of $u_0$: $u_{0}(x_\pm(\lambda))=-\lambda$ and $x_{-}(\lambda)\le
x_{0}\le x_+(\lambda)$ for $-L\le \lambda<0$.  See 
Figure~\ref{small_dispersion_paper_picture1}.
\begin{figure}[h]
\begin{center}
\includegraphics{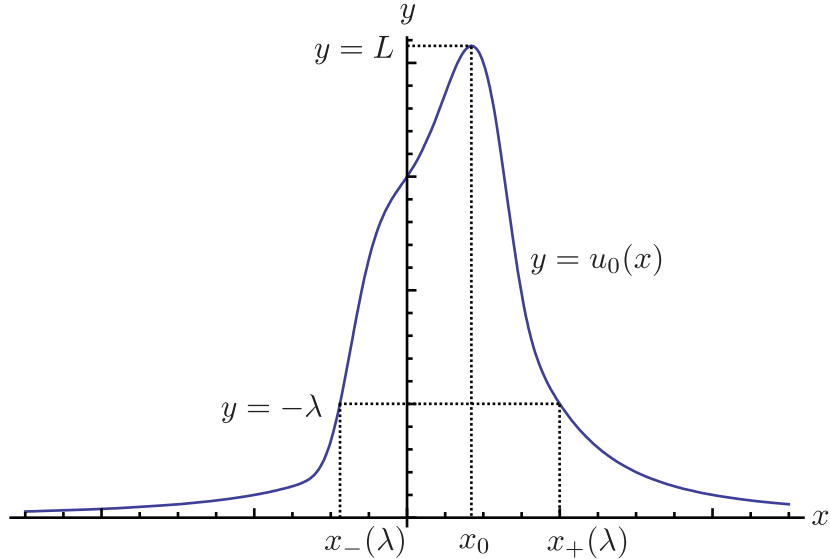}
\end{center}
\caption{\emph{The graph of an admissible
initial condition and the turning points $x_{\pm}(\lambda)$.}}
\label{small_dispersion_paper_picture1}
\end{figure}
\subsection{Matsuno's method}
\label{sec:MatsunosMethod}
In two papers \cite{Matsuno 1981, Matsuno 1982}, Matsuno proposed a
remarkable method to approximate, in the limit $\epsilon \downarrow
0$, the time-independent components of the scattering data for
suitable $u_0$.  His method was based on the conservation laws for the
quantities \eqref{eq:integralsofmotion}.  With the use of the more
recently obtained trace formulae equating $I_k$ as given by
\eqref{eq:integralsofmotion} with the equivalent formulae
\eqref{eq:equivalentintegrals} \cite{Kaup 1998}, several heuristic
aspects of the original method given in \cite{Matsuno 1981, Matsuno
  1982} can be placed on more rigorous footing.

The first key observation made in \cite{Matsuno 1981, Matsuno 1982}
is that if $u_0$ is a smooth function independent of $\epsilon$, then
by evaluating the integrals $I_k(t)$ at time $t=0$, one sees that they
have limiting values as $\epsilon\downarrow 0$.  These limits may be obtained
simply by solving the recurrence relation \eqref{eq:rhorecurrence}
with $\epsilon=0$:
\begin{equation}
\lim_{\epsilon\downarrow 0} I_k = \int_\mathbb{R}
u_0(x)\mathcal{C}_+[u_0\mathcal{C}_+[u_0\mathcal{C}_+[\cdots u_0
\mathcal{C}_+[u_0] \cdots]]](x)\,dx,\quad k=1,2,3,\dots,
\label{eq:limIkfirst}
\end{equation}
where the Cauchy projector $\mathcal{C}_+$ occurs $k-1$ times in the
integrand.  
With the use of an identity valid for reasonable complex-valued functions
$u_0(\cdot)$ and suggested by comparing the conserved quantities generated
from the Kaup-Matsuno iteration scheme \eqref{eq:rhorecurrence} with those
generated via the older scheme of Nakamura \cite{Nakamura 1979}, one sees
that the right-hand side of \eqref{eq:limIkfirst}
can be equivalently
written in the simple form
\begin{equation}
\lim_{\epsilon\downarrow 0}I_k = \frac{1}{k}\int_\mathbb{R}u_0(x)^k\,dx,
\quad k=1,2,3,\dots.
\label{eq:limIk}
\end{equation}
On the basis of heuristic physical arguments, in \cite{Matsuno 1981,
  Matsuno 1982} Matsuno supposed that for smooth \emph{positive}
initial data $u_0$, all moments of the reflection coefficient remain
bounded as $\epsilon\downarrow 0$.  Adopting this hypothesis, a comparison
of \eqref{eq:limIk} with \eqref{eq:equivalentintegrals} then shows that
\begin{equation}
\lim_{\epsilon\downarrow 0} \epsilon\sum_{n=1}^N(-\lambda_n)^{k-1} = 
\frac{1}{2\pi k}\int_\mathbb{R}u_0(x)^k\,dx, \quad k=1,2,3,\dots.
\label{eq:limitingmoments}
\end{equation}
In particular, taking $k=1$ one obtains
\begin{equation}
\lim_{\epsilon\downarrow 0}\epsilon N = M,
\end{equation}
where the mass $M$ is defined by \eqref{eq:massdef}, 
so the number of eigenvalues is asymptotically proportional to $1/\epsilon$.

These calculations suggest that the normalized counting measure of
eigenvalues may have a limit in a certain sense as $\epsilon\downarrow
0$, perhaps as an absolutely continuous measure with density
$F(\lambda)$.  Matsuno calculated this density by replacing the
left-hand side of \eqref{eq:limitingmoments} with an integral against
the unknown density $F(\lambda)$:
\begin{equation}
\int_{-\infty}^0(-\lambda)^{k-1}F(\lambda)\,d\lambda = \frac{1}{2\pi k}
\int_\mathbb{R}u_0(x)^k\,dx,\quad k=1,2,3,\dots.
\label{eq:momentrelations}
\end{equation}
The problem that remains is then the classical one 
of constructing the density $F(\lambda)$ from its moments, which are known
if the initial condition $u_0$ is given.  

Matsuno showed that, remarkably, this moment problem can be solved explicitly.
He introduced the characteristic function (Fourier transform) of $F$:
\begin{equation}
\hat{F}(\xi):=\int_{-\infty}^0 F(\lambda)e^{-i\xi\lambda}\,d\lambda,
\end{equation}
in terms of which, the moment relations \eqref{eq:momentrelations} become
\begin{equation}
\frac{d^{k-1}\hat{F}}{d\xi^{k-1}}(0)=\frac{i^{k-1}}{2\pi k}\int_\mathbb{R}
u_0(x)^k\,dx,\quad k=1,2,3,\dots.
\end{equation}
Recalling the constants $L$ and $M$ defined by \eqref{eq:Ldef} and
\eqref{eq:massdef} respectively, it is easy to obtain the estimate
\begin{equation}
\left|\frac{1}{(k-1)!}\frac{d^{k-1}\hat{F}}{d\xi^{k-1}}(0)\right|\le
\frac{ML^{k-1}}{k!}
\end{equation}
from which it follows that $\hat{F}(\xi)$ is an entire function and hence
is equal to its Taylor series about $\xi=0$:
\begin{equation}
\begin{split}
\hat{F}(\xi) &= \sum_{k=1}^\infty\frac{1}{(k-1)!}\frac{d^{k-1}\hat{F}}
{d\xi^{k-1}}(0)\xi^{k-1}\\
& = \sum_{k=1}^\infty \frac{(i\xi)^{k-1}}{2\pi k!}
\int_\mathbb{R}u_0(x)^k\,dx\\
&=\frac{1}{2\pi i\xi}\sum_{k=1}^\infty\int_\mathbb{R}
\frac{[i\xi u_0(x)]^k}{k!}\,dx.
\end{split}
\end{equation}
The combined sum and integral is absolutely convergent for all
$\xi\in\mathbb{C}$, so the order of operations may be reversed:
\begin{equation}
\hat{F}(\xi) = \frac{1}{2\pi i\xi}\int_\mathbb{R}
\left(\sum_{k=1}^{\infty}\frac{[i\xi u_0(x)]^k}{k!}\right)\,dx = 
\frac{1}{\pi \xi}\int_\mathbb{R}
e^{i\xi u_0(x)/2}\sin\left(\tfrac{1}{2}\xi u_0(x)\right)
\,dx.
\end{equation}
By Fourier inversion, 
\begin{equation}
F(\lambda) = \frac{1}{2\pi}\lim_{R\uparrow \infty}\int_{-R}^{+R}\hat{F}(\xi)e^{i\xi\lambda}\,
d\xi = \lim_{R\uparrow\infty}\int_{-R}^{+R}
\int_\mathbb{R}\frac{e^{i\xi(\lambda+u_0(x)/2)}}{2\pi^2\xi}\sin
\left(\tfrac{1}{2}\xi u_0(x)\right)
\,dx\,d\xi.
\end{equation}
Applying Fubini's Theorem to reverse the order of integration and
then passing to the limit $R\uparrow\infty$, the integral over $\xi$
can be evaluated as the indicator function of an interval:
\begin{equation}
F(\lambda)=\frac{1}{2\pi}\int_\mathbb{R}\chi_{[-u_0(x),0]}(\lambda)\,
dx. 
\end{equation}
This formula shows that $F(\lambda)\equiv 0$ for $\lambda>0$ or
$\lambda<-L$.  By a ``layer-cake'' argument we may simplify this
formula for $\lambda\in (-L,0)$ as
\begin{equation}
F(\lambda)= \frac{1}{2\pi}\int_{\{x\in\mathbb{R},\;u_0(x)>-\lambda\}}dx,
\quad -L<\lambda<0.
\label{eq:MatsunoBS}
\end{equation}
This is Matsuno's remarkable result.  We have presented Matsuno's
method in some detail because it turns out that a key calculation in
our analysis of the \emph{inverse} problem in the zero-dispersion
limit reduces to almost the same steps, as we will see shortly.  This
is worth emphasizing because it provides further evidence that for the
BO equation, scattering and inverse-scattering are mathematically very
similar operations.

Matsuno's formula \eqref{eq:MatsunoBS} could perhaps be compared with
the Bohr-Sommerfeld formula that gives the density of eigenvalues of
the Schr\"odinger operator in the zero-dispersion theory of the KdV
equation \cite{Lax 1983-1}; aside from a constant factor the Bohr-Sommerfeld
formula replaces the unit integrand in \eqref{eq:MatsunoBS} with the
positive square root $\sqrt{u_0(x)+\lambda}$.

While quite severe hypotheses on $u_0$ are required for all of the arguments
to go through, the formula \eqref{eq:MatsunoBS} makes sense under much
weaker conditions.  In particular, we may interpret \eqref{eq:MatsunoBS}
for an admissible initial condition, in which case we may express $F(\lambda)$
directly in terms of the turning points $x_\pm(\lambda)$:
\begin{equation}
F(\lambda):=\frac{1}{2\pi}\left(x_+(\lambda)-x_-(\lambda)\right),\quad
-L\le \lambda<0.
\label{eq:Fadmissibledef}
\end{equation}
We take \eqref{eq:Fadmissibledef} as a definition valid for admissible
initial conditions $u_0$.  Note that
\begin{equation}
\int_{-L}^0F(\lambda)\,d\lambda = M,
\label{eq:intF}
\end{equation}
where the mass $M$ is defined by \eqref{eq:massdef}.

\subsection{Formula for phase constants}
The WKB methods recalled by Lax and Levermore \cite{Lax 1983-1} to
analyze the Schr\"odinger equation in the forward problem for the
zero-dispersion limit of the KdV equation were sufficiently powerful
to provide asymptotic formulae for both the discrete spectrum (the
Bohr-Sommerfeld formula that is the analogue in the KdV theory of the
function $F(\lambda)$ obtained by Matsuno) and also for the ``norming
constants'' that in the KdV theory are the analogues of the phase
constants $\{\gamma_n\}_{n=1}^N$ in the BO theory.  However, we have
not found a way to apply these methods to the nonlocal operator
$\mathcal{L}$, and unfortunately Matsuno's method does not provide
approximations of the phase constants $\{\gamma_n\}_{n=1}^N$ since
they do not enter into the trace formulae.  

Our contribution to the theory of the spectral analysis of the nonlocal
operator $\mathcal{L}$ in the zero-dispersion limit is to provide a new
asymptotic formula for the phase constants.  It is difficult to motivate
the formula as it arises from the analysis of the inverse problem that
we will describe in the next section, but it is nonetheless quite easy
to present.  If $\lambda<0$ is an eigenvalue of $\mathcal{L}$ with potential
$u$ given by an admissible initial condition $u_0$, then our approximation
to the corresponding phase constant is given in terms of the
turning points $x_\pm(\lambda)$ as follows:
\begin{equation}
\label{formula for gamma in terms of x}
\gamma\approx \gamma(\lambda):=
-\frac{1}{2}(x_{+}(\lambda)+x_{-}(\lambda)),
\quad -L\le k<0.
\end{equation}
\begin{remark}
  Our choice of $\gamma(\lambda)$ in terms of $u_0$ is specifically
  designed to ensure the convergence of $\ut(x,t)$ (to be
  defined precisely in Definition~\ref{def:modified} below) at $t=0$
  to the given $\epsilon$-independent initial condition $u_0$.
\end{remark}
\subsection{Modification of the  Cauchy data}
Based on the above considerations, we may now make very precise
definitions of formal (not rigorously justified) approximations
of the scattering data corresponding to an admissible condition
$u_0$.  The first approximation is to neglect the reflection coefficient
by setting
\begin{equation}
\tilde{\beta}(\lambda):=0,\quad\lambda>0.
\end{equation}
Next we define the exact number of approximate eigenvalues (hopefully also
the approximate number of exact eigenvalues) by setting
\begin{equation}
N(\epsilon) := \left\lfloor \frac{M}{\epsilon}\right
\rfloor,
\label{eq:Numberofeigs}
\end{equation}
which in particular implies that
\begin{equation}
\lim_{\epsilon\downarrow 0}\epsilon N(\epsilon) = M.
\label{eq:epsilonNepsilon}
\end{equation}
Then we define approximations to the eigenvalues themselves as an
ordered set of numbers
$\{\lamt_n\}_{n=1}^{N(\epsilon)}\subset(-L,0)$ obtained by
quantizing the Matsuno eigenvalue density given by
\eqref{eq:Fadmissibledef}:
\begin{equation}
\label{value of eigenvalue given by a fomula}
\int_{-L}^{\lamt_{n}}F(\lambda)\,d\lambda=\epsilon \left(n-\frac{1}{2}
\right),\quad
n=1,2,\cdots N(\epsilon).
\end{equation}
Finally, we define approximations to the corresponding phase constants
as numbers $\{\gamt_n\}_{n=1}^{N(\epsilon)}$ given precisely by
\begin{equation}
\gamt_n:=\gamma(\lamt_n),\quad n=1,\dots,N(\epsilon).
\label{eq:gammandef}
\end{equation}
where $\gamma(\cdot)$ is defined by \eqref{formula for gamma in terms of x}.

Now in our analysis of the Cauchy problem for the BO equation with
admissible initial data $u_0$ we take a sideways step that is not
\emph{a priori} justified:  we simply replace the true solution $\ueps(x,t)$
of the Cauchy problem with a family $\ut(x,t)$ of exact solutions
of the BO equation \eqref{BOE} with the property that for each $\epsilon>0$
the scattering data for $\ut(x,t)$ at time $t=0$ is \emph{exactly}
the approximate scattering data just defined.  This step was also an important
part of the method of Lax and Levermore \cite{Lax 1983-1}.  We formalize
this modification of the initial data in the following definition.

\begin{definition}
\label{def:modified}
Let $u_0$ be an admissible initial condition.  Then, by $\ut(x,t)$
we mean the exact solution of the BO equation \eqref{BOE} given for
each $\epsilon>0$ by the reflectionless inverse-scattering formula 
\begin{equation}
\ut(x,t):=2\epsilon\frac{\partial}{\partial x}
\Im\{\log(\tilde{\tau}_\epsilon(x,t))\},
\end{equation}
where
\begin{equation}
\tilde{\tau}_\epsilon(x,t):=\det\left(\mathbb{I}+i\epsilon^{-1}\tilde{\mat{A}}_\epsilon\right)
\end{equation}
and where $\tilde{\mat{A}}_\epsilon=\tilde{\mat{A}}_\epsilon(x,t)$ is an
$N(\epsilon)\times N(\epsilon)$ Hermitean matrix with elements
\begin{equation}
\tilde{A}_{nm}:=\frac{2i\epsilon\sqrt{\lamt_n\lamt_m}}{\lamt_n-\lamt_m},\quad
n\neq m
\end{equation}
and
\begin{equation}
\tilde{A}_{nn}:=-2\lamt_n(x+2\lamt_nt+\gamt_n)=
-2\lamt_n(x+2\lamt_n+\gamma(\lamt_n)).
\end{equation}
Here
the number $N(\epsilon)$ is defined by \eqref{eq:Numberofeigs} and
the components of the scattering data $\{\lamt_n\}_{n=1}^{N(\epsilon)}$ and 
$\{\gamt_n\}_{n=1}^{N(\epsilon)}$ are given explicitly by 
\eqref{value of eigenvalue given by a fomula} and \eqref{eq:gammandef} 
respectively.
\end{definition}

While it is not the case that $\ut(x,0)=u_0(x)$ in general, the
relevance of this definition in connection with the Cauchy problem
with initial condition $u_0$ is a consequence of
Corollary~\ref{corr:strong} which guarantees convergence in the mean square 
sense of $\ut(\cdot,0)$ to $u_0(\cdot)$ as
$\epsilon\downarrow 0$.  

The proof of Theorem~\ref{MainTheorem} will be given below in
\S\ref{sec:inverse}.  Before embarking on that we note that
Definition~\ref{def:admissible} implies a number
of properties of the functions $F$ and $\gamma$ 
that will be useful later, so we take the opportunity to record these here.  
Note that $F$ and $\gamma$ will frequently occur in the context
of the following functions:
\begin{equation}
D(\lambda;x,t):=-2\lambda(x+2\lambda t+\gamma(\lambda)),\quad -L<\lambda<0,
\label{eq:Dfuncdef}
\end{equation}
and 
\begin{equation}
\varphi(\lambda):=\sqrt{-\lambda F(\lambda)},\quad -L<\lambda<0.
\label{eq:varphidef}
\end{equation}
\begin{lemma}
\label{Lemma for function F}
Let $u_0$ be an admissible initial condition with decay exponent $q>1$,
and let $F:[-L,0)\to\mathbb{R}$ be defined by \eqref{eq:Fadmissibledef}
and $\gamma:[-L,0)\to\mathbb{R}$ be defined by 
\eqref{formula for gamma in terms of x}.  Then $F$ and $\gamma$ both belong
to $C^{(1)}(-L,0)$ and $F$ and $F'$ are strictly positive on this open interval.
Also, there exists a sufficiently small constant $\delta>0$ and positive
constants $C_{-L}$ and $C_0$ such that
\begin{equation}
\label{Lemma for function F equation 2}
\frac{1}{2}C_{-L}\sqrt{L+\lambda}<F(\lambda)<C_{-L}\sqrt{L+\lambda},
\end{equation}
and
\begin{equation}
\label{Lemma for function F equation 1}
\frac{1}{4}\frac{C_{-L}}{\sqrt{L+\lambda}}<F'(\lambda)<
\frac{1}{2}\frac{C_{-L}}{\sqrt{L+\lambda}}
\end{equation}
both hold for $-L<\lambda<-L+\delta$, while 
\begin{equation}
\label{Lemma for function F equation 4}
\frac{1}{2}C_{0}(-\lambda)^{-1/q}<F(\lambda)<C_{0}(-\lambda)^{-1/q},
\end{equation}
and 
\begin{equation}
\label{Lemma for function F equation 3}
\frac{1}{2}\frac{C_{0}}{q}(-\lambda)^{-1/q-1}<F'(\lambda)<
\frac{C_{0}}{q}(-\lambda)^{-1/q-1}
\end{equation}
both hold for $-\delta<\lambda<0$.  
Also, 
\begin{equation}
|\gamma(\lambda)+x_0|\le \pi F(\lambda)\quad\text{and}\quad
|\gamma'(\lambda)|\le \pi F'(\lambda),\quad -L\le\lambda <0,
\label{eq:gammaFbounds}
\end{equation}
inequalities that when combined with 
\eqref{Lemma for function F equation 2}--\eqref{Lemma for function F equation 3}
imply obvious upper bounds for $|\gamma(\lambda)+x_0|$
and $|\gamma'(\lambda)|$.

In particular, these estimates
show that $F(\lambda)$ is integrable, and $\varphi(\lambda)$ and
$D(\lambda;x,t)$ (and hence also $\lambda x_\pm(\lambda)$) 
are bounded, and that with $\sigma=\min(\tfrac{1}{2},1-\tfrac{1}{q})\in (0,1)$,
$\varphi(\cdot)$ is H\"older continuous with exponent $\sigma/2$ while
$D(\cdot;x,t)$ is H\"older continuous with exponent $\sigma$ uniformly
for $(x,t)$ in compact sets, on $(-L,0)$.
\end{lemma}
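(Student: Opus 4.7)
My plan is to deduce everything from the smoothness and monotonicity properties of the two inverse branches $x_\pm:[-L,0)\to\mathbb{R}$ defined by $u_0(x_\pm(\lambda))=-\lambda$ and $x_-(\lambda)\le x_0\le x_+(\lambda)$. For $\lambda\in(-L,0)$ we have $x_\pm(\lambda)\neq x_0$, so $u_0'(x_\pm(\lambda))\neq 0$ by the unique-critical-point hypothesis; the implicit function theorem then yields $x_\pm\in C^{(2)}(-L,0)$ with $x_\pm'(\lambda)=-1/u_0'(x_\pm(\lambda))$. Since $u_0'(x_+)<0$ and $u_0'(x_-)>0$, we have $x_+'>0$ and $x_-'<0$, so $F,\gamma\in C^{(1)}(-L,0)$ with $F,F'>0$. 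The inequalities \eqref{eq:gammaFbounds} are immediate from $x_+-x_0\ge 0\ge x_--x_0$ (and the analogous sign relations for derivatives): writing $|\gamma(\lambda)+x_0|=\tfrac12|(x_+-x_0)+(x_--x_0)|$ and $\pi F(\lambda)=\tfrac12[(x_+-x_0)-(x_--x_0)]$, the triangle inequality gives $|\gamma+x_0|\le\pi F$, and the same argument applied to $\gamma'$ and $F'$ gives $|\gamma'|\le\pi F'$.

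For the endpoint asymptotics near $\lambda=-L$, I apply Taylor's theorem to $u_0\in C^{(3)}$ at its nondegenerate maximum $x_0$: with $\alpha:=-u_0''(x_0)/2>0$,
\begin{equation*}
L-u_0(x)=\alpha(x-x_0)^2+O(|x-x_0|^3),\qquad u_0'(x)=-2\alpha(x-x_0)+O((x-x_0)^2).
\end{equation*}
Inverting using $u_0(x_\pm(\lambda))=-\lambda$ and $x_\pm'(\lambda)=-1/u_0'(x_\pm(\lambda))$ yields $(x_\pm-x_0)^2=(L+\lambda)/\alpha\cdot(1+O(\sqrt{L+\lambda}))$ together with a parallel expansion for $x_\pm'(\lambda)$, from which \eqref{Lemma for function F equation 2} and \eqref{Lemma for function F equation 1} follow once $\delta>0$ is chosen small enough for the error factors to lie in $(1/2,2)$. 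For $\lambda$ near $0$, the tail hypotheses \eqref{u0 infinity boundary condition} and \eqref{u0 infinity boundary condition1} give $|x_\pm|^q u_0(x_\pm)\to \mp C_\pm/q$ and $|x_\pm|^{q+1}|u_0'(x_\pm)|\to |C_\pm|$, and substitution of $u_0(x_\pm)=-\lambda$ and the formula for $x_\pm'$ produces \eqref{Lemma for function F equation 4} and \eqref{Lemma for function F equation 3} in exactly the same fashion.

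All remaining consequences are straightforward from these bounds. Integrability of $F$ on $(-L,0)$ follows because the singularities of type $(L+\lambda)^{1/2}$ and $(-\lambda)^{-1/q}$ (with $q>1$) are both integrable. The function $\varphi(\lambda)=\sqrt{-\lambda F(\lambda)}$ behaves like a multiple of $(L+\lambda)^{1/4}$ near $-L$ and like $(-\lambda)^{(1-1/q)/2}$ near $0$, which yields both boundedness and, via the elementary estimate $|a^r-b^r|\le|a-b|^r$ for $0<r\le 1$, H\"older continuity with exponent $\min(1/4,(1-1/q)/2)=\sigma/2$. Writing $D(\lambda;x,t)=-2\lambda(x-x_0)-4\lambda^2 t-2\lambda(\gamma(\lambda)+x_0)$, the bound $|\lambda(\gamma(\lambda)+x_0)|\le\pi(-\lambda)F(\lambda)=\pi\varphi(\lambda)^2$ shows $D$ is bounded on compact $(x,t)$ sets; the same decomposition together with $\lambda x_\pm=-\lambda\gamma\pm\pi\lambda F$ shows $\lambda x_\pm$ is bounded. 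For H\"older continuity of $D$, the first two terms are smooth polynomials in $\lambda$, while $|(\lambda\gamma)'(\lambda)|\le|\gamma(\lambda)+x_0|+|x_0|+|\lambda||\gamma'(\lambda)|\le\pi F+|x_0|+\pi(-\lambda)F'$ is integrable near each endpoint with primitive H\"older-$1/2$ near $-L$ and H\"older-$(1-1/q)$ near $0$, combining to $\sigma$ uniformly on compact $(x,t)$ sets. I expect the main technical obstacle to be the careful bookkeeping in the second paragraph: extracting sharp enough two-sided estimates from Taylor's theorem and from the tail condition so that the prescribed factor-of-two spread between upper and lower bounds in \eqref{Lemma for function F equation 2}--\eqref{Lemma for function F equation 3} can actually be achieved by choosing $\delta$ sufficiently small.
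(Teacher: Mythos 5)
Your proposal is correct and follows essentially the same route as the paper's own proof: use monotonicity and the implicit function theorem on $x_\pm$, then Taylor expand around the nondegenerate maximum to get the soft-edge asymptotics, and use the tail decay hypotheses for the hard-edge asymptotics. You fill in some details the paper leaves implicit — the triangle-inequality argument for \eqref{eq:gammaFbounds} and the derivations of the ``in particular'' consequences, which the paper asserts without proof. One small point: the step for the H\"older continuity of $\varphi$ is phrased loosely (``behaves like'' a power of the distance to each endpoint does not by itself yield a uniform H\"older estimate on the whole interval); the cleaner route is the one you implicitly use for $D$, namely to show $h(\lambda):=-\lambda F(\lambda)$ is H\"older-$\sigma$ on $(-L,0)$ by bounding $|h'|$ by an integrable power at each endpoint, and then apply $|\sqrt{a}-\sqrt{b}|\le\sqrt{|a-b|}$ once to conclude $\varphi=\sqrt{h}$ is H\"older-$\sigma/2$.
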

\begin{proof}
The turning points $x_{\pm}(\lambda)$ are clearly of class $C^{(1)}(-L,0)$,
by definition $x_+(\lambda)>x_0>x_-(\lambda)$ on this open interval,
and moreover 
$x_{ +}(\lambda)$ is strictly increasing while $x_{-}(\lambda)$ is 
strictly decreasing on
 $(-L, 0)$.   These facts immediately imply the desired
basic smoothness properties of $F$ and $\gamma$, and the positivity 
and monotonicity of $F$, as well as the inequalities \eqref{eq:gammaFbounds}.  

Since $u_{0}(x_{0})=L$ and $u_0'(x_{0})=0$, the
$C^{(2)}(\mathbb{R})$ function $u_0$ satisfies
\begin{equation}
\lim_{x\to x_{0}}\frac{u_{0}(x)-L}{(x-x_{0})^{2}}=\frac{u''_{0}(x_{0})}{2}
\quad\text{and}\quad
\lim_{x\to x_{0}}\frac{u'_{0}(x)}{x-x_{0}}=u''_{0}(x_{0}).
\end{equation}
Using these together with the inequality $u''_{0}(x_{0})<0$, the
definition of $x_\pm(\lambda)$ as branches of the inverse function of
$u_0$ shows that 
\begin{equation}
\lim_{\lambda\downarrow -L}\frac{\pm(x_{\pm}(\lambda)-x_{0})}
{\sqrt{L+\lambda}}=\sqrt{\frac{2}{-u''_{0}(x_{0})}}\quad\text{and}\quad
\lim_{\lambda\downarrow -L}\pm x'_{\pm}(\lambda)\sqrt{L+\lambda}=
\sqrt{\frac{1}{-2u''_{0}(x_{0})}}.
\end{equation}
Using these relations in \eqref{eq:Fadmissibledef} and 
\eqref{formula for gamma in terms of x} establishes the existence of the limits
\begin{equation}
\lim_{\lambda\downarrow -L}\frac{F(\lambda)}{\sqrt{L+\lambda}}=\frac{1}{\pi}\sqrt{\frac{2}{-u''_{0}(x_{0})}}\quad\text{and}\quad
\lim_{\lambda\downarrow -L}F'(\lambda)\sqrt{L+\lambda}
=\frac{1}{2\pi}\sqrt{\frac{2}{-u''_{0}(x_{0})}},
\end{equation}
which prove the two-sided estimates 
\eqref{Lemma for function F equation 2} and 
\eqref{Lemma for function F equation 1}.

Next, note that the decay conditions \eqref{u0 infinity boundary condition} 
and \eqref{u0 infinity boundary condition1} for 
$u_0$ and its derivative together imply that 
\begin{equation}
\label{Lemma for function F proof 1}
\lim_{\lambda\uparrow 0}x_{\pm}(\lambda)(-\lambda)^{\frac{1}{q}}=
\pm\left(\mp\frac{C_{\pm}}{q}\right)^{\frac{1}{q}}\quad\text{and}\quad
\lim_{\lambda\uparrow 0}x_\pm'(-\lambda)^{\tfrac{1}{q}+1} = \pm\frac{1}{q}\left(\mp\frac{C_\pm}{q}\right)^{\tfrac{1}{q}},
\end{equation}
where $\mp C_\pm$ are the positive constants 
in \eqref{u0 infinity boundary condition} 
and \eqref{u0 infinity boundary condition1}.
It follows from \eqref{eq:Fadmissibledef} that
\begin{equation}
\label{Lemma for function F proof 3}
\lim_{\lambda\uparrow 0}F(\lambda)(-\lambda)^{\frac{1}{q}}=
\frac{1}{2\pi}\left[\left(-\frac{C_{+}}{q}\right)^{\frac{1}{q}}+\left(\frac{C_{-}}{q}\right)^{\frac{1}{q}}\right],
\end{equation}
which proves \eqref{Lemma for function F equation 4}
and 
\begin{equation}
\label{Lemma for function F proof 7}
\lim_{\lambda\uparrow 0}F'(\lambda)(-\lambda)^{\frac{1}{q}+1}=
\frac{1}{2\pi q}\left[\left(-\frac{C_{+}}{q}\right)^{\frac{1}{q}}+\left(\frac{C_{-}}{q}\right)^{\frac{1}{q}}\right],
\end{equation}
which proves \eqref{Lemma for function F equation 3}.  
\end{proof}

\section{The Inverse-Scattering Problem in the Zero-Dispersion Limit}
\label{sec:inverse}
In this section, we provide the proof of Theorem~\ref{MainTheorem}.
\subsection{Basic strategy.  Outline of proof}
According to Definition~\ref{def:modified}, $\ut(x,t)$ is expressed
in terms of the determinant $\tilde{\tau}_\epsilon$ as follows:
\begin{equation}
\ut(x,t)=\frac{\partial \Ut}{\partial x}(x,t),\quad
\Ut(x,t)=2\epsilon\Im\{\log(\tilde{\tau}_\epsilon(x,t))\}.
\label{eq:U}
\end{equation}
As the logarithm of a complex-valued
quantity is involved, $\Ut(x,t)$ is only defined modulo $4\pi\epsilon$
for each $(x,t)$, and naturally one should choose the appropriate branch
for each $(x,t)$ to achieve continuity.  We do this concretely in
equation \eqref{small dispersion matrix calculation 3_1} below.

At this very early point our
analysis must take a very different path than that followed by Lax and
Levermore \cite{Lax 1983-1} in their study of the zero-dispersion
limit for the KdV equation.  Indeed, the expansion of $\tilde{\tau}_\epsilon$ 
in principal minors that is at the heart of the Lax-Levermore method would
be a poor choice in this situation.  One reason for this is simply that the
principal-minors expansion of $\tilde{\tau}_\epsilon(x,t)$ consists of
complex-valued terms of indefinite phase, so the sum cannot be easily
estimated by its largest term.  But a more important reason is that
the formula \eqref{eq:U} for $\Ut(x,t)$ involves not
$\log(\tilde{\tau}_\epsilon)$ but rather $\Im\{\log(\tilde{\tau}_\epsilon)\}$, 
that is, we require an
estimate of the \emph{phase} of the determinant and we are not
interested in its magnitude.

So instead of expanding the determinant as a sum, we write it as a product.
Let $\{\alpha_n\}_{n=1}^{N(\epsilon)}$ be the real eigenvalues
of $\tilde{\mat{A}}_\epsilon(x,t)$.  Then the corresponding eigenvalues of 
$\mathbb{I}+i\epsilon^{-1}\tilde{\mat{A}}_\epsilon(x,t)$ are 
of course $\{1+i\epsilon^{-1}\alpha_n\}_{n=1}^{N(\epsilon)}$, so we may
expand $\tilde{\tau}_\epsilon$ as a product over eigenvalues in the form:
\begin{equation}
\label{small dispersion matrix calculation 2}
\tilde{\tau}_\epsilon(x,t)=\prod_{n=1}^{N(\epsilon)}
\left(1+i\epsilon^{-1}\alpha_{n}\right).
\end{equation}
This yields a suggestive formula for $\Ut(x,t)$ in terms of the
eigenvalues of $\tilde{\mat{A}}_\epsilon$:
\begin{equation}
\label{small dispersion matrix calculation 3_1} \Ut(x,t):=
\epsilon\sum_{n=1}^{N(\epsilon)}2\arctan\left(\epsilon^{-1}\alpha_{n}\right).
\end{equation}
Here $-\pi/2<\arctan(\cdot)<\pi/2$, so in particular by this
definition we have made an unambiguous choice of the branch of the
logarithm.  This formula seems at first not to be of much use because,
unlike the principal minor determinants in the Lax-Levermore method
which can be written explicitly in terms of the matrix elements, the
eigenvalues of $\tilde{\mat{A}}_\epsilon$ are only implicitly known.  However,
numerical experiments suggest that some structure emerges in the limit
$\epsilon\downarrow 0$.  Indeed, the plots shown in
Figure~\ref{fig:Hist}
\begin{figure}[h]
\begin{center}
\includegraphics{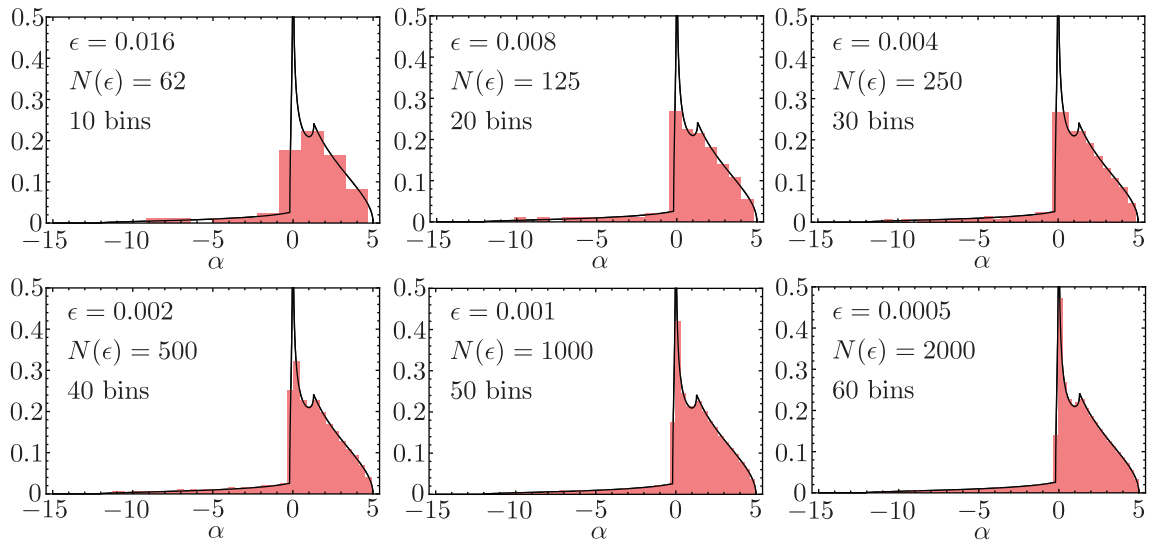}
\end{center}
\caption{\emph{Histograms of eigenvalues of $\tilde{\mathbf{A}}_\epsilon$ 
corresponding
to the initial condition $u_0(x):=2(1+x^2)^{-1}$, $x=5$, and $t=2$, 
normalized to
have total area $M=1$, compared with the density $G(\alpha;x,t)$
of the limiting absolutely continuous measure $\mu$.}}
\label{fig:Hist}
\end{figure}
provide good evidence that the normalized (to mass $M$) 
counting measures $\mu_\epsilon$
given for $\epsilon>0$ by 
\begin{equation}
\mu_{\epsilon}:=\frac{M}{N(\epsilon)}\sum_{n=1}^{N(\epsilon)}\delta_{\alpha_n},\quad
\text{$\{\alpha_n\}_{n=1}^{N(\epsilon)}$ eigenvalues of $\tilde{\mat{A}}_{\epsilon}$}
\label{eq:ncm}
\end{equation}
might converge in some sense to
a measure $\mu$ having a density $G(\alpha;x,t)$.
This convergence suggests further that the formula
\eqref{small dispersion matrix calculation 3_1}
could be interpreted as a Riemann sum, for the integral of 
$\pi\,\mathrm{sgn}(\alpha)$ 
(the pointwise limit as $\epsilon\downarrow 0$ of the summand)
against the limiting measure $\mu$.  We will prove that indeed
$\Ut(x,t)$ converges, uniformly with respect to $x$ and $t$ in
compact sets, to a limit function $U(x,t)$ given by such an integral in
the limit $\epsilon\downarrow 0$.  

To obtain an effective formula for $U(x,t)$ we need to analyze the asymptotic
behavior of the measures $\mu_\epsilon$.
This part of our analysis is modeled after the
work of Wigner \cite{Wigner55,Wigner58} 
on the statistical distribution of eigenvalues of random Hermitian
matrices with independent and identically distributed matrix elements.
Like Wigner, we use the method of moments because 
while the measures themselves are not easy to
express in terms of the matrix elements, their moments are:
\begin{equation}
\int_\mathbb{R}\alpha^p\,d\mu_\epsilon(\alpha) = 
\frac{M}{N(\epsilon)}\sum_{n=1}^{N(\epsilon)}\alpha_n^p
= \frac{M}{N(\epsilon)}\mathrm{tr}(
\tilde{\mat{A}}_\epsilon^p),\quad p=0,1,2,\dots.
\label{eq:trace}
\end{equation}
We prove the existence of the limit of the right-hand side
in equation \eqref{eq:trace} as $\epsilon\downarrow 0$ for every $p$
using the fact that for small $\epsilon$ the matrix $\tilde{\mat{A}}_\epsilon$
concentrates near the diagonal, where it can be approximated by the
product of a diagonal matrix and the Toeplitz matrix corresponding to
the symbol $f(\theta):=i(\pi-\theta)$, $0<\theta<2\pi$ (of singular
Fisher-Hartwig type due to jump discontinuities).  The result of
this asymptotic analysis of moments is the following Proposition, 
the proof of which will be given below in \S\ref{sec:moments}.
\begin{prop}
\label{asymptotic preperties for trace 3} For each nonnegative
integer $p$, 
\begin{equation}
\label{asymptotic preperties for trace 3 1}
\lim_{\epsilon\downarrow 0}\int_{\mathbb{R}}\alpha^p\,d\mu_\epsilon(\alpha)=Q_p,
\end{equation}
with the limit being uniform with respect to $(x,t)$ in any compact set,
where
\begin{equation}
Q_p:=
\frac{1}{2\pi(p+1)}\int_{-L}^0
\left[\left(x+2\lambda t-x_-(\lambda)\right)^{p+1}-
\left(x+2\lambda t -x_+(\lambda)\right)^{p+1}\right](-2\lambda)^p\,d\lambda.
\label{eq:hp}
\end{equation}
\end{prop}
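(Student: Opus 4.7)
The plan is to proceed via the method of moments, in the spirit of Wigner but adapted to the near-Toeplitz structure of $\tilde{\mat{A}}_\epsilon$. Expand
\begin{equation}
\mathrm{tr}(\tilde{\mat{A}}_\epsilon^p)=\sum_{n_1,\ldots,n_p=1}^{N(\epsilon)}\tilde{A}_{n_1n_2}\tilde{A}_{n_2n_3}\cdots\tilde{A}_{n_pn_1}.
\end{equation}
The quantization condition \eqref{value of eigenvalue given by a fomula} combined with the H\"older regularity of $F$ supplied by Lemma~\ref{Lemma for function F} yields, uniformly for $|n-n_1|,|m-n_1|\le K$, the expansions $\lamt_n-\lamt_m=(n-m)\epsilon/F(\lamt_{n_1})+o(\epsilon|n-m|)$ and $\sqrt{\lamt_n\lamt_m}=-\lamt_{n_1}+O(\epsilon K)$, so that
\begin{equation}
\tilde{A}_{nm}\approx -\frac{2i\lamt_{n_1}F(\lamt_{n_1})}{n-m}\quad (n\ne m),\qquad \tilde{A}_{nn}=D(\lamt_n;x,t).
\end{equation}
Thus near its $n_1$-th diagonal entry, $\tilde{\mat{A}}_\epsilon$ is well-approximated by the product of a slowly varying diagonal with entry $-2i\lamt_{n_1}F(\lamt_{n_1})$ and the Toeplitz matrix with symbol $f(\theta)=i(\pi-\theta)$, plus the true diagonal $\mathrm{diag}(D(\lamt_n;x,t))$.

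Introduce a cutoff $K=K(\epsilon)\to\infty$ with $K\epsilon\to 0$ and split the trace sum according to whether $\max_i|n_{i+1}-n_i|\le K$. The uniform bound $|\tilde{A}_{nm}|\le C/|n-m|$ for $n\ne m$ (from $|\lamt_n-\lamt_m|\ge c\epsilon|n-m|$ and the boundedness of $\sqrt{\lamt_n\lamt_m}F(\lambda^*)$ on the admissible range, both consequences of Lemma~\ref{Lemma for function F}), together with the uniform operator-norm bound $\|\mat{T}\|\le\pi$ for the Toeplitz factor (bounded by $\|f\|_\infty$), are what will be used to conclude that the far-from-diagonal contribution is $o(N(\epsilon))$. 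In the near-diagonal regime, setting $k_i:=n_i-n_1$ and freezing the slow variable reduces the remaining sum to the $p$-fold convolution $h^{*p}(0)$ of the sequence $h(0):=D(\lamt_{n_1};x,t)$, $h(k):=-2i\lamt_{n_1}F(\lamt_{n_1})/k$ for $k\ne 0$. By Fourier inversion this equals $(2\pi)^{-1}\int_0^{2\pi}\hat{h}(\theta)^p\,d\theta$, where the identity $\sum_{k\ne 0}e^{-ik\theta}/k=-i(\pi-\theta)$ on $(0,2\pi)$ gives $\hat{h}(\theta)=D(\lamt_{n_1};x,t)-2\lamt_{n_1}F(\lamt_{n_1})(\pi-\theta)$. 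The change of variable $v=\hat{h}(\theta)$, together with the identities $\gamma(\lambda)\pm\pi F(\lambda)=-x_\mp(\lambda)$ that are immediate from \eqref{eq:Fadmissibledef} and \eqref{formula for gamma in terms of x}, evaluates this to
\begin{equation}
h^{*p}(0)=\frac{(-2\lamt_{n_1})^p}{2\pi(p+1)F(\lamt_{n_1})}\bigl[(x+2\lamt_{n_1}t-x_-(\lamt_{n_1}))^{p+1}-(x+2\lamt_{n_1}t-x_+(\lamt_{n_1}))^{p+1}\bigr].
\end{equation}
Summing over $n_1$ and invoking the Riemann-sum convergence $\epsilon\sum_{n_1=1}^{N(\epsilon)}g(\lamt_{n_1})\to\int_{-L}^0 g(\lambda)F(\lambda)\,d\lambda$ cancels the $F(\lamt_{n_1})$ factor and produces $\epsilon^{-1}Q_p$; multiplying by $M/N(\epsilon)$ and using $\epsilon N(\epsilon)\to M$ from \eqref{eq:epsilonNepsilon} then yields the claimed limit.

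\textbf{Main obstacle.} The most delicate part is controlling the error in the near-diagonal approximation together with the far-from-diagonal contribution, because the off-diagonal entries decay only like $1/|n-m|$; naive entrywise $\ell^1$-row-sum bounds give logarithmic losses that would compound to $\log^p(1/\epsilon)$ factors through a depth-$p$ expansion. This forces one to exploit the uniform operator-norm boundedness of the Toeplitz factor with the Fisher-Hartwig jump symbol $f(\theta)=i(\pi-\theta)$, which is merely bounded (not smooth), rather than crude entrywise estimates. A secondary technical difficulty is the endpoint behaviour of $F(\lambda)$ at $\lambda=-L$ (square-root vanishing with singular derivative) and at $\lambda=0$ (algebraic decay governed by the tail exponent $q>1$); the two-sided bounds in Lemma~\ref{Lemma for function F} are precisely what is needed to accommodate these singularities, guaranteeing integrability of the integrand defining $Q_p$ and uniformity of the Riemann-sum convergence in $(x,t)$ on compact sets.
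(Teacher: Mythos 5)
Your plan is essentially the paper's argument, and the computations you sketch do reproduce $Q_p$ correctly; the Fourier-inversion/change-of-variable calculation matches after the relations $x_\pm(\lambda)=\pm\pi F(\lambda)-\gamma(\lambda)$, and the Riemann-sum step with the $F(\lamt_{n_1})$ cancellation and $\epsilon N(\epsilon)\to M$ is exactly right. The one place where your organization genuinely differs is that you fold the frozen diagonal and the off-diagonal Toeplitz piece into a single sequence $h$ and evaluate $h^{*p}(0)$ directly as $\frac{1}{2\pi}\int_0^{2\pi}\hat{h}(\theta)^p\,d\theta$; the paper instead writes $\tilde{\mat{A}}_\epsilon=\mat{D}+\mat{H}$, expands the trace as $\sum_j Z_{pj}$ by the number $j$ of $\mat{H}$-factors, proves $Z_{pj}=0$ for odd $j$ by a transposition-symmetry argument, computes each even-$j$ contribution separately via the Toeplitz identity $(\mathcal{T}_f^{j-1}f)_0=(i\pi)^j/(j+1)$, and recombines using a binomial identity. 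Your approach buys a shorter final computation (no binomial identity or odd/even split needed), at the cost of having to localize and freeze all at once rather than one $\mat{H}$-block at a time; both routes require the same Lipschitz/H\"older control on $\lamt_n$, $F$, and $D$ (Lemma~\ref{Lemma for function F} plus the spacing estimate) to justify the freezing of slow variables over a window of width $\epsilon^{-r}$ with $r<1$.

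One remark on your ``main obstacle'' paragraph: the operator-norm bound $\|\mathcal{T}_f\|\le\pi$ is indeed used in the paper, but not to control the off-diagonal contribution in the trace localization -- it is used later (Lemma~\ref{lem:tight}) to bound the eigenvalues of $\tilde{\mat{A}}_\epsilon$ uniformly and thus establish tightness of the counting measures. The localization you worry about is instead handled by the absolute summability of the multi-chain sums $\sum|f_{-n_2}|\prod|f_{n_\ell-n_{\ell+1}}||f_{n_j}|=(\mathcal{T}_g^{j-1}g)_0$, which converges because the modulus symbol $g(\theta)=-\log(2(1-\cos\theta))$ satisfies $g^j\in L^1[0,2\pi]$ (Lemma~\ref{infinite sum convergence}); this is a Parseval-type argument, not an operator-norm one, and it is exactly what avoids the $\log^p(1/\epsilon)$ losses you correctly identify as the danger with naive $\ell^1$-row bounds.
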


Given these limiting moments, the next task is to establish the existence
of a corresponding limiting measure $\mu$ with these moments, and to prove
the existence of the limit $\Ut(x,t)\to U(x,t)$.  A remarkable feature
of this analysis is that the solution of the moment problem for $\mu$
is carried out by virtually the same procedure as Matsuno used to obtain
the function $F(\lambda)$ from $u_0$ (see \S\ref{sec:MatsunosMethod}).
Our result is the following Proposition, that will be proved in all
details in \S\ref{sec:measures}.
\begin{prop}
\label{limit of sx theorem} Uniformly for $(x,t)$ in compact sets,
\begin{equation}
\label{limit of sx 1} \lim_{\epsilon \downarrow
0}\Ut(x,t)=U(x,t),
\end{equation}
where
\begin{equation}
U(x,t):=\int_\mathbb{R}\pi
\,\mathrm{sgn}(\alpha)\,d\mu(\alpha)
\label{eq:Uformula}
\end{equation}
and where $\mu$ is an absolutely continuous measure
of mass $M$ with density $G(\alpha;x,t)$, and
\begin{equation}
\label{limit of sx
2}G(\alpha;x,t):=-\frac{1}{4\pi}\int_{-L}^{0}
\chi_{[-2\lambda(x+2\lambda t-x_+(\lambda)),-2\lambda(x+2\lambda t-x_-(\lambda))]}(\alpha)\frac{d\lambda}
{\lambda}.
\end{equation}
Here, $\chi_{[a,b]}(z)$ denotes the indicator function of the interval $[a,b]$.
\end{prop}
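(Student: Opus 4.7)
The plan is to exploit Proposition~\ref{asymptotic preperties for trace 3} in two stages. First, I construct the candidate limit measure $\mu$ by solving the moment problem $\int_\mathbb{R}\alpha^p\,d\mu(\alpha)=Q_p$ and promote the moment convergence of Proposition~\ref{asymptotic preperties for trace 3} to weak convergence $\mu_\epsilon\Rightarrow\mu$. Second, I transfer this weak convergence through the integrands $2\arctan(\alpha/\epsilon)$ appearing in \eqref{small dispersion matrix calculation 3_1}, which are uniformly bounded by $\pi$ but have a jump limit at $\alpha=0$.

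\textbf{Recovering the density $G$.} For the first stage I mimic Matsuno's Fourier-based moment inversion from \S\ref{sec:MatsunosMethod}. The characteristic function $\hat\mu(\xi):=\sum_{p\ge 0}(-i\xi)^p Q_p/p!$ is entire: the estimates of Lemma~\ref{Lemma for function F} together with the boundedness of $\lambda x_\pm(\lambda)$ guaranteed there imply that the integrand defining $Q_p$ in \eqref{eq:hp} is pointwise bounded by $R^p$ times an integrable function in $\lambda$, uniformly on compact $(x,t)$-sets. Substituting \eqref{eq:hp} and collapsing the $p$-sum via $\sum_{p\ge 0}z^p/(p+1)!=(e^z-1)/z$ with $z=2i\xi\lambda(x+2\lambda t-x_\pm(\lambda))$ produces
\[
\hat\mu(\xi)=\frac{1}{4\pi i\xi}\int_{-L}^{0}\frac{e^{2i\xi\lambda(x+2\lambda t-x_-(\lambda))}-e^{2i\xi\lambda(x+2\lambda t-x_+(\lambda))}}{\lambda}\,d\lambda.
\]
Fourier-inverting and interchanging order (identifying the resulting $\xi$-integrals via $\mathrm{p.v.}\int e^{i\xi c}/\xi\,d\xi=i\pi\,\mathrm{sgn}(c)$, in exact analogy with Matsuno's passage from $\hat F$ to $F$) recovers precisely \eqref{limit of sx 2}. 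Nonnegativity of $G$ is immediate since $-1/\lambda>0$ on $(-L,0)$; a direct Fubini computation verifies $\int\alpha^p G(\alpha;x,t)\,d\alpha=Q_p$; and the total mass reduces, via the elementary identity $-(-2\lambda)(x_+(\lambda)-x_-(\lambda))/(2\lambda)=x_+(\lambda)-x_-(\lambda)$, to $\int_{-L}^0 F(\lambda)\,d\lambda=M$ by \eqref{eq:intF}.

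\textbf{From moments to weak convergence.} Since $G(\cdot;x,t)$ has bounded $\alpha$-support, the moment problem for $\mu$ is determinate. Combined with the uniform-in-$(x,t)$ moment convergence from Proposition~\ref{asymptotic preperties for trace 3} and tightness of $\{\mu_\epsilon\}$---which follows from the Chebyshev bound $\mu_\epsilon(|\alpha|>R)\le\int\alpha^2\,d\mu_\epsilon/R^2\le 2Q_2(x,t)/R^2$ for $\epsilon$ small enough, uniformly on compacts---this upgrades to weak convergence $\mu_\epsilon\Rightarrow\mu$, uniformly for $(x,t)$ in compact sets.

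\textbf{Passing to the limit; principal obstacle.} Writing $\nu_\epsilon:=\epsilon\sum_n\delta_{\alpha_n}=(\epsilon N(\epsilon)/M)\,\mu_\epsilon$, so that $\nu_\epsilon\Rightarrow\mu$ also by \eqref{eq:epsilonNepsilon}, formula \eqref{small dispersion matrix calculation 3_1} reads $\Ut(x,t)=\int 2\arctan(\alpha/\epsilon)\,d\nu_\epsilon(\alpha)$. I split the integral at $|\alpha|=\delta$: on $|\alpha|\ge\delta$ the integrand converges uniformly to $\pi\,\mathrm{sgn}(\alpha)$, so by weak convergence the outer contribution tends to $\int_{|\alpha|\ge\delta}\pi\,\mathrm{sgn}(\alpha)\,d\mu(\alpha)$; on $|\alpha|<\delta$ the contribution is bounded in absolute value by $\pi\,\nu_\epsilon([-\delta,\delta])$. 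The principal technical obstacle is showing this last quantity is $o(1)$ as $\delta\downarrow 0$ \emph{uniformly in $\epsilon$ and in $(x,t)$ on compact sets}. For the limiting measure, a direct inspection of \eqref{limit of sx 2}---controlling the length of the $\alpha$-indicator interval $[-2\lambda(x+2\lambda t-x_+(\lambda)),-2\lambda(x+2\lambda t-x_-(\lambda))]$ for small $|\lambda|$ via the tail bounds \eqref{Lemma for function F equation 4}--\eqref{Lemma for function F equation 3} on $x_\pm$---gives $\mu([-\delta,\delta])\to 0$ at a rate uniform on compacts. The Portmanteau upper-semicontinuity inequality $\limsup_{\epsilon\downarrow 0}\nu_\epsilon([-\delta,\delta])\le\mu([-\delta,\delta])$ transfers this bound to $\nu_\epsilon$, and letting first $\epsilon\downarrow 0$ and then $\delta\downarrow 0$ completes the proof of \eqref{limit of sx 1}.
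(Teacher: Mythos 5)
Your overall architecture matches the paper's: construct $\mu$ by inverting the moment problem in the style of Matsuno's argument (identical characteristic-function computation), upgrade moment convergence to weak convergence of $\mu_\epsilon$, and finally pass the discontinuous integrand $\pi\,\mathrm{sgn}(\alpha)$ through that weak convergence. But you take a genuinely different route on the middle step, and that is the interesting divergence. The paper proves a substantially stronger statement (its Lemma~\ref{lem:tight}): that the \emph{supports} of all $\mu_\epsilon$ lie in a single compact interval $\Omega$, uniformly for $(x,t)$ in compacts. Obtaining this requires showing $\|\tilde{\mat{A}}_\epsilon\|_2$ is bounded, which is done by writing $\mat{H}\approx\mat{B}\mat{T}\mat{B}$ with $\mat{T}$ a truncated Toeplitz operator and estimating the error in Hilbert--Schmidt norm via a lengthy analysis of the double integral $\iint e(\kappa,\lambda)\,d\kappa\,d\lambda$ split into six regions. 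You sidestep all of that with a one-line Chebyshev tightness bound, relying on determinacy of the (compactly supported) limit $\mu$. That is a real economy and an attractive simplification of the argument. Similarly, for the final step the paper sandwiches $2\arctan(\epsilon^{-1}\alpha)$ between two fixed continuous functions $a_\pm(E^{-1}\alpha)$ and lets $E\downarrow 0$ via dominated convergence; your split at $|\alpha|=\delta$ plus Portmanteau does the same job with comparable effort, provided you also apply Portmanteau (or the continuity-set version) to the outer piece $\int_{|\alpha|\ge\delta}\pi\,\mathrm{sgn}(\alpha)\,d\nu_\epsilon$, whose integrand is discontinuous at $\pm\delta$; your bound $\mu([-\delta,\delta])$ being \emph{independent} of $(x,t)$, since $\min(-4\pi\lambda F(\lambda),2\delta)$ does not involve $(x,t)$, is a clean observation.

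The one place you owe more is the assertion that $\mu_\epsilon\Rightarrow\mu$ \emph{uniformly} for $(x,t)$ in compact sets. The classical method-of-moments theorem (moment convergence + determinacy of the limit $\Rightarrow$ weak convergence) is a pointwise-in-parameter statement proved by subsequence extraction, and uniformity over $(x,t)$ does not fall out of it automatically. The paper achieves uniformity cleanly because its Lemma~\ref{lem:tight} gives a single compact $\Omega$ on which to run Weierstrass approximation, so the error in $\int f\,d\mu_\epsilon-\int f\,d\mu$ is controlled by a finite set of moment discrepancies plus a uniform polynomial-approximation error. In your framework, you would instead need to combine the Chebyshev tail bound with polynomial approximation on a large ball $[-R,R]$ and control $\int_{|\alpha|>R}|p(\alpha)|\,d\mu_\epsilon$ via the $(d+2)$th moment for $\deg p=d$; this can be done, and all ingredients are uniform in $(x,t)$, but it is a step that needs to be written out rather than asserted. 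With that paragraph filled in, your proof is complete and somewhat lighter than the paper's, since it avoids the Toeplitz/Hilbert--Schmidt machinery entirely.
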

The limiting measure
$\mu$ is the closest analogue in the zero-dispersion theory of the
BO equation of the equilibrium (or extremal) measure arising in the Lax-Levermore theory of the KdV equation.
But a significant difference is that in this case the measure $\mu$
is specified \emph{explicitly} rather than implicitly as the solution of a
variational problem.

The region of integration in the double integral obtained by combining
\eqref{limit of sx 2} with \eqref{eq:Uformula} is illustrated for
three different values of $(x,t)$ in Figure~\ref{fig:IntRegion}.
\begin{figure}[h]
\begin{center}
\includegraphics{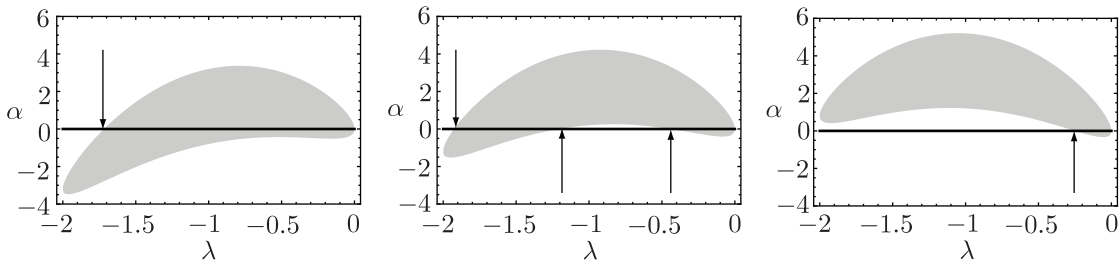}
\end{center}
\caption{\emph{The region of integration 
$-2\lambda(x+2\lambda t-x_+(\lambda))<\alpha<
-2\lambda(x+2\lambda t-x_-(\lambda))$
for $u_0(x)=2(1+x^2)^{-1}$ with $t=0.7$.  Left: $x=2$ (to the left of
the
    oscillatory region for $\ueps(x,t)$).  
Center: $x=2.5$ (within the oscillatory
    region for $\ueps(x,t)$).  
Right: $x=3$ (to the right of the oscillatory region for $\ueps(x,t)$).
    The line $\alpha=0$ of discontinuity of the integrand is
    superimposed, and the intersections of the boundary with
    this line are indicated with arrows.}}
\label{fig:IntRegion}
\end{figure}
The points where the boundary curves of this region intersect the line 
$\alpha=0$ (where the integrand is discontinuous) obviously will play an
important role in the differentiation of $U(x,t)$ with respect to $x$.
Moreover, these intersection points correspond (simply by changing the sign)
to the branches of the multivalued solution of Burgers' equation with initial
data $u_0$.  This explains their appearance in the formula for the weak limit
of $\ueps(x,t)$.  All details of this calculation will be given 
in \S\ref{sec:differentiation}, which will complete the proof of 
Theorem~\ref{MainTheorem}.

\subsection{Asymptotics of traces of powers of $\tilde{\mat{A}}_\epsilon$.  Proof
of Proposition~\ref{asymptotic preperties for trace 3}}
\label{sec:moments}
The definition \eqref{value of eigenvalue given by a fomula} implies that
where $F(\lambda)$ is bounded and bounded away from zero, the numbers
$\{\lamt_n\}_{n=1}^{N(\epsilon)}$ are locally nearly equally spaced, but they
are more dilute near the ``soft edge'' of the spectrum $\lambda=-L$ and
more dense near the ``hard edge'' of the spectrum $\lambda=0$.  Taking
into account the soft edge behavior we may obtain a uniform estimate:
\begin{lemma}
\label{lem:uniformspacing}
There is a constant $C_\lambda>0$ independent of $\epsilon$ such that
\begin{equation}
|\lamt_n-\lamt_m|\le C_\lambda\epsilon^{2/3}|n-m|^{2/3}
\end{equation}
holds for all $n$ and $m$ between $1$ and $N(\epsilon)$.
\end{lemma}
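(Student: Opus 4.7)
The plan is to derive the spacing bound directly from the quantization rule \eqref{value of eigenvalue given by a fomula}. Assuming WLOG that $n>m$, subtracting the two defining equations yields
\begin{equation*}
\epsilon(n-m)=\int_{\tilde\lambda_m}^{\tilde\lambda_n}F(\lambda)\,d\lambda,
\end{equation*}
so the task reduces to bounding $\tilde\lambda_n-\tilde\lambda_m$ below in terms of this integral. Since the only place where $F$ degenerates in a way that allows consecutive $\tilde\lambda_n$'s to spread apart is the soft edge $\lambda=-L$ (where $F(\lambda)\sim\mathrm{const}\cdot\sqrt{L+\lambda}$), I expect the worst case to be when $\tilde\lambda_m$ and $\tilde\lambda_n$ both sit near $-L$, and I expect the $\tfrac23$ exponent to come precisely from inverting the relation $\int_{-L}^{\lambda}\sqrt{L+s}\,ds=\tfrac23(L+\lambda)^{3/2}$.

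The first step is to extract from Lemma~\ref{Lemma for function F} a \emph{global} lower bound of the form
\begin{equation*}
F(\lambda)\ge C\sqrt{L+\lambda},\qquad -L<\lambda<0,
\end{equation*}
for some constant $C>0$ independent of $\lambda$. Near the soft edge this is the content of \eqref{Lemma for function F equation 2}. Near the hard edge $\lambda=0$ estimate \eqref{Lemma for function F equation 4} gives $F(\lambda)\to+\infty$, which dominates the bounded function $\sqrt{L+\lambda}$. On the intermediate compact set $[-L+\delta,-\delta]$ the continuous function $F$ is positive, hence bounded below by a positive constant, which again dominates $\sqrt{L+\lambda}\le\sqrt L$. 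Patching these three regimes furnishes the desired uniform $C$.

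The second step is to apply this to the integral. With $a:=L+\tilde\lambda_m$ and $b:=L+\tilde\lambda_n$, $0\le a\le b<L$, one has
\begin{equation*}
\epsilon(n-m)\ge C\int_{\tilde\lambda_m}^{\tilde\lambda_n}\sqrt{L+\lambda}\,d\lambda=\tfrac{2C}{3}\bigl(b^{3/2}-a^{3/2}\bigr).
\end{equation*}
The elementary inequality $b^{3/2}-a^{3/2}\ge(b-a)^{3/2}$ for $0\le a\le b$ (verified by differentiating $b\mapsto b^{3/2}-a^{3/2}-(b-a)^{3/2}$ and using $\sqrt b\ge\sqrt{b-a}$) then yields
\begin{equation*}
(\tilde\lambda_n-\tilde\lambda_m)^{3/2}=(b-a)^{3/2}\le\frac{3\epsilon(n-m)}{2C},
\end{equation*}
so that $\tilde\lambda_n-\tilde\lambda_m\le C_\lambda\epsilon^{2/3}(n-m)^{2/3}$ with $C_\lambda:=(3/(2C))^{2/3}$, as required.

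The only real obstacle is establishing the global lower bound on $F$, which amounts to careful bookkeeping of the three regimes in Lemma~\ref{Lemma for function F}; once this is in hand, the rest is a one-line convexity argument for $t\mapsto t^{3/2}$. No use is made of the finer behavior of $F$ near $\lambda=0$, since $\sqrt{L+\lambda}$ is already sufficient there; this reflects the heuristic that the spectrum is densest near the hard edge and most dilute near the soft edge, and the $\tfrac23$ power is dictated by the latter.
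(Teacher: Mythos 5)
Your proof is correct and takes essentially the same route as the paper: extract the global lower bound $F(\lambda)\ge C\sqrt{L+\lambda}$ from Lemma~\ref{Lemma for function F}, plug it into the quantization integral, and invert the resulting $\tfrac{3}{2}$-power relation. The only cosmetic difference is in how the bound $\int_{\lamt_m}^{\lamt_n}\sqrt{L+\lambda}\,d\lambda\ge\tfrac{2}{3}|\lamt_n-\lamt_m|^{3/2}$ is justified — the paper shifts the integration variable so that the integrand is compared to $\sqrt{\xi}$ directly, whereas you evaluate the integral and invoke superadditivity of $t\mapsto t^{3/2}$; these are interchangeable one-liners.
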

\begin{proof}
Since $F$ is a monotone increasing function with $F(-L)=0$, it is bounded 
away from zero except in a right-neighborhood of $\lambda=-L$.  Using
the lower bound given in \eqref{Lemma for function F equation 2} from
Lemma~\ref{Lemma for function F} we obtain a lower bound $F(\lambda)\ge
C\sqrt{L+\lambda}$ valid uniformly for $-L<\lambda<0$ with $0<C\le C_{-L}/2$.
Then, using the definition \eqref{value of eigenvalue given by a fomula}
we have (assuming $n\ge m$ without loss of generality)
\begin{equation}
\epsilon|n-m|=\int_{\lamt_m}^{\lamt_n}F(\lambda)\,d\lambda\ge C
\int_{\lamt_m}^{\lamt_n}\sqrt{L+\lambda}\,d\lambda \ge
C\int_{0}^{|\lamt_n-\lamt_m|}\sqrt{\xi}\,d\xi  
=\frac{2C}{3}|\lamt_n-\lamt_m|^{3/2},
\end{equation}
so the desired inequality follows with $C_\lambda:=(2C/3)^{-2/3}$.
\end{proof}

We decompose the matrix $\tilde{\mat{A}}_\epsilon$ into a sum 
$\tilde{\mat{A}}_\epsilon=\mat{D}+\mat{H}$ of its diagonal part
\begin{equation}
\mat{D}:=\mathrm{diag}(D_1,D_2,\dots,D_{N(\epsilon)}),\quad
D_k:=D(\lamt_k;x,t),
\end{equation}
where $D(\lambda;x,t)$ is defined by \eqref{eq:Dfuncdef},
and its off-diagonal part $\mat{H}$ whose matrix elements are given by
\begin{equation}
\label{small dispersion matrix calculation 7}
H_{nm}=\frac{2i\epsilon
\sqrt{\lamt_{n}\lamt_{m}}}{\lamt_{n}-\lamt_{m}},\quad 
\text{for $n\neq m$, and $H_{nn}=0$.}
\end{equation}
We also will soon need the quantities $\{\varphi_n\}_{n=1}^{N(\epsilon)}$ 
defined by
\begin{equation}
\varphi_n:=\varphi(\lamt_n),\quad n=1,\dots,N(\epsilon),
\label{eq:varphikdef}
\end{equation}
where $\varphi(\lambda)$ is given by \eqref{eq:varphidef}.

\begin{lemma}
There is a constant $C_\varphi>0$ and for each $R>0$ there is a constant 
$C_{D,R}>0$ such that
\begin{equation}
|\varphi_n|\le C_\varphi
\label{eq:varphinbound}
\end{equation}
and 
\begin{equation}
\sup_{x^2+t^2\le R^2}|D_n|\le C_{D,R}
\label{eq:Dnbound}
\end{equation}
both hold for all $\epsilon>0$ and all $n$ between $1$ and $N(\epsilon)$.  Also,
\begin{equation}
|\varphi_n-\varphi_m|\le C_\varphi\epsilon^{\sigma/3}|n-m|^{\sigma/3}
\end{equation}
and
\begin{equation}
\sup_{x^2+t^2\le R^2}|D_n-D_m|\le C_{D,R}\epsilon^{\sigma/3}|n-m|^{\sigma/3}
\label{eq:Ddiffbound}
\end{equation}
both hold for all $\epsilon>0$ and for all $n$ and $m$ between $1$ and
$N(\epsilon)$.  Here $\sigma$ is the positive H\"older exponent of
Lemma~\ref{Lemma for function F}.
\label{lem:D}
\end{lemma}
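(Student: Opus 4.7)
\medskip

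\noindent\textbf{Proof proposal.} The plan is to derive all four inequalities as direct consequences of the regularity results already recorded in Lemma~\ref{Lemma for function F} together with the spacing estimate of Lemma~\ref{lem:uniformspacing}. No new ideas should be needed; the only task is to combine the H\"older continuity of $\varphi$ and $D$ in $\lambda$ with the H\"older-type control of the eigenvalue spacings in the index $n$.

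For the boundedness statements \eqref{eq:varphinbound} and \eqref{eq:Dnbound}, I would simply observe that the prescribed eigenvalues satisfy $\lamt_n\in(-L,0)$ (this follows from \eqref{value of eigenvalue given by a fomula} together with the positivity of $F$ on $(-L,0)$ and the normalization $\int_{-L}^0 F(\lambda)\,d\lambda=M$ recorded in \eqref{eq:intF}, since $\epsilon(n-\tfrac12)<\epsilon N(\epsilon)\le M$). By Lemma~\ref{Lemma for function F}, the continuous function $\varphi(\lambda)=\sqrt{-\lambda F(\lambda)}$ is bounded on $(-L,0)$, giving a constant $C_\varphi$ with $|\varphi_n|=|\varphi(\lamt_n)|\le C_\varphi$. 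The same lemma gives that $D(\lambda;x,t)$, which is continuous in $\lambda$ and polynomial in $(x,t)$, is bounded uniformly on $(-L,0)\times\{x^2+t^2\le R^2\}$, yielding a constant $C_{D,R}$ with $|D_n|\le C_{D,R}$ on that compact set.

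For the H\"older-type differences I would combine three facts: (i) $\varphi(\cdot)$ is H\"older continuous on $(-L,0)$ with exponent $\sigma/2$; (ii) $D(\cdot;x,t)$ is H\"older continuous on $(-L,0)$ with exponent $\sigma$, uniformly for $(x,t)$ in compact sets; and (iii) $|\lamt_n-\lamt_m|\le C_\lambda\epsilon^{2/3}|n-m|^{2/3}$ by Lemma~\ref{lem:uniformspacing}. For $\varphi$ this is immediate: there is a constant $K_\varphi$ with
\begin{equation*}
|\varphi_n-\varphi_m|=|\varphi(\lamt_n)-\varphi(\lamt_m)|\le K_\varphi|\lamt_n-\lamt_m|^{\sigma/2}\le K_\varphi C_\lambda^{\sigma/2}\,\epsilon^{\sigma/3}|n-m|^{\sigma/3},
\end{equation*}
which (after enlarging $C_\varphi$ to absorb the prefactor) is the desired estimate.

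For $D$ the same computation gives, for $(x,t)$ with $x^2+t^2\le R^2$, a constant $K_{D,R}$ with
\begin{equation*}
|D_n-D_m|\le K_{D,R}|\lamt_n-\lamt_m|^{\sigma}\le K_{D,R}C_\lambda^{\sigma}\,\epsilon^{2\sigma/3}|n-m|^{2\sigma/3},
\end{equation*}
which is actually stronger in its $\sigma$-dependence than \eqref{eq:Ddiffbound}. To recover the (weaker) uniform statement written in the lemma I would invoke the global bound $\epsilon|n-m|\le\epsilon N(\epsilon)\le M$ supplied by \eqref{eq:Numberofeigs}, which yields
\begin{equation*}
\epsilon^{2\sigma/3}|n-m|^{2\sigma/3}=(\epsilon|n-m|)^{\sigma/3}\cdot\epsilon^{\sigma/3}|n-m|^{\sigma/3}\le M^{\sigma/3}\,\epsilon^{\sigma/3}|n-m|^{\sigma/3};
\end{equation*}
enlarging $C_{D,R}$ to absorb the resulting factor finishes the proof. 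There is no real obstacle: the content is entirely in Lemmas~\ref{Lemma for function F} and~\ref{lem:uniformspacing}, and the present lemma only bundles them into a form convenient for the moment-method analysis of $\tilde{\mat{A}}_\epsilon$ in the next section.
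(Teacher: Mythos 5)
Your proof is correct and follows essentially the same path as the paper's: invoke the H\"older continuity of $\varphi$ and $D$ from Lemma~\ref{Lemma for function F}, the spacing estimate of Lemma~\ref{lem:uniformspacing}, and then reduce the natural $\epsilon^{2\sigma/3}|n-m|^{2\sigma/3}$ exponent for $D$ to $\sigma/3$ using the uniform bound on $\epsilon|n-m|$. The paper's own proof is just a brief sketch of exactly this argument, so your version is a correct filling-in of the same details.
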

\begin{proof}
This is an easy consequence of the H\"older continuity of $\varphi(\cdot)$ and
$D(\cdot;x,t)$ guaranteed by Lemma~\ref{Lemma for function F}, and of 
the spacing estimate
for $\{\lamt_k\}_{k=1}^{N(\epsilon)}$ given in Lemma~\ref{lem:uniformspacing}.
In fact, since $D$ is H\"older continuous with exponent $\sigma$ while
$\varphi$ has exponent $\sigma/2$ the most natural bound for $|D_n-D_m|$
is proportional to $\epsilon^{2\sigma/3}|n-m|^{2\sigma/3}$, and to obtain
\eqref{eq:Ddiffbound} we use the fact that 
$\epsilon|n-m|\le 2\epsilon N(\epsilon)$ 
is uniformly bounded to reduce the exponent to $\sigma/3$.
\end{proof}

\begin{lemma}
There is a constant $C_H>0$ such that
\begin{equation}
\left|(n-m)H_{nm}\right|\le C_H
\label{eq:Hnmbound}
\end{equation}
and
\begin{equation}
\left|(n-m)H_{nm}-2i\varphi_n\varphi_m\right|\le C_H\epsilon^{\sigma/3}
|n-m|^{\sigma/3}
\label{eq:Hdiffbound}
\end{equation}
both hold for all $\epsilon>0$ and all $n\neq m$ between $1$ and $N(\epsilon)$.
Again, $\sigma>0$ is the H\"older exponent of Lemma~\ref{Lemma for function F}.
\label{lem:Hnm}
\end{lemma}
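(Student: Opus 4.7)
The approach is to exploit the quantization condition \eqref{value of eigenvalue given by a fomula} by subtraction to convert the integer $n-m$ into an integral of $F$, and then to apply a change of variable that exposes a hyperbolic-sine structure in $H_{nm}$. Subtracting the quantization conditions for $\lamt_n$ and $\lamt_m$ gives $\epsilon(n-m)=\int_{\lamt_m}^{\lamt_n}F(\lambda)\,d\lambda$, so from the definition \eqref{small dispersion matrix calculation 7},
\[
(n-m)H_{nm}=\frac{2i\sqrt{\lamt_n\lamt_m}}{\lamt_n-\lamt_m}\int_{\lamt_m}^{\lamt_n}F(\lambda)\,d\lambda.
\]
Under the change of variable $\mu:=\log(-\lambda)$ with $\mu_k:=\log(-\lamt_k)$, $\tilde\varphi(\mu):=\varphi(-e^\mu)$, and (WLOG $\lamt_n>\lamt_m$, so $\Delta:=\mu_m-\mu_n>0$), the relation $\varphi^2=-\lambda F$ gives $F(\lambda)\,d\lambda=-\tilde\varphi(\mu)^2\,d\mu$, while $\sqrt{\lamt_n\lamt_m}=e^{(\mu_n+\mu_m)/2}$ and $\lamt_n-\lamt_m=2e^{(\mu_n+\mu_m)/2}\sinh(\Delta/2)$ collapse the representation to
\[
(n-m)H_{nm}=\frac{i}{\sinh(\Delta/2)}\int_{\mu_n}^{\mu_m}\tilde\varphi(\mu)^2\,d\mu.
\]

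Bound \eqref{eq:Hnmbound} is now immediate: $|\tilde\varphi|\le C_\varphi$ from \eqref{eq:varphinbound} together with the elementary inequality $\Delta/\sinh(\Delta/2)\le 2$ gives $|(n-m)H_{nm}|\le 2C_\varphi^2$, so we may take $C_H=2C_\varphi^2$.

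For \eqref{eq:Hdiffbound}, writing $2i\varphi_n\varphi_m=2i\tilde\varphi(\mu_n)\tilde\varphi(\mu_m)$, I decompose
\[
(n-m)H_{nm}-2i\varphi_n\varphi_m=\frac{i}{\sinh(\Delta/2)}\int_{\mu_n}^{\mu_m}\bigl[\tilde\varphi(\mu)^2-\varphi_n\varphi_m\bigr]\,d\mu+i\varphi_n\varphi_m\left[\frac{\Delta}{\sinh(\Delta/2)}-2\right].
\]
For the first term, returning to the $\lambda$-variable and using the algebraic identity $\varphi(\lambda)^2-\varphi_n\varphi_m=\varphi(\lambda)(\varphi(\lambda)-\varphi_m)+\varphi_m(\varphi(\lambda)-\varphi_n)$, the H\"older continuity of $\varphi$ with exponent $\sigma/2$ (Lemma~\ref{Lemma for function F}) bounds the integrand for $\lambda\in[\lamt_m,\lamt_n]$ by a constant multiple of $|\lamt_n-\lamt_m|^{\sigma/2}$; combining with $\Delta/\sinh(\Delta/2)\le 2$ and the spacing estimate $|\lamt_n-\lamt_m|\le C_\lambda\epsilon^{2/3}|n-m|^{2/3}$ of Lemma~\ref{lem:uniformspacing} yields the required bound of order $\epsilon^{\sigma/3}|n-m|^{\sigma/3}$ on this term.

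The main obstacle is controlling the second (Taylor) term $i\varphi_n\varphi_m[\Delta/\sinh(\Delta/2)-2]$: the factor $|\Delta/\sinh(\Delta/2)-2|$ is only $O(\Delta^2)$ for small $\Delta$, but $\Delta$ itself can be large when $\lamt_n,\lamt_m$ lie close to the hard spectral edge $\lambda=0$, where the map $\lambda\mapsto\log(-\lambda)$ is non-Lipschitz. I would handle this by case analysis on $\Delta$, exploiting the edge decay of $\varphi$ to counterbalance the growth of $\Delta/\sinh(\Delta/2)-2$. In the ``bulk'' regime $\Delta\le 1$, the inequality $|\lamt_n-\lamt_m|=2\sqrt{\lamt_n\lamt_m}\sinh(\Delta/2)\ge\sqrt{\lamt_n\lamt_m}\,\Delta$ permits one to trade $\Delta^2$ for $|\lamt_n-\lamt_m|^2/(|\lamt_n||\lamt_m|)$; the hard-edge decay $|\varphi_k|\lesssim|\lamt_k|^{\sigma/2}$ obtained from the asymptotics \eqref{Lemma for function F equation 4} of Lemma~\ref{Lemma for function F} then counterbalances the $|\lamt_n||\lamt_m|$ denominator, leaving an estimate of order $|\lamt_n-\lamt_m|^{\sigma}$ that Lemma~\ref{lem:uniformspacing} converts to the target $\epsilon^{\sigma/3}|n-m|^{\sigma/3}$. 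When $\Delta>1$ the disparity in magnitudes between $\lamt_n$ and $\lamt_m$ makes $|\varphi_n\varphi_m|$ itself small via the same edge decay, and in addition forces $|n-m|\gtrsim 1/\epsilon$ so that the target right-hand side is already of unit order; in that range the triangle inequality combined with \eqref{eq:Hnmbound} and \eqref{eq:varphinbound} suffices to close the estimate.
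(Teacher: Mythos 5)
Your hyperbolic-sine identity $(n-m)H_{nm}=\frac{i}{\sinh(\Delta/2)}\int_{\mu_n}^{\mu_m}\tilde\varphi(\mu)^2\,d\mu$ (with $\mu=\log(-\lambda)$, $\Delta=\mu_m-\mu_n>0$) is a genuinely different route from the paper's proof, which instead sandwiches $-i(n-m)H_{nm}$ between $2\epsilon(n-m)+2\varphi_n^2$ and $-2\epsilon(n-m)+2\varphi_m^2$ by elementary means (the inequality $2ab\le a^2+b^2$ together with the Mean Value Theorem applied to the quantization condition), and then reads off $|(n-m)H_{nm}-2i\varphi_n\varphi_m|\le 2\epsilon|n-m|+2\max\{\varphi_n,\varphi_m\}|\varphi_n-\varphi_m|$ directly; Lemma~4.2 and the boundedness of $\epsilon|n-m|$ then finish with no case analysis. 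Your derivation of \eqref{eq:Hnmbound}, your decomposition of the difference into a $\tilde\varphi^2$-oscillation term and a Taylor term, and your treatment of the oscillation term are all correct.

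There is a genuine error in the tail case $\Delta>1$ of the Taylor term: the assertion that $\Delta>1$ forces $|n-m|\gtrsim 1/\epsilon$ is false. Because $F(\lambda)\asymp(-\lambda)^{-1/q}$ diverges at the hard edge, the top eigenvalues sit at $(-\lamt_k)\asymp\epsilon^{q/(q-1)}$ with \emph{multiplicative} gaps of order one between neighbours; e.g.\ for $u_0(x)=2(1+x^2)^{-1}$ (so $q=2$) one finds $(-\lamt_{N(\epsilon)-1})/(-\lamt_{N(\epsilon)})\approx 9$, so $\Delta\approx\log 9>1$ already for $|n-m|=1$, and the "right-hand side is of unit order" argument does not apply. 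The tail case can be closed by the first half of your sentence rather than the second: $\Delta>1$ gives $(-\lamt_m)>e(-\lamt_n)$, hence both $(-\lamt_n)$ and $(-\lamt_m)$ are $\le(1-e^{-1})^{-1}|\lamt_n-\lamt_m|$, so the edge bound $|\varphi_k|\lesssim(-\lamt_k)^{\sigma/2}$ (from \eqref{Lemma for function F equation 4} near $0$ and trivially elsewhere) gives $|\varphi_n\varphi_m|\lesssim|\lamt_n-\lamt_m|^\sigma$; combined with $|\Delta/\sinh(\Delta/2)-2|\le 2$ and Lemma~4.1 this is $\lesssim\epsilon^{2\sigma/3}|n-m|^{2\sigma/3}\lesssim\epsilon^{\sigma/3}|n-m|^{\sigma/3}$, using boundedness of $\epsilon|n-m|$. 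Note also that your bulk estimate silently relies on the comparability $(-\lamt_n)\asymp(-\lamt_m)$ and $|\lamt_n-\lamt_m|\lesssim(-\lamt_n)$ implied by $\Delta\le 1$; without invoking those, the leftover factor $(-\lamt_n)^{\sigma/2-1}(-\lamt_m)^{\sigma/2-1}|\lamt_n-\lamt_m|^{2-\sigma}$ does not reduce to a constant.
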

\begin{proof}
Suppose without loss of generality that $n>m$, implying that 
$\lamt_m<\lamt_n<0$.  Then
\begin{equation}
\begin{split}
-i(n-m)H_{nm}&=2\sqrt{-\lamt_n}\sqrt{-\lamt_m}
\frac{\epsilon(n-m)}{\lamt_n-\lamt_m}\\
&\le ([-\lamt_n]+[-\lamt_m])\frac{\epsilon(n-m)}{\lamt_n-\lamt_m}\\
&= \epsilon (n-m) -2\lamt_n\frac{\epsilon(n-m)}{\lamt_n-\lamt_m}.
\end{split}
\end{equation}
Now, recalling the definition \eqref{value of eigenvalue given by a fomula}
of the numbers $\{\lamt_k\}_{k=1}^{N(\epsilon)}$ and applying the Mean Value
Theorem we may write the latter difference quotient as $F(\xi)$ for 
some $\xi$ with $\lamt_m\le\xi\le\lamt_n$, and since $F$ is increasing
we have $F(\xi)\le F(\lamt_n)$, so
\begin{equation}
-i(n-m)H_{nm}\le 
2\epsilon(n-m)-2\lamt_nF(\lamt_n) = 2\epsilon(n-m)+2\varphi_n^2,
\label{eq:nmHnmupper}
\end{equation}
where we have also replaced $\epsilon(n-m)$ with $2\epsilon(n-m)$.
On the other hand,
we may write
\begin{equation}
-i(n-m)H_{nm}= 2\epsilon(n-m)
\frac{\sqrt{-\lamt_n}\sqrt{-\lamt_m}+\lamt_m}{\lamt_n-\lamt_m}
- 2\lamt_m\frac{\epsilon(n-m)}{\lamt_n-\lamt_m}.
\end{equation}
Again the difference quotient may be replaced by $F(\xi)\ge F(\lamt_m)$, and
since
\begin{equation}
\frac{\sqrt{-\lamt_n}\sqrt{-\lamt_m}}{\lamt_n-\lamt_m} = -\frac{\sqrt{-\lamt_m}}{\sqrt{\lamt_n}+\sqrt{\lamt_m}}\ge -1,
\end{equation}
we obtain
\begin{equation}
-i(n-m)H_{nm}\ge -2\epsilon(n-m)-2\lamt_mF(\lamt_m) = -2\epsilon(n-m)+2\varphi_m^2.
\label{eq:nmHnmlower}
\end{equation}
Combining \eqref{eq:nmHnmupper} and \eqref{eq:nmHnmlower} gives
\begin{equation}
\left|(n-m)H_{nm}-2i\varphi_n\varphi_m\right|\le 2\epsilon|n-m|+2
\max\{\varphi_n,\varphi_m\}|\varphi_n-\varphi_m|,
\end{equation}
and then applying Lemma~\ref{lem:D} we obtain
\begin{equation}
\left|(n-m)H_{nm}-2i\varphi_n\varphi_m\right|\le 2\epsilon |n-m| + 
2C_\varphi^2\epsilon^{\sigma/3}|n-m|^{\sigma/3}.
\end{equation}
Now, $0\le \epsilon |n-m|\le 2\epsilon N(\epsilon)$, and this upper
bound has a limit as $\epsilon\downarrow 0$, so $\epsilon |n-m|$ is
nonnegative and bounded.  Since $\sigma\le 3$ we have therefore proved 
\eqref{eq:Hdiffbound}.  Since $\varphi_n\varphi_m$ and $\epsilon |n-m|$ are
bounded, \eqref{eq:Hnmbound} then follows from \eqref{eq:Hdiffbound}.
\end{proof}

For any nonnegative integer power $p$, the $p$th moment of the measure
$\mu_\epsilon$
can be written in terms of $\mat{D}$ and $\mat{H}$ with the use of
\eqref{eq:trace}:
\begin{equation}
\label{small dispersion matrix calculation 8}
\int_\mathbb{R}\alpha^pd\mu_\epsilon(\alpha) = 
\sum_{j=0}^{p}Z_{pj},
\end{equation}
where $Z_{pj}$ contains the contribution to the trace coming from products
of matrices involving exactly $j$ factors of $\mat{H}$:
\begin{equation}
Z_{pj}:=
\frac{M}{N(\epsilon)}
\mathop{\sum_{d_{1}+d_{2}+\cdots+d_{s}=p-j}}_{h_{1}+h_{2}+\cdots+h_{s}=j}
\mathrm{tr}\left(
\mat{D}^{d_1}\mat{H}^{h_1}\cdots \mat{D}^{d_s}\mat{H}^{h_s}\right),
\label{eq:Zpj}
\end{equation}
and where $d_{1}\ge 0$ and $h_{s} \geq 0$, while $d_k>0$ for $2\le k\le s$
and $h_k>0$ for $1\le k\le s-1$.  Since $p$ is a fixed number, it will suffice
to compute the limit of $Z_{pj}$ as $\epsilon\downarrow 0$ for $j=0,\dots,p$.
Actually, it will be enough to consider even values of $j$ as the following
result shows.

\begin{lemma}
\label{asymptotic preperties for trace 1} If $j$ is an odd number, then
$Z_{pj}=0$.
\end{lemma}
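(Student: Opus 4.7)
The plan is to exploit a transpose symmetry acting on the individual monomials inside the sum \eqref{eq:Zpj}. First, I would record two elementary facts about the factors: $\mat{D}^T = \mat{D}$, since $\mat{D}$ is diagonal (and real), while $\mat{H}^T = -\mat{H}$, which follows directly from the explicit formula \eqref{small dispersion matrix calculation 7} since $H_{mn} = 2i\epsilon\sqrt{\lamt_m\lamt_n}/(\lamt_m-\lamt_n) = -H_{nm}$ and $H_{kk}=0$. So each of the two ``letters'' that build the monomials in $Z_{pj}$ has a definite sign under transposition.

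Next I would rewrite $Z_{pj}$ as a sum over words. Let $W_{p,j}$ denote the set of words of length $p$ in the alphabet $\{\mat{D},\mat{H}\}$ containing exactly $j$ letters equal to $\mat{H}$. The constraints in \eqref{eq:Zpj} ($d_1\geq 0$, $h_s\geq 0$, and all other $d_k,h_k>0$) are precisely those that make the block decomposition $w = \mat{D}^{d_1}\mat{H}^{h_1}\cdots\mat{D}^{d_s}\mat{H}^{h_s}$ a bijective enumeration of $W_{p,j}$, where empty leading-$\mat{D}$ or trailing-$\mat{H}$ blocks are permitted only at the two boundary positions. Hence, up to the prefactor,
\begin{equation*}
Z_{pj} = \frac{M}{N(\epsilon)}\sum_{w\in W_{p,j}}\mathrm{tr}(w).
\end{equation*}

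For any word $w=\mat{M}_1\mat{M}_2\cdots\mat{M}_p$ with $\mat{M}_k\in\{\mat{D},\mat{H}\}$, transposition reverses the order of the product and contributes a sign $-1$ for each $\mat{H}$ factor, so $w^T = (-1)^j w^{\mathrm{rev}}$, where $w^{\mathrm{rev}}:=\mat{M}_p\mat{M}_{p-1}\cdots\mat{M}_1$ and $j=|w|_H$. Since $\mathrm{tr}(w)=\mathrm{tr}(w^T)$, this yields the key identity
\begin{equation*}
\mathrm{tr}(w) = (-1)^j\,\mathrm{tr}(w^{\mathrm{rev}}),\qquad w\in W_{p,j}.
\end{equation*}
The reversal $w\mapsto w^{\mathrm{rev}}$ is an involution of $W_{p,j}$ onto itself (length and letter-count are preserved), so relabeling the summation index gives $\sum_{w}\mathrm{tr}(w^{\mathrm{rev}}) = \sum_{w}\mathrm{tr}(w)$, and hence $Z_{pj} = (-1)^j Z_{pj}$. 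For odd $j$ this forces $Z_{pj}=0$.

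The only mildly delicate point is the first-step bookkeeping that identifies the sum in \eqref{eq:Zpj} with $\sum_{w\in W_{p,j}}\mathrm{tr}(w)$ without double-counting; once that parametrization is in hand, the transpose identity and the bijection of $W_{p,j}$ under reversal close the argument with no further work. I do not anticipate any serious obstacle. Note in particular that the weaker global identity $\mathrm{tr}((\mat{D}+\mat{H})^p)=\mathrm{tr}((\mat{D}-\mat{H})^p)$, obtained by replacing $\tilde{\mat{A}}_\epsilon$ by its transpose inside the $p$-th power trace, would only yield $\sum_{j\text{ odd}}Z_{pj}=0$; the refinement to individual $Z_{pj}$ really requires applying the transpose symmetry word-by-word.
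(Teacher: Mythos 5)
Your proof is correct and is essentially the argument in the paper: both use $\mathrm{tr}(\mat{M})=\mathrm{tr}(\mat{M}^{\mathsf T})$ together with $\mat{D}^{\mathsf T}=\mat{D}$, $\mat{H}^{\mathsf T}=-\mat{H}$, applied to each monomial in \eqref{eq:Zpj}, and then observe that reversal is a permutation of the index set so that $Z_{pj}=(-1)^jZ_{pj}$. Your ``words in $\{\mat{D},\mat{H}\}$'' reformulation is just a cleaner bookkeeping of the same relabeling step the paper performs, and your closing remark about the inadequacy of the global identity $\mathrm{tr}((\mat{D}+\mat{H})^p)=\mathrm{tr}((\mat{D}-\mat{H})^p)$ is a correct and useful observation, though not part of the paper's proof.
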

\begin{proof}
Since $\mathrm{tr}(\mat{M})=
\mathrm{tr}(\mat{M}^{\mathsf{T}})$ for all 
square matrices $\mat{A}$, 
\begin{equation}
\label{asymptotic preperties for trace 1 1}
\begin{split}
\frac{N(\epsilon)}{M}Z_{pj}
=&\mathop{\sum_{d_{1}+d_{2}+\cdots+d_{s}=p-j}}_{h_{1}+h_{2}+\cdots+h_{s}=j}\mathrm{tr}\left(\left(
\mat{D}^{d_1}\mat{H}^{h_1}\cdots \mat{D}^{d_s}\mat{H}^{h_s}\right)^{\mathsf{T}}\right)
\\=&(-1)^j\mathop{\sum_{d_{1}+d_{2}+\cdots+d_{s}=p-j}}_{h_{1}+h_{2}+\cdots+h_{s}=j}
\mathrm{tr}\left(
\mat{H}^{h_s}\mat{D}^{d_s}\cdots \mat{H}^{h_1}\mat{D}^{d_1}\right)
\end{split}
\end{equation}
where in the second line we have used the facts that
$\mat{D}^{\mathsf{T}}=\mat{D}$ and $\mat{H}^{\mathsf{T}}=-\mat{H}$.
By relabeling the terms in the sum we therefore obtain
\begin{equation}
\frac{N(\epsilon)}{M}Z_{pj}=(-1)^j\frac{N(\epsilon)}{M}Z_{pj}.
\end{equation}
Since $N(\epsilon)>0$ and $M<\infty$, the desired result follows.
\end{proof}

An important role will be played below by the Toeplitz (discrete
convolution) operator
$\mathcal{T}_f:\ell^2(\mathbb{Z})\to\ell^2(\mathbb{Z})$ defined by 
\begin{equation}
\left(\mathcal{T}_fc\right)_n := \sum_{m\in\mathbb{Z}}f_{n-m}c_m,
\quad\{c_m\}_{m\in\mathbb{Z}}\in\ell^2(\mathbb{Z}),
\label{eq:ToeplitzOperator}
\end{equation}
where $\{f_n\}_{n\in\mathbb{Z}}\in\ell^2(\mathbb{Z})$ is the sequence
\begin{equation}
f_n:=\begin{cases} n^{-1},\quad &n\neq 0\\
0,\quad &n=0.
\end{cases}
\label{eq:fsequence}
\end{equation}
\begin{lemma}
\label{infinite sum convergence}
For any even positive integer  $j$, we have
\begin{equation}
\label{infinite sum convergence equation}
\sum_{n_2,\dots,n_j\in\mathbb{Z}}f_{-n_2}
\left[\prod_{\ell=2}^{j-1}f_{n_\ell-n_{\ell+1}}\right]f_{n_j}
=\frac{(i\pi)^j}{j+1},
\end{equation}
where the $j-1$-fold infinite sum converges absolutely.
\end{lemma}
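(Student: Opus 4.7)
The natural approach is to recognize the left-hand side of \eqref{infinite sum convergence equation} as the diagonal matrix coefficient $(\mathcal{T}_f^j\delta_0)_0$, where $\delta_0\in\ell^2(\mathbb{Z})$ is the Kronecker sequence supported at $0$, and then to evaluate this coefficient by passing to the Fourier representation of the symbol of $\mathcal{T}_f$. Indeed, starting from $(\mathcal{T}_f\delta_0)_n=f_n$ and iterating the definition \eqref{eq:ToeplitzOperator}, a direct induction yields
\begin{equation}
(\mathcal{T}_f^j\delta_0)_0=\sum_{n_2,\dots,n_j\in\mathbb{Z}}f_{-n_2}\left[\prod_{\ell=2}^{j-1}f_{n_\ell-n_{\ell+1}}\right]f_{n_j}
\end{equation}
(after relabeling the dummy summation indices), so the lemma amounts precisely to the evaluation of this matrix coefficient together with the claim of absolute convergence.

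To compute it, I invoke the classical Fourier series identity
\begin{equation}
\phi(\theta):=\sum_{n\neq 0}\frac{e^{in\theta}}{n}=2i\sum_{n=1}^{\infty}\frac{\sin(n\theta)}{n}=i(\pi-\theta),\quad 0<\theta<2\pi,
\end{equation}
which identifies $\phi$ as the bounded symbol of $\mathcal{T}_f$ already mentioned in the discussion preceding the lemma. Since $\phi\in L^{\infty}([0,2\pi])$, the operator $\mathcal{T}_f$ is bounded on $\ell^2(\mathbb{Z})$, and the discrete Fourier transform intertwines it with multiplication by $\phi$ on $L^2([0,2\pi])$. Because $\widehat{\delta_0}\equiv 1$, the sequence $\mathcal{T}_f^j\delta_0$ is the inverse Fourier transform of $\phi^j$, and in particular its zeroth entry is the mean value of $\phi^j$ over the circle:
\begin{equation}
(\mathcal{T}_f^j\delta_0)_0=\frac{1}{2\pi}\int_0^{2\pi}\phi(\theta)^j\,d\theta=\frac{i^j}{2\pi}\int_{-\pi}^{\pi}u^j\,du
\end{equation}
after the substitution $u=\pi-\theta$. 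For $j$ even this elementary integral evaluates to $2\pi^{j+1}/(j+1)$, giving the stated value $(i\pi)^j/(j+1)$; for $j$ odd (not claimed by the lemma) it vanishes by symmetry, in line with Lemma~\ref{asymptotic preperties for trace 1}.

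The main obstacle is the assertion of \emph{absolute} convergence of the $(j-1)$-fold sum, since the Fourier identification above supplies only $\ell^2$ information a priori. I would establish this by induction on $k$, proving an estimate of the form
\begin{equation}
(|f|^{*k})_n\le C_k\frac{1+(\log^+|n|)^{k-1}}{\max(|n|,1)},\quad n\in\mathbb{Z},\quad k\ge 1,
\end{equation}
where $|f|_n:=|n|^{-1}$ for $n\neq 0$ and $|f|_0:=0$. The inductive step reduces to the standard bound $\sum_{m\neq 0,n}|m|^{-1}|n-m|^{-1}\le C(1+\log^+|n|)/\max(|n|,1)$, obtained by splitting the range according as $|m|\le|n|/4$, $|n|/4<|m|<3|n|/4$, or $|m|\ge 3|n|/4$ so that in each regime one of the two factors can be bounded by a constant multiple of $|n|^{-1}$. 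Applying this for $k=j-1$ shows that the total sum of absolute values is dominated by $\sum_{n\neq 0}(1+\log^+|n|)^{j-2}/n^2<\infty$, so that Tonelli's theorem legitimizes the rearrangements implicit in identifying $S_j=(\mathcal{T}_f^j\delta_0)_0$ and in passing to the Fourier integral.
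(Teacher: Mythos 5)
Your Fourier evaluation of the sum is exactly the paper's: identify the left side with $(\mathcal{T}_f^{j-1}f)_0$ (equivalently $(\mathcal{T}_f^{j}\delta_0)_0$), pass to the symbol $i(\pi-\theta)$ via the convolution theorem, and integrate the $j$th power over one period. Where you genuinely diverge from the paper is in justifying absolute convergence, which is what licenses the convolution-theorem manipulations in the first place. The paper notices that the sequence $\{|f_n|\}$ has the closed-form Fourier series $g(\theta)=-\log(2(1-\cos\theta))$, and since every integer power of $g$ lies in $L^1[0,2\pi]$, the $(j-1)$-fold sum of absolute values equals $(2\pi)^{-1}\int_0^{2\pi}g(\theta)^j\,d\theta<\infty$ by the same convolution argument applied to $|f|$. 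You instead propose a direct inductive pointwise estimate $(|f|^{*k})_n\lesssim (1+\log^+|n|)^{k-1}/\max(|n|,1)$. Your route is more elementary (it needs no closed-form Fourier series for $|f|$) and gives quantitative decay of the iterated convolution, at the cost of extra bookkeeping; the paper's route is shorter and exploits the exact same structure used for the signed sum, which makes the two halves of the proof pleasingly parallel.

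One correction to your sketch of the base estimate $\sum_{m\neq 0,n}|m|^{-1}|n-m|^{-1}\lesssim(1+\log^+|n|)/|n|$. In the regime $|m|\ge 3|n|/4$ the bound $|m|^{-1}\le 4/(3|n|)$ is fine, but $\sum_{|m|\ge 3|n|/4,\,m\ne n}|n-m|^{-1}$ diverges, so the strategy of ``freeze one factor at $O(|n|^{-1})$, sum the other'' fails there as written. You must retain the decay of both factors for $|m|$ large: for instance for $|m|\ge 2|n|$ one has $|n-m|\ge|m|/2$, whence $|m|^{-1}|n-m|^{-1}\le 2|m|^{-2}$ is summable, contributing $O(|n|^{-1})$; only the bounded range $|m|\le 2|n|$ produces the logarithmic factor. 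With that adjustment your inductive scheme closes and the argument is sound.
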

\begin{proof}
Note that since $\{f_n\}_{n\in\mathbb{Z}}\in \ell^2(\mathbb{Z})$, 
$\{g_n\}_{n\in\mathbb{Z}}\in\ell^2(\mathbb{Z})$ as well, where $g_n:=|f_n|$ for
all $n\in\mathbb{Z}$.  The corresponding Fourier series converge in the
mean-square sense to functions $f(\cdot)$ and $g(\cdot)$ in $L^2[0,2\pi]$:
\begin{equation}
f(\theta):=\sum_{n\in\mathbb{Z}}f_ne^{in\theta}=i(\pi-\theta),\quad 0<\theta<2\pi
\label{eq:fthetadefine}
\end{equation}
and
\begin{equation}
g(\theta):=\sum_{n\in\mathbb{Z}}g_ne^{in\theta}=
-\log(2(1-\cos(\theta))),\quad 0<\theta<2\pi.
\end{equation}

First we establish the absolute convergence of the series on the
left-hand side of \eqref{infinite sum convergence equation}.  Using
\eqref{eq:ToeplitzOperator}, observe that
\begin{equation}
\sum_{n_2,\dots,n_j\in\mathbb{Z}}|f_{-n_2}|
\left[\prod_{\ell=2}^{j-1}|f_{n_\ell-n_{\ell+1}}|\right]
|f_{n_j}|
= \left(\mathcal{T}_g^{j-1}g\right)_0
\end{equation}
where $\mathcal{T}_g$ is the Toeplitz operator associated with the
sequence $\{g_n\}_{n\in\mathbb{Z}}$.  Now, $g(\cdot)$ has a logarithmic
singularity at $\theta=0\pmod{2\pi}$, but this is sufficiently mild that
$g(\cdot)^m\in L^2[0,2\pi]\subset L^1[0,2\pi]$ for any positive integer
power $m$.  Now for any function $k(\cdot)\in L^2[0,2\pi]$, the corresponding
Fourier coefficients are
\begin{equation}
k_n:=\frac{1}{2\pi}\int_0^{2\pi}k(\theta)e^{-in\theta}\,d\theta,
\end{equation}
so in particular we see that $(\mathcal{T}_g^{j-1}g)_0$ is the average value
of the function whose Fourier coefficients are 
$\{(\mathcal{T}_g^{j-1}g)_n\}_{n\in\mathbb{Z}}$.  But by the convolution theorem:
\begin{equation}
  w_n:=\sum_{m\in\mathbb{Z}}u_{n-m}v_m \quad\quad\Longleftrightarrow\quad\quad
w(\theta)=u(\theta)v(\theta),
\end{equation}
so it follows that 
\begin{equation}
\left(\mathcal{T}_g^{j-1}g\right)_0 = \frac{1}{2\pi}\int_0^{2\pi}
g(\theta)^j\,d\theta
\end{equation}
which is finite because $g(\cdot)^j\in L^1[0,2\pi]$.

Now we find the exact value of the $j-1$-fold infinite sum by the same
reasoning:
\begin{equation}
\sum_{n_2,\dots,n_j\in\mathbb{Z}}f_{-n_2}
\left[\prod_{\ell=2}^{j-1}f_{n_\ell-n_{\ell+1}}\right]
f_{n_j}= \left(\mathcal{T}_f^{j-1}f\right)_0 = 
\frac{1}{2\pi}\int_0^{2\pi}f(\theta)^j\,d\theta,
\end{equation}
and by direct calculation using \eqref{eq:fthetadefine},
\begin{equation}
\frac{1}{2\pi}\int_0^{2\pi}f(\theta)^j\,d\theta= 
\frac{1}{2\pi}\int_0^{2\pi}[i(\pi-\theta)]^j\,d\theta = \frac{(i\pi)^j}{j+1}
\end{equation}
for $j$ even (the integral vanishes by symmetry for $j$ odd).
\end{proof}

Now we consider separately each of the terms in $Z_{pj}$ for $j$ even.
\begin{lemma}
\label{asymptotic preperties for trace 2}If $j$ is an even number
and $h_1+\cdots + h_s=j$ while $d_1+\cdots + d_s=p-j$, then
\begin{equation}
\label{asymptotic preperties for trace 2 1}
\lim_{\epsilon \downarrow 0}\frac{M}{N(\epsilon)}\mathrm{tr}\left(
\mat{D}^{d_1}\mat{H}^{h_1}\cdots \mat{D}^{d_s}\mat{H}^{h_s}\right)
{}=\frac{(2\pi)^{j}}{j+1}\int_{-L}^{0}D(\lambda;x,t)^{p-j}\varphi(\lambda)^{2j}
F(\lambda)
\,d\lambda,
\end{equation}
with the limit being uniform with respect to $(x, t)$ in any compact set.
\end{lemma}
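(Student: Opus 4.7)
My plan is to expand the trace explicitly as a multiple sum over matrix indices and then approximate each factor by a leading form tied to a single ``anchor'' index. Writing $j := h_1+\cdots+h_s$ and $J(k):=h_1+\cdots+h_k$ (so $J(0)=0$), the trace unfolds as
\begin{equation*}
\mathrm{tr}\!\left(\mat{D}^{d_1}\mat{H}^{h_1}\cdots\mat{D}^{d_s}\mat{H}^{h_s}\right)=\sum_{n_0,\dots,n_{j-1}=1}^{N(\epsilon)}\prod_{k=1}^{s}D_{n_{J(k-1)}}^{d_k}\prod_{i=0}^{j-1}H_{n_i,n_{i+1}},\quad n_j:=n_0,
\end{equation*}
with $j$ independent summation indices (the $\mat{D}$ factors being diagonal collapse adjacent indices). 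I would single out $n_0$ as the anchor, use Lemma~\ref{lem:Hnm} to replace each $H_{n_i,n_{i+1}}$ by $2i\varphi_{n_i}\varphi_{n_{i+1}}f_{n_i-n_{i+1}}$, and then use the H\"older bounds of Lemma~\ref{lem:D} to replace every $\varphi_{n_i}$ and $D_{n_{J(k-1)}}$ by the anchor values $\varphi_{n_0}$ and $D_{n_0}$.

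Since each index $n_i$ sits in exactly two adjacent $H$-factors and the diagonal exponents sum to $p-j$, these substitutions produce a leading contribution that factorizes as
\begin{equation*}
(2i)^j\sum_{n_0=1}^{N(\epsilon)}D_{n_0}^{p-j}\varphi_{n_0}^{2j}\sum_{n_1,\dots,n_{j-1}=1}^{N(\epsilon)}\prod_{i=0}^{j-1}f_{n_i-n_{i+1}},\quad n_j:=n_0.
\end{equation*}
For $n_0$ in the bulk, the substitution $m_i:=n_i-n_0$ reduces the inner sum to the one in Lemma~\ref{infinite sum convergence}, yielding the value $(i\pi)^j/(j+1)$; since $j$ is even, $(2i)^j(i\pi)^j=(-2\pi)^j=(2\pi)^j$. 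For the outer sum, the quantization rule \eqref{value of eigenvalue given by a fomula} shows that $\{\lamt_n\}$ is $\epsilon$-equally spaced with respect to the measure $F(\lambda)\,d\lambda$, so $\epsilon\sum_{n_0}g(\lamt_{n_0})\to\int_{-L}^0 g(\lambda)F(\lambda)\,d\lambda$ for continuous $g$; combining this Riemann-sum limit with $\epsilon N(\epsilon)\to M$ (equation \eqref{eq:epsilonNepsilon}) and the prefactor $M/N(\epsilon)$ yields the claimed formula.

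The main obstacle will be the rigorous control of the error terms, since the Toeplitz kernel $f_n=1/n$ is only conditionally summable while the H\"older bounds of Lemmas~\ref{lem:Hnm} and \ref{lem:D} grow like $|n_i-n_0|^{\sigma/3}$. My idea is to use subadditivity $|n_k-n_0|^{\sigma/3}\le\sum_{i=0}^{k-1}|n_i-n_{i+1}|^{\sigma/3}$ to localize each H\"older penalty onto a single ``bad'' link; the resulting sums are of the type in Lemma~\ref{infinite sum convergence}, but with one copy of $f$ replaced by the sequence $|n|^{\sigma/3-1}$, whose associated Fourier symbol still lies in $L^2[0,2\pi]$ (since $\sigma<3/2$), so the convolution/Fourier argument underlying that lemma still applies and bounds the penalized sums, giving a total error of order $\epsilon^{\sigma/3}$. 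Boundary effects from the finite range of summation are handled by showing that the contribution of $n_0$ values lying within any fixed distance of $1$ or $N(\epsilon)$ is a vanishing fraction of the total (and similarly that the truncated inner sum converges to the full Toeplitz sum uniformly for bulk $n_0$). Uniformity in $(x,t)$ on compacta then follows immediately, since only $D(\lambda;x,t)$ depends on $(x,t)$ and it does so continuously, while all H\"older constants entering the error bounds may be taken uniform on compact $(x,t)$-regions by Lemma~\ref{Lemma for function F}.
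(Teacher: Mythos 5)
Your proof is correct in substance and arrives at the same answer, but the route through the error estimates is genuinely different from the paper's. The paper first excises the far-off-diagonal region by introducing a cutoff parameter $r\in(0,1)$ and showing that the contribution from index tuples with some $|a_k-a_1|>\epsilon^{-r}$ vanishes (Lemma~\ref{infinite sum convergence} controls the tail of the absolutely convergent Toeplitz convolution). On the remaining near-diagonal region the H\"older penalties of Lemmas~\ref{lem:D} and~\ref{lem:Hnm} are \emph{uniformly} bounded by $\epsilon^{(1-r)\sigma/3}$, so every error sum is dominated by the same absolutely summable kernel $\prod|f_{n_i-n_{i+1}}|$ and one never needs anything beyond Lemma~\ref{infinite sum convergence} itself. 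You instead make all replacements on the full sum at once and absorb the unbounded penalty $|n_i-n_0|^{\sigma/3}$ via subadditivity of $t\mapsto t^{\sigma/3}$ onto single links, producing a \emph{penalized} Toeplitz sum in which one factor of $|f_n|=|n|^{-1}$ is replaced by $|n|^{\sigma/3-1}$. This is valid: since $\sigma\le\tfrac12<\tfrac32$ the penalized sequence is in $\ell^2(\mathbb{Z})$, its symbol behaves like $|\theta|^{-\sigma/3}$ near $\theta=0$, and the product with the $j-1$ logarithmic symbols of the remaining $|f|$-factors is still in $L^1[0,2\pi]$ (this is the actual requirement, slightly stronger than your stated ``$L^2$ symbol''; it holds since $\sigma/3<1$). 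The uniform prefactor $\epsilon^{\sigma/3}$ then kills the error. The trade-off is clear: the paper's cutoff approach reuses Lemma~\ref{infinite sum convergence} verbatim and keeps every step structurally identical; your approach avoids introducing $r$ and the two-stage split, at the cost of checking summability of the modified kernel.

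One place where you should be a bit more careful is the boundary handling. ``$n_0$ within a fixed distance of the endpoints contributes a vanishing fraction'' is true, but it does not by itself give a uniform truncation error for the remaining ``bulk'' $n_0$, since the truncation tail decays only as the distance to the boundary grows. To close this cleanly you should either (i) use the paper's device of a slowly growing window $\epsilon^{-r}$ with $0<r<1$, so that the discarded fraction is $O(\epsilon^{1-r})$ while the truncation tail is evaluated at distance $\ge\epsilon^{-r}\to\infty$, or (ii) observe that the Ces\`aro average over $n_0$ of the truncation tail $T(d(n_0))$ vanishes because $T(d)\to0$ as $d\to\infty$. Either fix is routine, so this is a matter of precision rather than a genuine gap.
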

\begin{proof}
Recalling the matrix elements $D_n$ and $H_{nm}$ of $\mat{D}$ and $\mat{H}$
respectively, we have
\begin{equation}
\mathrm{tr}\left(\mat{D}^{d_1}\mat{H}^{h_1}\cdots\mat{D}^{d_s}\mat{H}^{h_s}
\right) = 
\sum_{a_1,a_2,\dots,a_j=1}^{N(\epsilon)}
\left[\prod_{i=1}^j D_{a_i}^{m_i}\right]
\left[\prod_{\ell=1}^{j-1}H_{a_\ell a_{\ell+1}}\right]H_{a_ja_1},
\label{eq:trformula}
\end{equation}
where the exponents $m_1,\dots,m_j$ are given by
\begin{equation}
m_i:=\begin{cases}
d_1,\quad &i=1\\
d_{b+1},\quad &i=1+h_1+h_2+\cdots + h_b\quad\text{for some $0<b<s$}\\
0,\quad &\text{otherwise}.
\end{cases}
\end{equation}
Note that $m_1+m_2+\cdots + m_j=d_1+d_2+\cdots d_s=p-j$.

Now, the matrix element $H_{nm}$ is relatively small unless 
$n\approx m$, and this
suggests that the $j$-fold sum in \eqref{eq:trformula} 
should concentrate near the diagonal, where
$a_k=a_1$ for all $k$.  Making this precise, given any $r>0$ we will
first show that
\begin{equation}
\lim_{\epsilon\downarrow 0}Z_{\mathrm{OD}}(\epsilon)=0,
\label{eq:ZODlimit}
\end{equation}
where
\begin{equation}
Z_\mathrm{OD}(\epsilon):=\frac{M}{N(\epsilon)}
\mathop{\sum_{a_1,a_2,\dots,a_j=1}^{N(\epsilon)}}_{\exists k: |a_k-a_1|>\epsilon^{-r}}
\left[\prod_{i=1}^j D_{a_i}^{m_i}\right]
\left[\prod_{\ell=1}^{j-1}H_{a_\ell a_{\ell+1}}\right] H_{a_ja_1},
\end{equation}
with the limit being uniform for $(x,t)$ in compact sets.  Indeed, if
$x^2+t^2\le R^2$, then using \eqref{eq:Dnbound} from Lemma~\ref{lem:D}
and \eqref{eq:Hnmbound} from Lemma~\ref{lem:Hnm} we obtain
\begin{equation}
\begin{split}
|Z_\mathrm{OD}(\epsilon)|&\le\frac{MC_{D,R}^{p-j}C_H^j}{N(\epsilon)}
\mathop{\sum_{a_1,a_2,\dots,a_j=1}^{N(\epsilon)}}_{\exists k:|a_k-a_1|>
\epsilon^{-r}}
\left[\prod_{\ell=1}^{j-1}|f_{a_\ell - a_{\ell+1}}|
\right] |f_{a_j-a_1}|
\\
&=\frac{MC_{D,R}^{p-j}C_H^j}{N(\epsilon)}\sum_{a_1=1}^{N(\epsilon)}
\mathop{\sum_{a_2,a_3,\dots,a_j=1}^{N(\epsilon)}}_{\exists k:|a_k-a_1|>\epsilon^{-r}}
\left[\prod_{\ell=1}^{j-1}|f_{a_\ell-a_{\ell+1}}|\right]|f_{a_j-a_1}|
\\
&\le\frac{MC_{D,R}^{p-j}C_H^j}{N(\epsilon)}\sum_{a_1=1}^{N(\epsilon)}
\mathop{\sum_{a_2,a_3,\dots,a_j\in\mathbb{Z}}}_{\exists k:|a_k-a_1|>\epsilon^{-r}}
\left[\prod_{\ell=1}^{j-1}|f_{a_\ell-a_{\ell+1}}|\right]|f_{a_j-a_1}|.
\end{split}
\end{equation}
With the inner sum extended over $\mathbb{Z}^{j-1}$ in this way, it becomes
independent of the outer sum index $a_1$ as can be seen by the substitution
$n_k=a_k-a_1$ for $k=2,3,\dots j$.  Thus
\begin{equation}
|Z_\mathrm{OD}(\epsilon)|\le MC_{D,R}^{p-j}C_H^j
\mathop{\sum_{n_2,n_3,\dots,n_j\in\mathbb{Z}}}_{\exists k:|n_k|>\epsilon^{-r}}
|f_{-n_2}|\left[\prod_{\ell=2}^{j-1}|f_{n_\ell-n_{\ell+1}}|\right]
|f_{n_j}|,
\end{equation}
and the latter upper bound is of course independent of $(x,t)$ with 
$x^2+t^2\le R^2$ and tends to zero for $r>0$ by 
Lemma~\ref{infinite sum convergence}.  

It follows from \eqref{eq:ZODlimit} that
\begin{equation}
\lim_{\epsilon\downarrow 0}
\frac{M}{N(\epsilon)}\mathrm{tr}\left(\mat{D}^{d_1}
\mat{H}^{h_1}\cdots \mat{D}^{d_s}\mat{H}^{h_s}\right) = 
\lim_{\epsilon\downarrow 0}Z_\mathrm{D}(\epsilon)
\label{eq:ZDanswer}
\end{equation}
where the diagonally-concentrated terms are
\begin{equation}
Z_\mathrm{D}(\epsilon):=\frac{M}{N(\epsilon)}
\mathop{\sum_{a_1,a_2,\dots,a_j=1}^{N(\epsilon)}}_{\forall k: 
|a_k-a_1|\le\epsilon^{-r}}
\left[\prod_{i=1}^j D_{a_i}^{m_i}\right]
\left[\prod_{\ell=1}^{j-1}H_{a_\ell a_{\ell+1}}\right]H_{a_ja_1}.
\label{eq:ZDepsilon}
\end{equation}
We will analyze $Z_\mathrm{D}(\epsilon)$ under the additional
assumption that $r<1$.  

The first step is show that if $r<1$ each occurrence of $H_{nm}$ 
in \eqref{eq:ZDepsilon}
may be replaced by $2i\varphi_n\varphi_mf_{n-m}$ without
affecting the limiting value of $Z_\mathrm{D}(\epsilon)$ as 
$\epsilon\downarrow 0$.
Indeed, by making this substitution $j$ times in succession each time keeping track of the error using Lemma~\ref{lem:Hnm} along with the estimates
\eqref{eq:varphinbound} and \eqref{eq:Dnbound} from Lemma~\ref{lem:D},
one sees that 
with $K_R>0$ defined by
\begin{equation}
K_R:=C_{D,R}^{p-j}\sum_{k=1}^j(2C_\varphi^2)^{k-1}C_H^{j-k+1},
\end{equation}
for all $j$-tuples of integers $a_1,\dots,a_j$ between $1$ and $N(\epsilon)$
satisfying $|a_k-a_1|\le\epsilon^{-r}$ for all $k$,
\begin{multline}
\left|\left[\prod_{i=1}^jD_{a_i}^{m_i}\right]\left[\prod_{\ell=1}^{j-1}
H_{a_\ell a_{\ell+1}}\right]H_{a_ja_1}-
(2i)^j\left[\prod_{i=1}^j D_{a_i}^{m_i}\varphi_{a_i}^2\right]
\left[\prod_{\ell=1}^{j-1}f_{a_\ell-a_{\ell+1}}\right]f_{a_j-a_1}
\right|\\
{}\le
K_R\epsilon^{(1-r)\sigma/3}\left[\prod_{\ell=1}^{j-1}|f_{a_\ell-a_{\ell+1}}|
\right]|f_{a_j-a_1}|.
\end{multline}
Therefore, if we define a modification of $Z_\mathrm{D}(\epsilon)$
by
\begin{equation}
Z_\mathrm{D}^\mathrm{I}(\epsilon):=
\frac{(2i)^jM}{N(\epsilon)}
\mathop{\sum_{a_1,a_2,\dots,a_j=1}^{N(\epsilon)}}_{\forall k:
|a_k-a_1|\le\epsilon^{-r}}
\left[\prod_{i=1}^jD_{a_i}^{m_i}\varphi_{a_i}^2\right]
\left[\prod_{\ell=1}^{j-1}f_{a_\ell-a_{\ell+1}}\right]f_{a_j-a_1},
\label{eq:ZDI}
\end{equation}
we have
\begin{equation}
\begin{split}
\left|Z_\mathrm{D}(\epsilon)-Z_\mathrm{D}^\mathrm{I}(\epsilon)\right|
&\le \frac{MK_R\epsilon^{(1-r)\sigma/3}}{N(\epsilon)}
\mathop{\sum_{a_1,a_2,\dots,a_j=1}^{N(\epsilon)}}_{\forall k:|a_k-a_1|\le
\epsilon^{-r}}\left[\prod_{\ell=1}^{j-1}|f_{a_\ell-a_{\ell+1}}|\right]
|f_{a_j-a_1}|\\
&\le\frac{MK_R\epsilon^{(1-r)\sigma/3}}{N(\epsilon)}
\sum_{a_1=1}^{N(\epsilon)}\left(\sum_{a_2,a_3,\dots,a_j\in\mathbb{Z}}
\left[\prod_{\ell=1}^{j-1}|f_{a_\ell-a_{\ell+1}}|\right]
|f_{a_j-a_1}|\right).
\end{split}
\end{equation}
By the substitution
$n_\ell=a_\ell-a_1$ one sees that the inner sum is independent of $a_1$, 
and it is finite by Lemma~\ref{infinite sum convergence}.
Since $\sigma>0$ and $r<1$, we therefore have
\begin{equation}
\lim_{\epsilon\downarrow 0}Z_\mathrm{D}(\epsilon) = 
\lim_{\epsilon\downarrow 0}Z_\mathrm{D}^\mathrm{I}(\epsilon)
\end{equation}
uniformly for $x^2+t^2\le R^2$.

The second step is to show that if $r<1$ we may replace $D_{a_i}^{m_i}\varphi_{a_i}^2$ with $D_{a_1}^{m_i}\varphi_{a_1}^2$ for each $i$ in 
\eqref{eq:ZDI} without changing the
limiting value of $Z_\mathrm{D}^\mathrm{I}(\epsilon)$.  Indeed, applying
Lemma~\ref{lem:D} we see that with $K_R^\mathrm{I}>0$ defined by
\begin{equation}
K_R^\mathrm{I}:=(p+j)C_{D,R}^{p-j}C_\varphi^{2j},
\end{equation}
we see that for all $j$-tuples of integers $a_1,\dots,a_j$ between 
$1$ and $N(\epsilon)$ satisfying $|a_k-a_1|\le \epsilon^{-r}$
for all $k$,
\begin{equation}
\left|\prod_{i=1}^jD_{a_i}^{m_i}\varphi_{a_i}^2 - D_{a_1}^{p-j}\varphi_{a_1}^{2j}
\right|\le K_R^\mathrm{I}\epsilon^{(1-r)\sigma/3}.
\end{equation}
Hence, definining a subsequent modification of $Z_\mathrm{D}^\mathrm{I}(\epsilon)$
by
\begin{equation}
Z_\mathrm{D}^\mathrm{II}(\epsilon):=\frac{(2i)^jM}{N(\epsilon)}
\sum_{a_1=1}^{N(\epsilon)}D_{a_1}^{p-j}\varphi_{a_1}^{2j}
\mathop{\sum_{a_2,a_3,\dots,a_j=1}^{N(\epsilon)}}_{\forall k:|a_k-a_1|\le\epsilon^{-r}}
\left[\prod_{\ell=1}^{j-1}f_{a_\ell-a_{\ell+1}}\right]f_{a_j-a_1},
\end{equation}
we see that
\begin{equation}
\begin{split}
\left|Z_\mathrm{D}^\mathrm{I}(\epsilon)-Z_\mathrm{D}^\mathrm{II}(\epsilon)
\right|&\le \frac{2^jMK_R^\mathrm{I}\epsilon^{(1-r)\sigma/3}}{N(\epsilon)}
\sum_{a_1=1}^{N(\epsilon)}\mathop{\sum_{a_2,a_3,\dots,a_j=1}^{N(\epsilon)}}_{\forall k|a_k-a_1|\le\epsilon^{-r}}\left[\prod_{\ell=1}^{j-1}|f_{a_\ell-a_{\ell+1}}|
\right]|f_{a_j-a_1}|\\
&\le
\frac{2^jMK_R^\mathrm{I}\epsilon^{(1-r)\sigma/3}}{N(\epsilon)}
\sum_{a_1=1}^{N(\epsilon)}\left(
\sum_{a_2,a_3,\dots,a_j\in\mathbb{Z}}\left[\prod_{\ell=1}^{j-1}|f_{a_\ell-a_{\ell+1}}|
\right]|f_{a_j-a_1}|\right),
\end{split}
\end{equation}
and so exactly as before
\begin{equation}
\lim_{\epsilon\downarrow 0}Z_\mathrm{D}^\mathrm{I}(\epsilon)=
\lim_{\epsilon\downarrow 0}Z_\mathrm{D}^\mathrm{II}(\epsilon)
\end{equation}
uniformly for $x^2+t^2\le R^2$.

The third step is to show that if $r<1$ 
one may neglect a small fraction of the
terms in the outer sum corresponding to $a_1\le 1+\epsilon^{-r}$
and $a_1\ge N(\epsilon)-\epsilon^{-r}$ without changing the limiting
value of $Z_\mathrm{D}^\mathrm{II}(\epsilon)$.  Indeed, defining the
index set
\begin{equation}
S_\epsilon:=\{n\in\mathbb{Z},\;
1+\epsilon^{-r}<n<N(\epsilon)-\epsilon^{-r}\},
\end{equation}
and then setting
\begin{equation}
Z_\mathrm{D}^\mathrm{III}(\epsilon):=\frac{(2i)^jM}{N(\epsilon)}
\sum_{a_1\in S_\epsilon} D_{a_1}^{p-j}\varphi_{a_1}^{2j}
\mathop{\sum_{a_2,a_3,\dots a_j=1}^{N(\epsilon)}}_{\forall k:|a_k-a_1|\le\epsilon^{-r}}
\left[\prod_{\ell=1}^{j-1}f_{a_\ell-a_{\ell+1}}\right]f_{a_j-a_1},
\label{eq:ZDIII}
\end{equation}
we easily obtain from \eqref{eq:varphinbound} and \eqref{eq:Dnbound}
in Lemma~\ref{lem:D} that 
\begin{equation}
\begin{split}
\left|Z_\mathrm{D}^\mathrm{II}(\epsilon)-Z_\mathrm{D}^\mathrm{III}(\epsilon)
\right|&\le \frac{2^jMC_{D,R}^{p-j}C_\varphi^{2j}}{N(\epsilon)}
\mathop{\sum_{a_1=1}^{N(\epsilon)}}_{a_1\not\in S_\epsilon}\mathop{\sum_{a_2,a_3,\dots,a_j=1}^{N(\epsilon)}}_{\forall k:|a_k-a_1|\le\epsilon^{-r}}
\left[\prod_{\ell=1}^{j-1}|f_{a_\ell-a_{\ell+1}}|\right]|f_{a_j-a_1}|\\
&\le
\frac{2^jMC_{D,R}^{p-j}C_\varphi^{2j}}{N(\epsilon)}
\mathop{\sum_{a_1=1}^{N(\epsilon)}}_{a_1\not\in S_\epsilon}
\sum_{a_2,a_3,\dots,a_j\in\mathbb{Z}}
\left[\prod_{\ell=1}^{j-1}|f_{a_\ell-a_{\ell+1}}|\right]|f_{a_j-a_1}|.
\end{split}
\end{equation}
But the inner sum is independent of $a_1$ and is convergent by 
Lemma~\ref{infinite sum convergence} and the outer sum has
$O(\epsilon^{-r})$
terms while $N(\epsilon)$ is proportional to $\epsilon^{-1}$, so with
$r<1$ we have
\begin{equation}
\lim_{\epsilon\downarrow 0}Z_\mathrm{D}^\mathrm{II}(\epsilon)=
\lim_{\epsilon\downarrow 0}Z_\mathrm{D}^\mathrm{III}(\epsilon)
\label{eq:ZDII-IIIlimit}
\end{equation}
uniformly for $x^2+t^2\le R^2$.  

The next step in analyzing $Z_\mathrm{D}(\epsilon)$ is to 
deal with the inner sum in the definition \eqref{eq:ZDIII} of
$Z_\mathrm{D}^\mathrm{III}(\epsilon)$.  
Taking
into account the conditions on $a_1$ in the outer sum,
it is obvious that the conditions
$1\le a_k\le N(\epsilon)$ are superfluous in the inner sum:
\begin{equation}
Z_\mathrm{D}^\mathrm{III}(\epsilon)=
\frac{(2i)^jM}{N(\epsilon)}\sum_{a_1\in S_\epsilon}
D_{a_1}^{p-j}\varphi_{a_1}^{2j}\mathop{\sum_{a_2,a_3,\dots,a_j\in\mathbb{Z}}}_{\forall k:|a_k-a_1|\le\epsilon^{-r}}
\left[\prod_{\ell=1}^{j-1}f_{a_\ell-a_{\ell+1}}\right]f_{a_j-a_1}.
\end{equation}
By introducing the differences $n_k=a_k-a_1$ it now becomes clear that
the inner sum is independent of $a_1$:
\begin{equation}
Z_\mathrm{D}^\mathrm{III}(\epsilon)=
\frac{(2i)^jM}{N(\epsilon)}\left(\sum_{a_1\in S_\epsilon}
D_{a_1}^{p-j}\varphi_{a_1}^{2j}\right)
\left(\mathop{\sum_{n_2,n_3,\dots,n_j\in\mathbb{Z}}}_{\forall k:|n_k|\le\epsilon^{-r}}
f_{-n_2}\left[\prod_{\ell=2}^{j-1}f_{a_\ell-a_{\ell+1}}\right]f_{n_j}\right).
\end{equation}
Now, according to Lemma~\ref{infinite sum convergence}, 
the latter sum has the limit $(i\pi)^j/(j+1)$ as $\epsilon\downarrow 0$
with $r>0$,
so
\begin{equation}
\lim_{\epsilon\downarrow 0}Z_\mathrm{D}^\mathrm{III}(\epsilon) = 
\lim_{\epsilon\downarrow 0}Z_\mathrm{D}^\mathrm{IV}(\epsilon),
\label{eq:ZDIII-IVlimit}
\end{equation}
uniformly for $x^2+t^2\le R^2$, where
\begin{equation}
Z_\mathrm{D}^\mathrm{IV}(\epsilon):=\frac{(2\pi)^j}{j+1}\cdot
\frac{M}{N(\epsilon)}\sum_{a_1\in S_\epsilon}D_{a_1}^{p-j}\varphi_{a_1}^{2j}.
\end{equation}

The final step in the analysis of $Z_\mathrm{D}(\epsilon)$ is simply to 
evaluate the limit on the right-hand side of 
\label{eq:ZDIII-IVlimit} by recognizing the sum as a Riemann sum
for an integral:
\begin{equation}
\lim_{\epsilon\downarrow 0}Z_\mathrm{D}(\epsilon)=
\lim_{\epsilon\downarrow 0}Z_\mathrm{D}^{\mathrm IV}(\epsilon) = 
\frac{(2\pi)^j}{j+1}\int_{-L}^0 D(\lambda;x,t)^{p-j}\varphi(\lambda)^{2j}F(\lambda)\,d\lambda.
\end{equation}
Note that since the summand $D_{a_1}^{p-j}\varphi_{a_1}^{2j}$ 
is polynomial in $x$ and $t$, the convergence of the Riemann sum is uniform
for $(x,t)$ in compact sets.  Comparing with
\eqref{eq:ZDanswer} we see that the proof is complete.
\end{proof}

Now we may complete the proof of Proposition~\ref{asymptotic preperties 
for trace 3}.  Lemma~\ref{asymptotic preperties for trace 2} shows that
each of the terms in the formula \eqref{eq:Zpj} for $Z_{pj}$ has the same
limit as $\epsilon\downarrow 0$.  Therefore, for all even $j$,
\begin{equation}
\begin{split}
\lim_{\epsilon\downarrow 0}Z_{pj} &= \mathop{\sum_{d_1+d_2+\cdots d_s=p-j}}_{h_1+h_2+\cdots + h_s=j} 
\frac{(2\pi)^j}{j+1}\int_{-L}^0 D(\lambda;x,t)^{p-j}\varphi(\lambda)^{2j}F(\lambda)\,d\lambda \\
&= \binom{p}{j}\frac{(2\pi)^j}{j+1}\int_{-L}^0 D(\lambda;x,t)^{p-j}\varphi(\lambda)^{2j}F(\lambda)\,d\lambda.
\end{split}
\end{equation}
Combining this result with Lemma~\ref{asymptotic preperties for trace 1}
and the formula \eqref{small dispersion matrix calculation 8} for the
$p^\mathrm{th}$ moment, we obtain
\begin{equation}
Q_p=\lim_{\epsilon\downarrow 0}\int_\mathbb{R}\alpha^pd\mu_\epsilon(\alpha)=
\sum_{k=0}^{\lfloor p/2\rfloor}\binom{p}{2k}\frac{(2\pi)^{2k}}{2k+1}
\int_{-L}^0 D(\lambda;x,t)^{p-2k}\varphi(\lambda)^{4k}F(\lambda)\,d\lambda,
\end{equation}
uniformly for $(x,t)$ in compact sets.
Now we apply the identity
\begin{equation}
\label{identity for theorem 4-1}
\sum_{k=0}^{\lfloor p/2 \rfloor} \frac{1}{2k+1} \binom{p}{2k} a^{2k}b^{p-2k} = \frac{(b+a)^{p+1} - (b-a)^{p+1}}{2a(1+p)},
\end{equation}
holding for any integer $p\ge 0$ and real numbers $a$ and $b$.  (This identity can be most easily obtained by expanding the binomials on the right-hand side.)
Recalling the definitions \eqref{eq:Dfuncdef} and \eqref{eq:varphidef}
of $D(\lambda;x,t)$ and $\varphi(\lambda)$, and using the fact that
$x_\pm(\lambda)=\pm \pi F(\lambda)-\gamma(\lambda)$
then completes the proof of Proposition~\ref{asymptotic preperties for trace 3}.

\subsection{Convergence of measures and locally 
uniform convergence of $\Ut$.
Proof of Proposition~\ref{limit of sx theorem}}
\label{sec:measures}
Recall the 
measures $\mu_\epsilon$ defined by 
\eqref{eq:ncm}.
\begin{lemma}
For each nonnegative integer $p$,
\begin{equation}
\lim_{\epsilon\downarrow 0}
\int_\mathbb{R} \alpha^p\,d\mu_\epsilon(\alpha) = \int_\mathbb{R}
\alpha^p\,d\mu(\alpha)
\end{equation}
where $\mu$ is the absolutely continuous (with respect to Lebesgue
measure on $\mathbb{R}$) measure defined by
$d\mu(\alpha)=G(\alpha;x,t)\,d\alpha$,
and the compactly
supported integrable density function $G(\alpha;x,t)$ is given by
\eqref{limit of sx 2}.  The limit is uniform with respect to $(x,t)$
in compact sets.
Also, like each $\mu_\epsilon$, $\mu$ is a measure with mass $M$.
\label{lem:limitingmoments}
\end{lemma}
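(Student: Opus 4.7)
The plan is to prove the lemma by a direct computation: write out the $p$th moment of $\mu$ using the explicit formula for $G(\alpha;x,t)$, interchange the order of integration via Fubini, and verify that the result equals precisely the expression $Q_p$ appearing in Proposition~\ref{asymptotic preperties for trace 3}. Because that proposition already supplies the uniform (in $(x,t)$ on compacts) convergence of $\int_\mathbb{R}\alpha^p\,d\mu_\epsilon(\alpha)$ to $Q_p$, the lemma will follow as soon as this identification is made. No weak-convergence or tightness argument is needed; the lemma is only an equality of limiting moments, not a method-of-moments uniqueness statement.

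First I would set up notation: let $a(\lambda):=-2\lambda(x+2\lambda t-x_+(\lambda))$ and $b(\lambda):=-2\lambda(x+2\lambda t-x_-(\lambda))$. Since $\lambda<0$ and $x_-(\lambda)<x_+(\lambda)$, we have $a(\lambda)<b(\lambda)$, so $G(\alpha;x,t)$ is nonnegative and may be rewritten as
\begin{equation*}
G(\alpha;x,t)=\frac{1}{4\pi}\int_{-L}^{0}\chi_{[a(\lambda),b(\lambda)]}(\alpha)\,\frac{d\lambda}{-\lambda}.
\end{equation*}
By Lemma~\ref{Lemma for function F}, $\lambda x_\pm(\lambda)$ is bounded on $[-L,0)$, so all intervals $[a(\lambda),b(\lambda)]$ lie in a single bounded set and $G(\cdot;x,t)$ is compactly supported. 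The apparently singular factor $1/(-\lambda)$ at the right endpoint is harmless for $\alpha\neq 0$, because $b(\lambda)-a(\lambda)=-2\lambda\cdot 2\pi F(\lambda)$ vanishes as $\lambda\uparrow 0$, so for any fixed $\alpha\neq 0$ the indicator is eventually zero near $\lambda=0$.

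Next I would apply Tonelli's theorem to pass the $\alpha$-integral inside:
\begin{equation*}
\int_{\mathbb{R}}\alpha^{p}G(\alpha;x,t)\,d\alpha=\frac{1}{4\pi(p+1)}\int_{-L}^{0}\frac{b(\lambda)^{p+1}-a(\lambda)^{p+1}}{-\lambda}\,d\lambda.
\end{equation*}
Factoring $b(\lambda)^{p+1}-a(\lambda)^{p+1}=(-2\lambda)^{p+1}\bigl[(x+2\lambda t-x_{-}(\lambda))^{p+1}-(x+2\lambda t-x_{+}(\lambda))^{p+1}\bigr]$ and using $(-2\lambda)^{p+1}/(-\lambda)=2(-2\lambda)^{p}$ yields
\begin{equation*}
\int_{\mathbb{R}}\alpha^{p}G(\alpha;x,t)\,d\alpha=\frac{1}{2\pi(p+1)}\int_{-L}^{0}\bigl[(x+2\lambda t-x_{-}(\lambda))^{p+1}-(x+2\lambda t-x_{+}(\lambda))^{p+1}\bigr](-2\lambda)^{p}\,d\lambda,
\end{equation*}
which is exactly $Q_p$ as defined in \eqref{eq:hp}. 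Tonelli is legitimate because the integrand is nonnegative once we take absolute values on the polynomial factors, and the resulting iterated integral is finite by the boundedness of $\lambda x_\pm(\lambda)$ and the integrability of $F$ established in \eqref{eq:intF}. The $p=0$ specialization collapses to $\int F(\lambda)\,d\lambda=M$ by \eqref{eq:intF}, which simultaneously confirms that $\mu$ has mass $M$ (and in particular that $G$ is integrable, so $\mu$ is a bona fide finite measure).

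There is no real obstacle here: the lemma is essentially the observation that the solution of the moment problem presented in Matsuno's style produces precisely the layered measure $\mu$, and the matching of $Q_p$ with the moments of $G$ is a mechanical binomial-and-Fubini computation. The only subtlety is justifying the interchange near $\lambda=0$, which is handled by the compensating factor $(-2\lambda)^{p+1}$ in the $\alpha$-integral; combined with Proposition~\ref{asymptotic preperties for trace 3} the conclusion, including uniformity in $(x,t)$ on compact sets, is immediate.
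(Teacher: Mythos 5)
Your argument is correct, and it takes a genuinely different route from the paper's. You verify the identity $\int_\mathbb{R}\alpha^p\,d\mu(\alpha)=Q_p$ directly from the layered formula for $G$: interchange $d\alpha$ and $d\lambda$ by Fubini--Tonelli (justified because $(b(\lambda)-a(\lambda))/(-\lambda)=4\pi F(\lambda)$ is integrable and $a,b$ are uniformly bounded), integrate the monomial over $[a(\lambda),b(\lambda)]$, and simplify. The paper instead runs the logic in the opposite direction: it treats the $Q_p$ as given moments, forms the putative characteristic function $\hat G(\xi;x,t)=\sum_p(-i\xi)^pQ_p/p!$, proves this entire-of-exponential-type series sums in closed form, and inverts the Fourier transform to \emph{reconstruct} the formula for $G$. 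Both reach the same identification; yours is shorter and more elementary, essentially one application of Fubini plus the binomial-free factoring $b^{p+1}-a^{p+1}=(-2\lambda)^{p+1}[(x+2\lambda t-x_-)^{p+1}-(x+2\lambda t-x_+)^{p+1}]$. What the paper's roundabout constructive route buys is the expository point, emphasized immediately after the proof, that recovering $\mu$ from its limiting moments mirrors almost line-by-line Matsuno's recovery of the eigenvalue density $F$ from the trace identities on the direct-scattering side -- a structural symmetry the authors want to highlight. Your approach forgoes that parallel but is otherwise a clean and complete proof of everything the lemma asserts (the $p=0$ case delivers mass $M$ via \eqref{eq:intF}, compact support and nonnegativity of $G$ follow from Lemma~\ref{Lemma for function F}, and uniformity in $(x,t)$ is inherited from Proposition~\ref{asymptotic preperties for trace 3}).
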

\begin{proof}
Recalling Proposition~\ref{asymptotic preperties for trace 3},
we first show that the given measure $\mu$ satisfies
\begin{equation}
\int_{\mathbb{R}}\alpha^p\,d\mu(\alpha) = Q_p,
\label{eq:limitingmoment}
\end{equation}
where $Q_p$ is given by \eqref{eq:hp},
for all nonnegative $p\in\mathbb{Z}$.
Equivalently, we
may construct a measure with the desired moments as follows:
the characteristic
function of the measure $\mu$ is the Fourier transform
\begin{equation}
\hat{G}(\xi;x,t):=\int_{\mathbb{R}}G(\alpha;x,t)e^{-i\alpha\xi}\,d\alpha,
\end{equation}
and this function necessarily has the desired moments $\{Q_p\}_{p=0}^\infty$
as its derivatives at $\xi=0$:
\begin{equation}
\frac{d^p\hat{G}}{d\xi^p}(0;x,t) = (-i)^pQ_p.
\end{equation}
So $\hat{G}(\xi;x,t)$ has the Taylor series
\begin{equation}
\hat{G}(\xi;x,t) = \sum_{p=0}^\infty\frac{(-i\xi)^p}{p!}Q_p.
\label{eq:GhatTaylor}
\end{equation}
Now from the obvious inequality $|x+2\lambda t-x_\pm(\lambda)|\le
|x-x_0|+2L|t|+2\pi F(\lambda)$,
we obtain 
\begin{equation}
\begin{split}
|Q_p|&\le \frac{1}{\pi(p+1)}\int_{-L}^0\left(|x-x_0|+2L|t|+2\pi F(\lambda)
\right)^{p+1}(-2\lambda)^p\,d\lambda\\
&\le \frac{1}{\pi(p+1)}\int_{-L}^0
(2L|x-x_0|+4L^2|t|-4\pi\lambda F(\lambda))^p(|x-x_0|+2L|t|+2\pi F(\lambda))\,d\lambda.
\end{split}
\end{equation}
Also, from Lemma~\ref{Lemma for function F}, there is a constant
$K>0$ such that $0\le -\lambda F(\lambda)\le K$, so 
for $(x-x_0)^2+t^2\le R^2$,
\begin{equation}
\begin{split}
|Q_p|&\le \frac{(2LR+4L^2R+4\pi K)^p}{\pi (p+1)}\int_{-L}^0 \left(
|x-x_0|+2L|t|
+2\pi F(\lambda)\right)\,d\lambda\\
&\le \frac{(LR +2L^2R+2\pi M)}{\pi (p+1)}(2LR+4L^2R+4\pi K)^p\\
&\le \frac{1}{\pi}(LR+2L^2R+2\pi M)(2LR+4L^2R+4\pi K)^p,
\end{split}
\end{equation}
where in the last step we used \eqref{eq:intF}.
This inequality implies 
that the Taylor series \eqref{eq:GhatTaylor} converges for
all $\xi\in\mathbb{C}$ to an entire function of exponential type.

Now we will sum the Taylor series \eqref{eq:GhatTaylor} in closed form
by substituting from the formula \eqref{eq:hp} and 
exchanging the order of summation and integration.  Indeed, since
\begin{equation}
\sum_{p=0}^\infty \frac{(-i\xi)^p}{p!}\cdot\frac{(-2\lambda)^p(x+2\lambda t-x_\pm(\lambda))^{p+1}}{p+1} = 
\frac{e^{2i\xi\lambda[x+2\lambda t-x_\pm(\lambda)]}-1}{2i\xi\lambda},
\end{equation}
we obtain the formula
\begin{equation}
\hat{G}(\xi;x,t)=\int_{-L}^0\frac{e^{2i\xi\lambda[x+2\lambda t-x_-(\lambda)]}-
e^{2i\xi\lambda[x+2\lambda t-x_+(\lambda)]}}{4\pi i\xi\lambda}\,d\lambda.
\end{equation}
Computing the inverse Fourier transform 
\begin{equation}
G(\alpha;x,t) = \frac{1}{2\pi}\int_\mathbb{R}\hat{G}(\xi;x,t)e^{i\alpha\xi}\,d\xi
\end{equation}
by exchanging the order of integration leads directly to the claimed
formula \eqref{limit of sx 2}.

It is obvious that $G(\alpha;x,t)$ is a nonnegative function, and since
by Lemma~\ref{Lemma for function F}
\begin{equation}
\inf_{-L<\lambda<0}-2\lambda(x+2\lambda t-x_+(\lambda))>-\infty
\quad\text{and}\quad
\sup_{-L<\lambda<0}-2\lambda (x+2\lambda t-x_-(\lambda))<+\infty
\end{equation}
for every $(x,t)$, it is clear that $G(\alpha;x,t)$ has compact support.
It is also straightforward to verify that $\mu$ has mass $M$:
\begin{equation}
\begin{split}
\int_{\mathbb{R}}d\mu(\alpha)&=\int_{\mathbb{R}}G(\alpha;x,t)\,d\alpha\\
&=-\frac{1}{4\pi}\int_{\mathbb{R}}\int_{-L}^0
\chi_{[-2\lambda(x+2\lambda t-x_+(\lambda)),-2\lambda(x+2\lambda t-x_-(\lambda))]}(\alpha)\frac{d\lambda}{\lambda}\,d\alpha\\
&=-\frac{1}{4\pi}\int_{-L}^0\frac{1}{\lambda}
\int_{\mathbb{R}}\chi_{[-2\lambda(x+2\lambda t-x_+(\lambda)),-2\lambda(x+2\lambda t-x_-(\lambda))]}(\alpha)\,d\alpha\,d\lambda\\
&=-\frac{1}{4\pi}\int_{-L}^0\frac{1}{\lambda}\int_{-2\lambda(x+2\lambda t-x_+(\lambda))}^{-2\lambda(x+2\lambda t-x_-(\lambda))}\,d\alpha\,d\lambda\\
&=\int_{-L}^0F(\lambda)\,d\lambda\\
&=M,
\end{split}
\end{equation}
according to \eqref{eq:intF}.
Therefore $\mu$ is indeed an absolutely
continuous compactly supported (nonnegative) measure of mass $M$.
\end{proof}
Note that the reconstruction of the the measure $\mu$ from its moments
is virtually the same calculation as took place on the direct scattering
side in our discussion of Matsuno's method in \S\ref{sec:MatsunosMethod}.

\begin{lemma}
There is a compact interval $\Omega\subset\mathbb{R}$ containing the
support of all of the measures $\{\mu_\epsilon\}_{\epsilon>0}$ as well
as that of the measure $\mu$, and $\Omega$ may be chosen independent
of $(x,t)$ in any given compact set.
\label{lem:tight}
\end{lemma}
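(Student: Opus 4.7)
My plan is to bound the supports of both $\mu$ and $\{\mu_\epsilon\}_{\epsilon>0}$ by constants depending only on the compact set in which $(x,t)$ is allowed to vary, then take $\Omega$ to be a single compact interval containing both bounds.

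For the limiting measure $\mu$, I will argue directly from the formula \eqref{limit of sx 2}. The relations $x_\pm(\lambda)=-\gamma(\lambda)\pm\pi F(\lambda)$ (implicit in the definitions \eqref{eq:Fadmissibledef} and \eqref{formula for gamma in terms of x}) let me rewrite the endpoints of the interval in the indicator function as $D(\lambda;x,t)\mp 2\pi\varphi(\lambda)^2$, using the fact that $-2\lambda F(\lambda)=2\varphi(\lambda)^2$. By Lemma~\ref{Lemma for function F}, both $\varphi(\,\cdot\,)$ and $D(\,\cdot\,;x,t)$ are uniformly bounded on $(-L,0)$, and the bound on $|D|$ holds uniformly in $(x,t)$ on compact sets. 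Thus the density $G(\,\cdot\,;x,t)$ is supported in a compact interval $\Omega_\mu$ that can be chosen uniformly.

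For the empirical measures $\mu_\epsilon$, the support coincides with the spectrum of the Hermitian matrix $\tilde{\mat{A}}_\epsilon$, so it suffices to establish a uniform bound on $\|\tilde{\mat{A}}_\epsilon\|_\mathrm{op}$. I will decompose
\begin{equation*}
\tilde{\mat{A}}_\epsilon = \mat{D} + 2i\,\mat{V}\mat{T}\mat{V} + \mat{E},
\end{equation*}
where $\mat{D}=\mathrm{diag}(D_1,\dots,D_{N(\epsilon)})$, $\mat{V}=\mathrm{diag}(\varphi_1,\dots,\varphi_{N(\epsilon)})$, $\mat{T}$ is the $N(\epsilon)\times N(\epsilon)$ principal submatrix of the Toeplitz operator $\mathcal{T}_f$ from \eqref{eq:ToeplitzOperator} (with symbol $f(\theta)=i(\pi-\theta)$), and $\mat{E}$ is the Hermitian error matrix with vanishing diagonal and off-diagonal entries $E_{nm}:=H_{nm}-2i\varphi_n\varphi_m f_{n-m}$. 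Lemma~\ref{lem:D} gives $\|\mat{D}\|\le C_{D,R}$ and $\|\mat{V}\|\le C_\varphi$. Because $\mat{T}$ is a compression of $\mathcal{T}_f$ and $\|\mathcal{T}_f\|_\mathrm{op}=\|f\|_{L^\infty(\mathbb{T})}=\pi$, we have $\|2i\mat{V}\mat{T}\mat{V}\|\le 2\pi C_\varphi^2$.

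The main obstacle is controlling $\|\mat{E}\|_\mathrm{op}$ uniformly in $\epsilon$. Lemma~\ref{lem:Hnm} gives the entrywise estimate $|E_{nm}|\le C_H\epsilon^{\sigma/3}|n-m|^{\sigma/3-1}$ for $n\ne m$. Since $\mat{E}$ is Hermitian, Schur's test yields
\begin{equation*}
\|\mat{E}\|_\mathrm{op}\le \sup_{1\le n\le N(\epsilon)}\sum_{m\ne n}|E_{nm}| \le \frac{6C_H}{\sigma}\bigl(\epsilon N(\epsilon)\bigr)^{\sigma/3}\le \frac{6C_H}{\sigma}M^{\sigma/3},
\end{equation*}
where the key point is that the decay factor $\epsilon^{\sigma/3}$ is exactly compensated by the $N(\epsilon)^{\sigma/3}$ growth of the partial sum $\sum_{k=1}^{N(\epsilon)-1}k^{\sigma/3-1}$ (valid because $\sigma/3-1>-1$). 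Combining the three pieces gives $\|\tilde{\mat{A}}_\epsilon\|_\mathrm{op}\le K$ for a constant $K$ depending only on the compact set in which $(x,t)$ lies, so $\mathrm{supp}(\mu_\epsilon)\subset[-K,K]$. Taking $\Omega$ to be any compact interval containing $\Omega_\mu\cup[-K,K]$ then finishes the argument.
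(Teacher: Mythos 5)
Your proof is correct, and it uses the same matrix decomposition as the paper (your $2i\,\mat{V}\mat{T}\mat{V}$ is literally the paper's $\mat{B}\mat{T}\mat{B}$, since $\mat{B}=e^{i\pi/4}\sqrt{2}\,\mat{V}$), the same Toeplitz-compression bound for the middle piece, and the same uniform bound on $\mat{D}$. Where you diverge significantly is in controlling the error matrix $\mat{E}$. The paper bounds $\|\mat{E}\|_2$ by the Hilbert--Schmidt norm, recognizes $\|\mat{E}\|_{\mathrm{HS}}^2$ as a Riemann sum converging to a certain double integral over $[-L,0]^2$, and then spends roughly a page of careful regional estimates (splitting the square into six zones $S_{\mathrm{D}}$, $S_{\mathrm{OD}}$, $B_{\mathrm{D}}$, $B_{\mathrm{OD}}$, $H_{\mathrm{D}}$, $H_{\mathrm{OD}}$) to verify that the integral is finite. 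You instead observe that Lemma~\ref{lem:Hnm} already gives the entrywise bound $|E_{nm}|\le C_H\epsilon^{\sigma/3}|n-m|^{\sigma/3-1}$, apply Schur's test to the Hermitian matrix $\mat{E}$, and note that the row sum $\sum_{m\neq n}|n-m|^{\sigma/3-1}=O(N(\epsilon)^{\sigma/3})$ exactly absorbs the $\epsilon^{\sigma/3}$ prefactor via $\epsilon N(\epsilon)\le M$. This is substantially shorter and more elementary: it avoids the Riemann-sum-to-integral passage entirely (which, strictly speaking, requires a separate justification that the limit equals the integral), it reuses a lemma already proved for Proposition~\ref{asymptotic preperties for trace 3}, and it yields an explicit $\epsilon$-uniform bound $\|\mat{E}\|_\mathrm{op}\le (6C_H/\sigma)M^{\sigma/3}$ rather than only an asymptotic statement. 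The trade-off is that the paper's Hilbert--Schmidt computation also produces structural information about $\mat{E}$ (its limiting ``kernel'' $e(\kappa,\lambda)$) that could be useful elsewhere, but for the purpose of this lemma your Schur-test argument is cleaner. Your treatment of the support of $\mu$ is also correct and equivalent to the paper's.
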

\begin{proof}
  Since $\mu$ has compact support certainly contained within the interval
\begin{equation}
\inf_{-L<\lambda<0}[2\lambda x_+(\lambda)] -2L|x|-4L^2|t|
\le\alpha\le
\sup_{-L<\lambda<0}[2\lambda x_-(\lambda)] +
2L|x|+4L^2|t|
\end{equation}
that is clearly bounded uniformly for $(x,t)$ in any compact set,
it is enough to show that the
  support of $\mu_\epsilon$ is uniformly bounded as
  $\epsilon\downarrow 0$.  But by definition of $\mu_\epsilon$ this is
  equivalent to showing that the eigenvalue of $\mat{A}_\epsilon$
  with the largest magnitude remains uniformly bounded as
  $\epsilon\downarrow 0$.

Since the matrix $\tilde{\mat{A}}_{\epsilon}$ is Hermitian, we have
\begin{equation}
\label{proof of the bounded eigenvalues of W
0-1}\|\tilde{\mat{A}}_{\epsilon}\|_{2}=\max_{1\leq j \leq N(\epsilon)}|\alpha_{j}|,
\end{equation}
so to prove that the eigenvalue of $\tilde{\mat{A}}_{\epsilon}$ with the
largest magnitude remains uniformly bounded, it is completely
equivalent to prove that the $\ell^{2}$ (induced) matrix norm of
$\tilde{\mat{A}}_{\epsilon}$ is uniformly bounded as $\epsilon\downarrow 0$
independent of $(x,t)$ in any given compact set. 

Recalling the decomposition $\tilde{\mat{A}}_\epsilon = \mat{D}+\mat{H}$ from
the proof of Proposition~\ref{asymptotic preperties for trace 3} given
in \S\ref{sec:moments}, the triangle inequality gives $\|\tilde{\mat{A}}_\epsilon\|_2
\le \|\mat{D}\|_2 +\|\mat{H}\|_2$, and since $\mat{D}$ is diagonal,
\begin{equation}
\begin{split}
\|\mat{D}\|_2 &= 
\max_{1\le n\le N(\epsilon)}|2\lamt_n(x+2\lamt_nt+\gamma(\lamt_n))|\\
& \le\sup_{-L<\lambda<0}|2\lambda(x+2\lambda t +\gamma(\lambda))|\\
&\le \sup_{-L<\lambda<0}|2\lambda\gamma(\lambda)| + 2L|x|+4L^2|t|,
\end{split}
\end{equation}
so since $\lambda\gamma(\lambda)$ is bounded according to
Lemma~\ref{Lemma for function F}, and $\mat{H}$ is independent of $x$ and $t$,
it is sufficient to show that $\|\mat{H}\|_2$ remains bounded as 
$\epsilon\downarrow 0$.

To estimate $\|\mat{H}\|_2$, we write $\mat{H}$ in the following form:
$\mat{H}=\mat{B}\mat{T}\mat{B} +\mat{E}$ where 
\begin{equation}
\mat{B}=\mathrm{diag}\left(e^{i\pi/4}\sqrt{-2\lamt_1F(\lamt_1)
\vphantom{-2\lamt_{N(\epsilon)}F(\lamt_{N(\epsilon)})}},\dots,
e^{i\pi/4}\sqrt{-2\lamt_{N(\epsilon)}F(\lamt_{N(\epsilon)})}\right),
\end{equation}
and 
$\mat{T}$ is the $N(\epsilon)\times N(\epsilon)$ Toeplitz matrix with elements
$T_{nm}=f_{n-m}$,
where the sequence $\{f_n\}_{n\in\mathbb{Z}}$ is defined by \eqref{eq:fsequence}.
Of course $\mat{E}:=\mat{H}-\mat{B}\mat{T}\mat{B}$.  Therefore
$\|\mat{H}\|_2 \le \|\mat{B}\|_2^2\|\mat{T}\|_2 + \|\mat{E}\|_2$.  Because 
$\mat{B}$ is diagonal, 
\begin{equation}
\|\mat{B}\|_2^2\le\max_{1\le n\le N(\epsilon)}[-2\lamt_nF(\lamt_n)]
\le\sup_{-L<\lambda<0}[-2\lambda F(\lambda)]
\end{equation}
which is finite by Lemma~\ref{Lemma for function F}.
The Toeplitz matrix $\mat{T}$ can be written as
$\mat{T}=\mathcal{P} \mathcal{T}_f\mathcal{P}$, where $\mathcal{P}$ is
the orthogonal projection from $\ell^2(\mathbb{Z})$ onto
$\mathbb{C}^N$ viewed as a subset of $\ell^2(\mathbb{Z})$ associated
with components having indices
$\{1,2,\dots,N(\epsilon)\}\subset\mathbb{Z}$, and where
$\mathcal{T}_f:\ell^2(\mathbb{Z})\to\ell^2(\mathbb{Z})$ is the Toeplitz
operator defined by \eqref{eq:ToeplitzOperator} from \S\ref{sec:moments}.
The $\ell^2(\mathbb{Z})$ operator norm of $\mathcal{P}$ is clearly equal
to one, and since
\begin{equation}
\sum_{l\in\mathbb{Z}}f_le^{il\theta} = i(\pi-\theta),\quad 0<\theta<2\pi,
\end{equation}
the Pythagorean Theorem in $L^2(0,2\pi)$ gives
\begin{equation}
\begin{split}
\sum_{n\in\mathbb{Z}}|(\mathcal{T}c)_n|^2 &=\frac{1}{2\pi}\int_0^{2\pi}
\left|\sum_{n\in\mathbb{Z}}(\mathcal{T}c)_ne^{in\theta}\right|^2\,d\theta\\
&=\frac{1}{2\pi}\int_0^{2\pi}\left|\sum_{n\in\mathbb{Z}}\sum_{m\in\mathbb{Z}}
f_{n-m}c_me^{in\theta}\right|^2\,d\theta\\
&=\frac{1}{2\pi}\int_0^{2\pi}\left|\sum_{m\in\mathbb{Z}}c_me^{im\theta}
\sum_{n\in\mathbb{Z}}f_{n-m}e^{i[n-m]\theta}\right|^2\,d\theta\\
&=\frac{1}{2\pi}\int_0^{2\pi}(\pi-\theta)^2\left|\sum_{m\in\mathbb{Z}}c_me^{im\theta}\right|^2\,d\theta\\
&\le \pi^2\frac{1}{2\pi}\int_0^{2\pi}\left|\sum_{m\in\mathbb{Z}}c_me^{im\theta}
\right|^2\,d\theta\\
&= \pi^2\sum_{m\in\mathbb{Z}}|c_m|^2,
\end{split}
\end{equation}
the $\ell^2(\mathbb{Z})$ operator norm of $\mathcal{T}_f$ is bounded by $\pi$.
It follows that $\|\mat{H}\|_2 \le \pi +\|\mat{E}\|_2$, so it suffices to
show that $\|\mat{E}\|_2$ remains bounded as $\epsilon\downarrow 0$.

So far, we have exploited the special structure of the dominant parts of the
matrix $\tilde{\mat{A}}_\epsilon$ and applied correspondingly
specialized norm estimates to these terms.  The error term $\mat{E}$
has less structure, but is it smaller; to estimate its norm it will be
sufficient to use the rather crude inequality 
$\|\mat{E}\|_2\le\|\mat{E}\|_{\mathrm{HS}}$ and work with the Hilbert-Schmidt
norm
\begin{equation}
\|\mat{E}\|_{\mathrm{HS}}^2:=\sum_{n=1}^{N(\epsilon)}\sum_{m=1}^{N(\epsilon)}|E_{nm}|^2,
\end{equation}
where the elements of $\mat{E}$ are explicitly given by
\begin{equation}
E_{nm}:=2i\left[\frac{\epsilon\sqrt{\lamt_n\lamt_m}}{\lamt_n-\lamt_m}
-\frac{\sqrt{\lamt_nF(\lamt_n)\lamt_mF(\lamt_m)}}{n-m}\right],
\quad\text{for $n\neq m$, and $E_{nn}=0$.}
\end{equation}
If we introduce continuous variables $a:=(n-\tfrac{1}{2})\epsilon$
and $b:=(m-\tfrac{1}{2})\epsilon$, then it is easy to see that the
square of the Hilbert-Schmidt norm of $\mat{E}$ is a Riemann sum approximation
of a certain double integral:
\begin{equation}
\lim_{\epsilon\downarrow 0}\|\mat{E}\|_{\mathrm{HS}}^2 = 
\iint_{[0,M]^2}e_0(a,b)\,da\,db,
\end{equation}
provided the double integral exists, where
\begin{equation}
e_0(a,b):=4\left[\frac{\sqrt{m^{-1}(a)m^{-1}(b)}}{m^{-1}(a)-m^{-1}(b)}-
\frac{\sqrt{m^{-1}(a)F(m^{-1}(a))m^{-1}(b)F(m^{-1}(b))}}{a-b}\right]^2,
\end{equation}
and where $m^{-1}(\cdot)$ denotes the inverse function to the monotone
function $m(\cdot)$ given by
\begin{equation}
m(\lambda):=\int_{-L}^\lambda F(\lambda')\,d\lambda'.
\end{equation}
By changing variables to $\kappa = m^{-1}(a)$ and $\lambda = m^{-1}(b)$, 
\begin{equation}
\iint_{[0,M]^2}e_0(a,b)\,da\,db = \iint_{[-L,0]^2}e(\kappa,\lambda)\,d\kappa\,
d\lambda,
\label{eq:dblinteqn}
\end{equation}
where
\begin{equation}
e(\kappa,\lambda):=4\left[\frac{\sqrt{\kappa\lambda}}{\kappa-\lambda}-
\frac{\sqrt{\kappa F(\kappa)\lambda F(\lambda)}}{m(\kappa)-m(\lambda)}
\right]^2F(\kappa)F(\lambda).
\end{equation}
Note that since $F\ge 0$ by Lemma~\ref{Lemma for function F},
$e(\kappa,\lambda)\ge 0$ for $(\kappa,\lambda)\in [-L,0]^2$.  To
complete the proof of the Lemma it is enough to show that the double
integral on the right-hand side of \eqref{eq:dblinteqn} is finite.

In order to estimate the double integral, we divide the square
$[-L,0]^2$ into polygonal regions as follows (see
Figure~\ref{fig:Regions}):
\begin{itemize}
\item
  The square $[-L,-L+\delta]^2$ contains those ordered pairs
  $(\kappa,\lambda)$ for which both $\kappa$ and $\lambda$ are near
  the ``soft edge'' of the eigenvalue spectrum at $-L$.  We divide
  this square into diagonal and off-diagonal parts according to
  whether $(\kappa+L)/2\le\lambda+L\le 2(\kappa+L)$ 
(the diagonal part, $S_\mathrm{D}$)
or not (the off-diagonal parts, $S_\mathrm{OD}$).
\item The square $[-\delta,0]^2$ contains those ordered pairs
$(\kappa,\lambda)$ for which both $\kappa$ and $\lambda$ are near the
``hard edge'' of the eigenvalue spectrum at $0$.  We divide this square
into diagonal and off-diagonal parts according to whether
$2\kappa<\lambda<\kappa/2$ (the diagonal part, $H_\mathrm{D}$)
or not (the off-diagonal parts $H_\mathrm{OD}$).
\item The remaining part of $[-L,0]^2$ contains those ordered pairs
$(\kappa,\lambda)$ for which at least one of the coordinates lies in the
``bulk'' of the eigenvalue spectrum, bounded away from both edges.  This
is divided into a diagonal part $B_\mathrm{D}$ and two off-diagonal 
parts $B_\mathrm{OD}$ along two straight line segments parallel to the
diagonal as indicated in
Figure~\ref{fig:Regions}.
\end{itemize}
\begin{figure}[h]
\begin{center}
\includegraphics{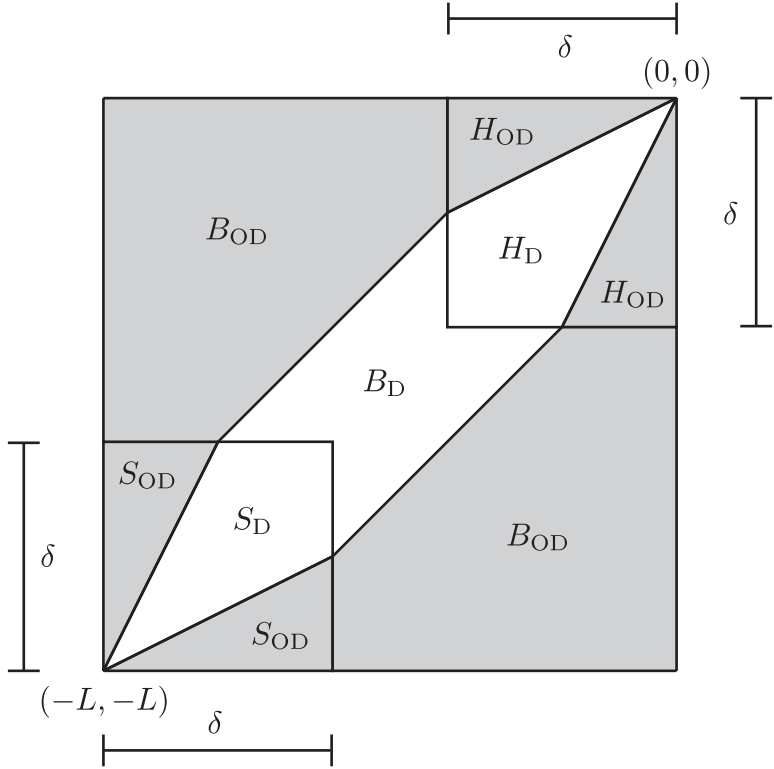}
\end{center}
\caption{\emph{The square $[-L,0]^2$ in the $(\kappa,\lambda)$-plane is covered
by the six regions $S_\mathrm{D}$, $S_\mathrm{OD}$, $B_\mathrm{D}$, $B_\mathrm{OD}$,
$H_{D}$, and $H_{OD}$.}}
\label{fig:Regions}
\end{figure}
Here, the constant $\delta>0$ is as
specified in Lemma~\ref{Lemma for function F}.  
As $e(\kappa,\lambda)=e(\lambda,\kappa)$ it will be enough to show integrability
of $e$ over the part of $[-L,0]^2$ with $\kappa<\lambda$, an inequality that
we will assume tacitly below.  

First we consider integrating $e(\kappa,\lambda)$ over 
the ``off-diagonal'' shaded regions $S_\mathrm{OD}$, $B_\mathrm{OD}$, and 
$H_\mathrm{OD}$ shown in Figure~\ref{fig:Regions}.
An upper bound for $e(\kappa,\lambda)$ useful 
in these regions is easily obtained from the inequality $(a-b)^2\le 2a^2+2b^2$:
\begin{equation}
e(\kappa,\lambda)\le 8\kappa F(\kappa)\lambda F(\lambda)
\left[\frac{1}{(\kappa-\lambda)^2} + \frac{F(\kappa)F(\lambda)}
  {(m(\kappa)-m(\lambda))^2}\right],\quad (\kappa,\lambda)\in (-L,0)^2.
\end{equation}
Applying the Mean Value Theorem to this estimate yields
\begin{equation}
e(\kappa,\lambda)\le \frac{8\kappa F(\kappa)\lambda F(\lambda)}{(\kappa-\lambda)^2}\left[1+\frac{F(\kappa)F(\lambda)}{F(\xi)^2}\right],
\end{equation}
where $\kappa\le\xi\le\lambda$.  Finally, since $F$ is monotone increasing 
according to Lemma~\ref{Lemma for
  function F} we obtain
\begin{equation}
e(\kappa,\lambda)\le \frac{8\kappa F(\kappa)\lambda F(\lambda)}
{(\kappa-\lambda)^2}\left[1+\frac{F(\lambda)}{F(\kappa)}\right] = 
\frac{8\kappa F(\kappa)\lambda F(\lambda)}{(\kappa-\lambda)^2} +
\frac{8\kappa \lambda F(\lambda)}{(\kappa-\lambda)^2}F(\lambda) .
\label{eq:eboundOD}
\end{equation}

Now, for $(\kappa,\lambda)\in B_\mathrm{OD}$, we have that
$\kappa-\lambda$ is bounded away from zero while by Lemma~\ref{Lemma
  for function F} $\kappa F(\kappa)$ and $\lambda F(\lambda)$ are
bounded (and of course $|\kappa|<L$) while $F(\lambda)$ is integrable.
Hence we easily conclude that $e(\kappa,\lambda)$ is integrable on
$B_\mathrm{OD}$.

If $(\kappa,\lambda)\in H_\mathrm{OD}$ with $\kappa<\lambda$, then we have
the inequality
\begin{equation}
(\kappa-\lambda)^2 = \left(\left[\lambda-\frac{\kappa}{2}\right] +
\left[-\frac{\kappa}{2}\right]\right)^2\ge \frac{\kappa^2}{4},
\end{equation}
and also since both $-\delta<\kappa<0$ and $-\delta<\lambda<0$ we may
use the upper bound for $F$ given in  \eqref{Lemma for function F equation 4}
from Lemma~\ref{Lemma for function F} to replace \eqref{eq:eboundOD}
with
\begin{equation}
e(\kappa,\lambda)\le 32C_0^2(-\kappa)^{-1-1/q}(-\lambda)^{1-1/q} +32C_0^2(-\kappa)^{-1}(-\lambda)^{1-2/q},
\end{equation}
where $C_0>0$ and $q>1$ are the constants in \eqref{Lemma for function
  F equation 4}.  This estimate is easily seen to be integrable on the
component of $H_\mathrm{OD}$ with $\kappa<\lambda$ by direct calculation of
the iterated integrals.

If $(\kappa,\lambda)\in S_\mathrm{OD}$ with $\kappa<\lambda$, then we have the
inequality
\begin{equation}
(\kappa-\lambda)^2 = \left(\left[\frac{\lambda+L}{2}\right]+
\left[\frac{\lambda+L}{2}-(\kappa+L)\right]\right)^2\ge
\frac{(\lambda+L)^2}{4},
\end{equation}
and also since both $-L<\kappa<-L+\delta$ and $-L<\lambda<-L+\delta$
we may use the upper bound for $F$ given in \eqref{Lemma for function
  F equation 2} from Lemma~\ref{Lemma for function F} along with the
inequalities $|\kappa|<L$ and $|\lambda|<L$ to replace
\eqref{eq:eboundOD} with
\begin{equation}
e(\kappa,\lambda)\le 32 L^2C_{-L}^2(\kappa+L)^{1/2}(\lambda+L)^{-3/2} +
32 L^2C_{-L}^2(\lambda+L)^{-1}.
\end{equation}
This upper bound is obviously integrable on the component of $S_\mathrm{OD}$
with $\kappa<\lambda$.

Now we consider integrating $e(\kappa,\lambda)$ over the ``diagonal''
unshaded regions $S_\mathrm{D}$, $B_\mathrm{D}$, and $H_\mathrm{D}$
shown in Figure~\ref{fig:Regions}.
By the Mean Value Theorem and the monotonicity of
$F$ guaranteed by Lemma~\ref{Lemma for function F}, we obtain an upper bound
more useful when $\kappa\approx\lambda$:
\begin{equation}
e(\kappa,\lambda)\le 4\kappa F(\kappa)\lambda F(\lambda)
\left[\frac{F(\kappa)-F(\lambda)}{m(\kappa)-m(\lambda)}\right]^2,
\quad (\kappa,\lambda)\in (-L,0)^2.
\end{equation}
Again using the Mean Value Theorem and monotonicity of $F$ we may make
the upper bound larger for $\kappa<\lambda$:
\begin{equation}
e(\kappa,\lambda)\le \frac{4\kappa\lambda F(\lambda)F'(\xi)^2}{F(\kappa)},
\label{eq:eboundD}
\end{equation}
where $\kappa\le \xi\le\lambda$.  

For $(\kappa,\lambda)\in B_\mathrm{D}$ with
$\kappa<\lambda$, both $\kappa$ and $\lambda$ are bounded away from
the soft and hard edges of the eigenvalue spectrum, so 
Lemma~\ref{Lemma for function F} guarantees that $F$ and $F'$ are bounded,
and $F$ is also bounded away from zero by strict monotonicity and 
the boundary condition $F(-L)=0$.  It follows from \eqref{eq:eboundD}
that $e(\kappa,\lambda)$
is bounded and hence integrable on $B_\mathrm{D}$.  

If $(\kappa,\lambda)\in H_\mathrm{D}$ then we may use the estimates
\eqref{Lemma for function F equation 4} and \eqref{Lemma for function
  F equation 3} from Lemma~\ref{Lemma for function F} to replace
\eqref{eq:eboundD} with
\begin{equation}
e(\kappa,\lambda)\le \frac{8C_0^2}{q^2}(-\kappa)^{1+1/q}(-\lambda)^{1-1/q}(-\xi)^{-2/q-2}\le \frac{8C_0^2}{q^2}(-\kappa)^{1+1/q}(-\lambda)^{-1-3/q}.
\end{equation}
The double integral of this upper bound over the region $H_\mathrm{D}$
with $\kappa<\lambda$ is easily computed by iterated integration and
is clearly finite as a consequence of the fact that $q>1$.

Finally, if $(\kappa,\lambda)\in S_\mathrm{D}$ with $\kappa<\lambda$, then
we may use the estimates \eqref{Lemma for function F equation 2} and 
\eqref{Lemma for function F equation 1} from Lemma~\ref{Lemma for function F}
together with the inequalities $|\kappa|<L$ and $|\lambda|<L$ 
to replace \eqref{eq:eboundD} with
\begin{equation}
e(\kappa,\lambda)\le 2L^2C_{-L}^2(\kappa+L)^{-1/2}(\lambda+L)^{1/2}(\xi+L)^{-1}
\le 2L^2C_{-L}^2(\kappa+L)^{-3/2}(\lambda+L)^{1/2},
\end{equation}
an upper bound that is clearly integrable over the part of $S_\mathrm{D}$
with $\kappa<\lambda$.  
\end{proof}

\begin{lemma}
The measure $\mu_\epsilon$ converges in the weak-$*$ sense to $\mu$,
uniformly for $(x,t)$ in compact sets. 
That is, for each continuous function $f:\mathbb{R}\to\mathbb{C}$,
\begin{equation}
\lim_{\epsilon\downarrow 0}\int_\mathbb{R}f(\alpha)\,d\mu_\epsilon(\alpha)
= \int_\mathbb{R}f(\alpha)\,d\mu(\alpha),
\end{equation}
with the limit being uniform with respect to $(x,t)$ in compact sets.
\label{lem:weakstar}
\end{lemma}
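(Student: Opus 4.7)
The plan is to combine the convergence of moments (Proposition~\ref{asymptotic preperties for trace 3} together with Lemma~\ref{lem:limitingmoments}) with the uniform tightness established in Lemma~\ref{lem:tight} via a Weierstrass polynomial approximation argument. This is a standard route from convergence of moments to weak-$*$ convergence, made applicable here precisely because the supports of all the measures involved are uniformly bounded.

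Fix a compact set $K\subset\mathbb{R}^2$ of values of $(x,t)$ and let $\Omega\subset\mathbb{R}$ be the compact interval supplied by Lemma~\ref{lem:tight}, so that $\mathrm{supp}(\mu_\epsilon)\subset\Omega$ and $\mathrm{supp}(\mu)\subset\Omega$ for every $\epsilon>0$ and every $(x,t)\in K$. Given a continuous function $f\colon\mathbb{R}\to\mathbb{C}$ and any $\eta>0$, the Weierstrass approximation theorem applied to the continuous restriction $f|_\Omega$ produces a polynomial $p(\alpha)=\sum_{k=0}^{P}c_k\alpha^k$ (with $P$ and $c_0,\dots,c_P$ independent of $\epsilon$ and of $(x,t)\in K$) such that $|f(\alpha)-p(\alpha)|<\eta$ for all $\alpha\in\Omega$.

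Next I would write
\begin{equation*}
\int_{\mathbb{R}}f\,d\mu_\epsilon-\int_{\mathbb{R}}f\,d\mu
=\int_{\Omega}(f-p)\,d\mu_\epsilon+\sum_{k=0}^{P}c_k\left[\int_{\mathbb{R}}\alpha^k\,d\mu_\epsilon-\int_{\mathbb{R}}\alpha^k\,d\mu\right]+\int_{\Omega}(p-f)\,d\mu.
\end{equation*}
Since each $\mu_\epsilon$ and $\mu$ is a (nonnegative) measure of mass $M$, the first and third terms are bounded in absolute value by $M\eta$. The middle term is a fixed finite linear combination of moment differences; by Proposition~\ref{asymptotic preperties for trace 3} combined with the identification $Q_p=\int\alpha^p\,d\mu$ from Lemma~\ref{lem:limitingmoments}, each such difference tends to $0$ as $\epsilon\downarrow 0$, uniformly for $(x,t)\in K$. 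Since $P$ and the coefficients $c_k$ do not depend on $\epsilon$ or on $(x,t)\in K$, there exists $\epsilon_0>0$ (depending only on $\eta$, $p$, and $K$) such that the middle term is bounded by $M\eta$ for all $0<\epsilon<\epsilon_0$ and all $(x,t)\in K$. Consequently
\begin{equation*}
\sup_{(x,t)\in K}\left|\int_{\mathbb{R}}f\,d\mu_\epsilon-\int_{\mathbb{R}}f\,d\mu\right|\le 3M\eta
\end{equation*}
for $\epsilon$ sufficiently small, and sending $\eta\downarrow 0$ gives the desired uniform convergence.

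The only real subtlety, and the reason uniformity in $(x,t)$ works out, is that two separate ingredients must already come with the right uniformity: the compact ``absorbing'' interval $\Omega$ is uniform in $(x,t)\in K$ by Lemma~\ref{lem:tight}, and the convergence of each moment is uniform in $(x,t)\in K$ by Proposition~\ref{asymptotic preperties for trace 3}. Once these are in hand, the polynomial $p$ can be chosen independently of $\epsilon$ and of $(x,t)\in K$, and the standard three-term estimate above closes the argument with no further difficulty.
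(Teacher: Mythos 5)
Your proof is correct and follows essentially the same route as the paper's: Weierstrass polynomial approximation on the uniform compact support $\Omega$ from Lemma~\ref{lem:tight}, the uniform convergence of moments from Lemma~\ref{lem:limitingmoments}, and a three-term triangle-inequality decomposition. The only cosmetic difference is that the paper normalizes the Weierstrass error to $\rho/M$ while you carry the factor $M$ through explicitly; the substance is identical.
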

\begin{proof}
According to Lemma~\ref{lem:limitingmoments}, for each polynomial
$p(\alpha)$ we have the following limit, uniform for $(x,t)$ in compact sets:
\begin{equation}
\lim_{\epsilon\downarrow 0}\int_\mathbb{R}p(\alpha)\,d\mu_\epsilon(\alpha) =
\int_\mathbb{R}p(\alpha)\,d\mu(\alpha).
\label{eq:polyconvergence}
\end{equation}
But by Lemma~\ref{lem:tight} we can equivalently integrate over
the compact interval $\Omega$ (independent of $(x,t)$ in any given compact set)
with the same result.  Now by the Weierstra\ss\
Approximation Theorem, given any continuous function $f:\mathbb{R}\to\mathbb{C}$
and any $\rho>0$ there is a polynomial $p^f_\rho(\alpha)$ for which
\begin{equation}
\sup_{\lambda\in\Omega}|f(\alpha)-p^f_\rho(\alpha)|<\frac{\rho}{M},
\end{equation}
so for any measure $\nu$ of mass $M$ with support in $\Omega$
(like $\mu_\epsilon$ and $\mu$),
\begin{equation}
\left|\int_\mathbb{R}[f(\alpha)-p_\rho^f(\alpha)]\,d\nu(\alpha)
\right|\le\int_\Omega|f(\alpha)-p_\rho^f(\alpha)|\,d\nu(\alpha) <\rho.
\label{eq:probmeasure}
\end{equation}
Let $\omega>0$ be an arbitrarily small positive number.  Then if we write
\begin{equation}
\nu[g]:=\int_\Omega g(\alpha)\,d\nu(\alpha),
\end{equation}
we have
\begin{equation}
\begin{split}
\left|\int_\mathbb{R}f(\alpha)\,d\mu_\epsilon(\alpha) -
\int_\mathbb{R}f(\alpha)\,d\mu(\alpha)\right| &=
\left|\mu_\epsilon[f] -
\mu[f]\right|\\
&=
\left|\left[\mu_\epsilon[p^f_{\omega/3}] -
\mu[p^f_{\omega/3}]\right] +
\mu_\epsilon[f-p^f_{\omega/3}]
-\mu[f-p^f_{\omega/3}]\right|\\
&\le
\left|\mu_\epsilon[p^f_{\omega/3}] -
\mu[p^f_{\omega/3}]\right| +
\left|\mu_\epsilon[f-p^f_{\omega/3}]
\right| +
\left|\mu[f-p^f_{\omega/3}]
\right|\\
&<\left|\mu_\epsilon[p^f_{\omega/3}] -
\mu[p^f_{\omega/3}]\right| + \frac{2}{3}\omega,
\end{split}
\end{equation}
with the last inequality following from \eqref{eq:probmeasure}.  But
with $\omega>0$ fixed, \eqref{eq:polyconvergence} implies that $\epsilon>0$
may be chosen sufficiently small, independently of $(x,t)$ in any given 
compact set, that
\begin{equation}
\left|\mu_\epsilon[p_{\omega/3}^f]-\mu[p_{\omega/3}^f]\right|<\frac{1}{3}\omega,
\end{equation}
which implies
\begin{equation}
\left|\int_\mathbb{R} f(\alpha)\,d\mu_\epsilon(\alpha)-
\int_\mathbb{R} f(\alpha)\,d\mu(\alpha)\right|<\omega
\end{equation}
thereby completing the proof.
\end{proof}

Now we are in a position to complete the proof of
Proposition~\ref{limit of sx theorem}.  
We begin by writing $\Ut(x,t)$ as defined by \eqref{small
  dispersion matrix calculation 3_1} in terms of the normalized (to
mass $M$) counting measure $\mu_\epsilon$:
\begin{equation}
\Ut(x,t) = 
\left[\frac{\epsilon N(\epsilon)}{M}\right]\int_\mathbb{R}2\arctan(\epsilon^{-1}\alpha)\,d\mu_\epsilon(\alpha).
\end{equation}
Define the continuous functions
\begin{equation}
a_+(\alpha):=\pi+4H(-\alpha)\arctan(\alpha),\quad
a_-(\alpha):=-a_+(-\alpha),\quad\alpha\in\mathbb{R},
\end{equation}
where $H(\cdot)$ denotes the Heaviside step function.
It is then easy to check (see Figure~\ref{fig:ArcTanCurves}) that for
any $E>0$,
\begin{figure}[h]
\begin{center}
\includegraphics{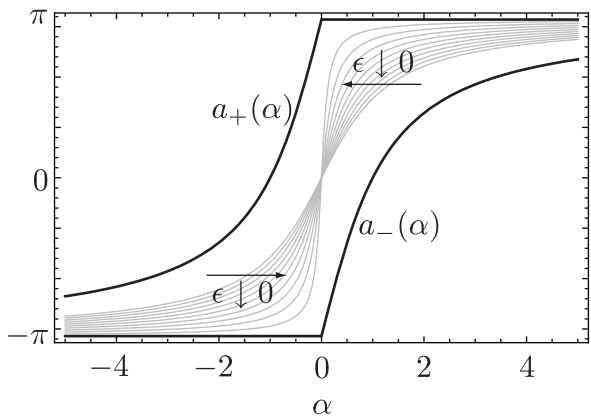}
\end{center}
\caption{\emph{The graphs of $a_-(\alpha)<a_+(\alpha)$ (black) and several
graphs of $2\arctan(\epsilon^{-1}\alpha)$ for $\epsilon\le 1$ (gray).}}
\label{fig:ArcTanCurves}
\end{figure}
\begin{equation}
0<\epsilon\le E\quad \implies\quad a_-(E^{-1}\alpha)\le
2\arctan(\epsilon^{-1}\alpha)\le a_+(E^{-1}\alpha),\quad\alpha\in\mathbb{R}.
\end{equation}
Therefore, for any $E>0$ and all $0<\epsilon<E$,
\begin{equation}
\int_\mathbb{R}a_-(E^{-1}\alpha)\,d\mu_\epsilon(\alpha)\le \int_\mathbb{R}
2\arctan(\epsilon^{-1}\alpha)\,d\mu_\epsilon(\alpha)\le \int_\mathbb{R}
a_+(E^{-1}\alpha)\,d\mu_\epsilon(\alpha).
\end{equation}
Using Lemma~\ref{lem:weakstar} we may pass to the limit
$\epsilon\downarrow 0$ in the lower and upper bounds to obtain
\begin{equation}
\liminf_{\epsilon\downarrow 0}\int_\mathbb{R}2\arctan(\epsilon^{-1}\alpha)\,
d\mu_\epsilon(\alpha)\ge \int_\mathbb{R}a_-(E^{-1}\alpha)\,d\mu(\alpha)
\label{eq:liminf}
\end{equation}
and also
\begin{equation}
\limsup_{\epsilon\downarrow 0}\int_\mathbb{R}2\arctan(\epsilon^{-1}\alpha)\,
d\mu_\epsilon(\alpha)\le \int_\mathbb{R}a_+(E^{-1}\alpha)\,d\mu(\alpha).
\label{eq:limsup}
\end{equation}
In these statements, $E>0$ is an arbitrary parameter, and the limits
are uniform for $(x,t)$ in compact sets.  But
$a_\pm(E^{-1}\alpha)$ are uniformly bounded functions that both tend
pointwise for $\alpha\neq 0$ to the same limit function
$\pi\,\mathrm{sgn}(\lambda)$ as
$E\downarrow 0$, while $\mu$ is a fixed measure that is
absolutely continuous with respect to Lebesgue measure on $\mathbb{R}$, so
by the Lebesgue Dominated Convergence Theorem,
\begin{equation}
\lim_{E\downarrow 0}\int_\mathbb{R}a_\pm(E^{-1}\alpha)\,d\mu(\alpha)=
\int_\mathbb{R}\pi\,\mathrm{sgn}(\alpha)\,d\mu(\alpha).
\end{equation}
By letting $E\downarrow 0$, it then follows from \eqref{eq:liminf}
and \eqref{eq:limsup} that
\begin{equation}
\lim_{\epsilon\downarrow 0}\int_\mathbb{R}2\arctan(\epsilon^{-1}\alpha)\,
d\mu_\epsilon(\alpha) = \int_\mathbb{R}\pi\,\mathrm{sgn}(\alpha)\,d\mu(\alpha)
\label{eq:limintegral}
\end{equation}
with the limit being uniform for $(x,t)$ in any given compact set.
Finally, 
according to \eqref{eq:epsilonNepsilon}, 
we have (independent of $x$ and $t$)
\begin{equation}
\lim_{\epsilon\downarrow 0}\frac{\epsilon N(\epsilon)}{M}=1,
\end{equation}
so combining this result with \eqref{eq:limintegral} and noting that
$d\mu(\alpha)=G(\alpha;x,t)\,d\alpha$ completes the proof of
Proposition~\ref{limit of sx theorem}.

\subsection{Differentiation of $\Ut$.  Burgers' equation and 
weak convergence of $\ut$}
\label{sec:differentiation}
Let $\phi\in\mathscr{D}(\mathbb{R})$ be a test function.  Then by
integration by parts and the uniform convergence of $\Ut(x,t)$
to $U(x,t)$ on compact sets in the $(x,t)$-plane guaranteed by
Proposition~\ref{limit of sx theorem},
\begin{equation}
\begin{split}
\lim_{\epsilon\downarrow 0}\int_\mathbb{R}\ut(x,t)\phi(x)\,dx &= 
\lim_{\epsilon\downarrow 0}\int_\mathbb{R}\frac{\partial \Ut}{\partial x}(x,t)\phi(x)\,dx\\
&=-\lim_{\epsilon\downarrow 0}\int_\mathbb{R}\Ut(x,t)\phi'(x)\,dx\\
&=-\int_\mathbb{R}U(x,t)\phi'(x)\,dx.
\end{split}
\label{eq:uUdist}
\end{equation}

\begin{lemma}
The limit function $U(x,t)$ is continuously differentiable with 
respect to $x$,
and if $(x,t)$ is a point for which there are $2P(x,t)+1$ solutions
$u_0^\mathrm{B}(x,t)<\cdots<u_{2P(x,t)}^\mathrm{B}(x,t)$ 
of the
implicit equation \eqref{eq:implicitBurgers},
\begin{equation}
\frac{\partial U}{\partial x}(x,t) = \sum_{n=0}^{2P(x,t)}
(-1)^nu_n^\mathrm{B}(x,t),
\label{eq:Ux}
\end{equation}
and the above formula is extended to nongeneric $(x,t)$ by continuity.
\end{lemma}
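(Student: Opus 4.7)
The plan is to render the double integral defining $U(x,t)$ completely explicit via Fubini, differentiate under the integral sign in $x$, and then pull back from the spectral parameter $\lambda$ to the spatial variable $\xi=x_\pm(\lambda)$ using $\lambda=-u_0(\xi)$; at that point a short telescoping calculation produces the signed sum of Burgers branches.

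Combining \eqref{eq:Uformula} with \eqref{limit of sx 2} and applying Fubini (justified because the length of the $\alpha$-interval equals $-4\pi\lambda F(\lambda)$, so dividing by $|\lambda|$ gives the integrable function $4\pi F(\lambda)$ via \eqref{eq:intF}), together with the elementary identity $\int_a^b\operatorname{sgn}(\alpha)\,d\alpha=|b|-|a|$, yields
\[
U(x,t)=-\frac{1}{4}\int_{-L}^{0}\frac{|b(\lambda;x,t)|-|a(\lambda;x,t)|}{\lambda}\,d\lambda,
\]
where $a(\lambda;x,t):=-2\lambda(x+2\lambda t-x_+(\lambda))$ and $b(\lambda;x,t):=-2\lambda(x+2\lambda t-x_-(\lambda))$. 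Integrability at $\lambda=0$ follows because $\lambda x_\pm(\lambda)$ is bounded by Lemma \ref{Lemma for function F}. Since $\partial_x a=\partial_x b=-2\lambda$, the $x$-difference quotients of $|a|$ and $|b|$ are bounded by $2L$ pointwise, so the dominated convergence theorem permits differentiation under the integral (the exceptional $\lambda$'s where $a$ or $b$ vanish form a finite set for each $(x,t)$), giving
\[
\frac{\partial U}{\partial x}(x,t)=\frac{1}{2}\int_{-L}^{0}\bigl[\operatorname{sgn}(x+2\lambda t-x_-(\lambda))-\operatorname{sgn}(x+2\lambda t-x_+(\lambda))\bigr]\,d\lambda.
\]

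Next I would change variables $\xi=x_-(\lambda)$ in the first summand (so $\xi$ sweeps $(-\infty,x_0]$ with $\lambda=-u_0(\xi)$ and $u_0'(\xi)\ge 0$) and $\xi=x_+(\lambda)$ in the second (so $\xi$ sweeps $[x_0,+\infty)$ with $u_0'(\xi)\le 0$); tracking orientation, the two terms fuse into
\[
\frac{\partial U}{\partial x}(x,t)=\frac{1}{2}\int_{-\infty}^{+\infty}\operatorname{sgn}\bigl(g_{x,t}(\xi)\bigr)u_0'(\xi)\,d\xi,\qquad g_{x,t}(\xi):=x-2u_0(\xi)t-\xi.
\]
The zeros of $g_{x,t}$ are exactly the points $\xi_n$ with $\xi_n=x-2u_0(\xi_n)t$, and $u_0(\xi_n)=u_n^{\mathrm B}(x,t)$ are the branches of the multivalued Burgers solution. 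At a generic $(x,t)$ with simple zeros $\xi_0<\cdots<\xi_{2P}$, the signs of $g_{x,t}$ on the $2P+2$ subintervals alternate $+,-,+,\ldots,-$ because $g_{x,t}(\pm\infty)=\mp\infty$. Splitting the integral at the $\xi_n$ and using $u_0(\pm\infty)=0$ produces the telescoping identity
\[
\int_{-\infty}^{+\infty}\operatorname{sgn}(g_{x,t}(\xi))\,u_0'(\xi)\,d\xi=2\sum_{n=0}^{2P}(-1)^n u_0(\xi_n)=2\sum_{n=0}^{2P}(-1)^n u_n^{\mathrm B}(x,t),
\]
which is \eqref{eq:Ux}. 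Continuity of $\partial_x U$ on the entire $(x,t)$-plane then follows from a further application of dominated convergence to the $\xi$-integral: the integrand is bounded by $|u_0'(\xi)|\in L^1(\mathbb{R})$ (via \eqref{u0 infinity boundary condition}) and is pointwise continuous in $(x,t)$ for each $\xi$ outside the Lebesgue-null set $\{g_{x,t}(\xi)=0\}$. Continuity across a caustic, where two consecutive zeros $\xi_{k-1},\xi_k$ coalesce, is automatic because the opposite-sign terms $(-1)^{k-1}u_0(\xi_{k-1})+(-1)^ku_0(\xi_k)$ cancel in the limit.

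The main obstacle is the first application of dominated convergence, differentiating the $\lambda$-integral where the integrand is only Lipschitz (not $C^1$) in $x$ with corners precisely at the preimage $\{-u_0(\xi_n)\}$ of the active Burgers branches; a secondary difficulty is the orientation bookkeeping in the change of variables over the two monotone branches $x_\pm$, since $u_0'$ has opposite signs on them and the combinatorial accounting must yield a single integral over $\mathbb{R}$ with the correct overall sign.
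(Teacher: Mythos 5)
Your proof is correct, and it takes a genuinely different and arguably cleaner route than the paper. The paper exchanges the order of integration to write $U(x,t)=\int_{-L}^0 J(\lambda;x,t)\,d\lambda$ with the piecewise formula \eqref{eq:Jcases}, then at fixed $t\ge 0$ splits into the cases $x\gtrless x_0+2Lt$ (using monotonicity of $b_-(\lambda;x,t)=x+2\lambda t-x_-(\lambda)$ to locate its unique zero), writes $U$ explicitly as a concatenation of integrals between consecutive Burgers roots, and applies Leibniz' rule, observing that all boundary terms vanish because they are evaluations of $b_\pm$ precisely at their roots; $t\le 0$ then requires a symmetric argument with $b_\pm$ exchanged. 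You instead compute $J$ in the uniform closed form $-\tfrac14(|b|-|a|)/\lambda$, differentiate under the integral via dominated convergence (the integrand's $x$-difference quotient is bounded by a constant after the $\lambda$ cancels), and then pull the spectral variable $\lambda$ back to $\xi=x_\pm(\lambda)$; the two monotone branches fuse, with the correct orientations, into a single convergent integral $\tfrac12\int_{\mathbb R}\operatorname{sgn}(g_{x,t}(\xi))u_0'(\xi)\,d\xi$ whose telescoping gives the signed sum. This bypasses the paper's case splitting and the separate $t\ge 0$/$t\le 0$ treatment, and it makes continuity of $\partial_x U$ a one-line DCT statement (with dominator $|u_0'|\in L^1$) rather than an extension-across-caustics argument. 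One remark worth adding: you label the zeros $\xi_0<\cdots<\xi_{2P}$ by position, which for $t>0$ is the \emph{reverse} of the ordering $u_0^{\mathrm B}<\cdots<u_{2P}^{\mathrm B}$ by value (since $u_0(\xi_n)=(x-\xi_n)/(2t)$ is decreasing in $\xi_n$); the signed sum is nevertheless correct because $\sum_{n=0}^{2P}(-1)^n c_n$ is invariant under the index reversal $n\mapsto 2P-n$.
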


\begin{proof}
Exchanging the order of integration in
the double-integral formula for $U(x,t)$ obtained by substituting
$d\mu(\alpha)=G(\alpha;x,t)\,d\alpha$ with $G$ given by \eqref{limit
  of sx 2} into \eqref{eq:Uformula}, we obtain
\begin{equation}
U(x,t)=\int_{-L}^0J(\lambda;x,t)\,d\lambda,
\label{eq:Urepn}
\end{equation}
where 
\begin{equation}
J(\lambda;x,t):=-\frac{1}{4\lambda}
\int_{-2\lambda(x+2\lambda t-x_+(\lambda))}^{-2\lambda(x+2\lambda t -x_-(\lambda))}\mathrm{sgn}(\alpha)\,d\alpha.
\end{equation}
Note that for $\lambda\in [-L,0]$ the upper limit of integration 
is greater than or equal
to the lower limit.  Moreover
the integral in $J(\lambda;x,t)$ is easily evaluated; for $-L<\lambda<0$,
\begin{equation}
J(\lambda;x,t)=\begin{cases}
-\pi F(\lambda),\quad &
x+2\lambda t-x_-(\lambda)<0,\\
x+2\lambda t+\gamma(\lambda),\quad &
x+2\lambda t-x_+(\lambda)\le 0\le
x+2\lambda t-x_-(\lambda)\\
\pi F(\lambda),\quad & x+2\lambda t-x_+(\lambda)>0.
\end{cases}
\label{eq:Jcases}
\end{equation}
It follows from the relations $x_\pm(\lambda)=\pm\pi
F(\lambda)-\gamma(\lambda)$ that for an admissible initial condition
$u_0$, $J$ is a continuous function of $x$ for each fixed $t$,
uniformly with respect to $ \lambda\in [-L,0]$, and hence also from
\eqref{eq:Urepn} that $U(\cdot,t)$ is continuous on $\mathbb{R}$ for
each $t$.  To prove that $U(\cdot,t)$ is continuously differentiable
it will therefore suffice to establish continuous differentiability on
the complement of a finite set of points and that the resulting
piecewise formula for $\partial U/\partial x$ extends continuously
to the whole real line.

To use the formula \eqref{eq:Jcases} 
in the representation \eqref{eq:Urepn} we therefore
need to know those points $\lambda\in (-L,0)$ at which one of the two
quantities $x+2\lambda t-x_+(\lambda)<x+2\lambda t-x_-(\lambda)$ changes
sign.  Under the variable substitution $\lambda =-u^\mathrm{B}$, the definition
of the turning points $x_\pm(\lambda)$ as branches of the inverse function
of $u_0$ implies that the union of solutions
of the two equations $x+2\lambda t-x_\pm(\lambda)=0$ is exactly the totality
of solutions of the implicit equation
\begin{equation}
u^\mathrm{B} = u_0(x-2u^{\mathrm{B}}t).
\label{eq:implicitBurgers}
\end{equation}
In other words, the transitional points $\lambda$ for the formula
\eqref{eq:Jcases} correspond under the sign change $u^\mathrm{B}=-\lambda$
to the branches of the multivalued solution of Burgers' equation
\begin{equation}
\frac{\partial u^\mathrm{B}}{\partial t}+2u^\mathrm{B}
\frac{\partial u^\mathrm{B}}{\partial x}=0
\end{equation}
subject to the admissible initial condition $u^\mathrm{B}(x,0)=u_0(x)$.  

Note that admissibility of $u_0$ implies (see Definition~\ref{def:admissible})
that given any $t\in\mathbb{R}$ there exist only a finite
number of breaking points $(x_\xi,t_\xi)$ with $t_\xi$ in the closed interval
between $0$ and $t$. 
Indeed, the breaking points correspond to values of $\xi\in\mathbb{R}$
for which $u_0''(\xi)=0$ but $u_0'''(\xi)\neq 0$, and the breaking times
are $t_\xi=(-2u_0'(\xi))^{-1}$; since $u_0'(\xi)$ decays to zero for large $\xi$,
bounded breaking times $t_\xi$ correspond to bounded $\xi$, and there are
only finitely many of these by hypothesis.  Moreover, each breaking point
$(x_\xi,t_\xi)$ generates a new fold in the solution surface lying between 
two caustic
curves emerging in the direction of increasing $|t|$ from $(x_\xi,t_\xi)$, and
because $u'''(\xi)\neq 0$ 
there are exactly two more sheets of the multivalued solution of 
Burgers' equation born within the fold as a result of a simple pitchfork
bifurcation.
Therefore, the union of caustic curves and breaking points meets any line of 
constant
$t$ in the $(x,t)$-plane
in a finite set of points $\{x_j^\mathrm{crit}(t)\}$, and 
on every connected component
of the set $S_t:=\{(x,t)|x\in\mathbb{R}\setminus\{x_j^\mathrm{crit}(t)\}\}$,
there is a finite, odd,
and constant (with respect to $x$) number $2P(x,t)+1$ of roots of the equation
\eqref{eq:implicitBurgers}, and all roots are simple (and hence differentiable
with respect to $x$).  

If $t\ge 0$, then by admissibility of $u_0$ the quantity
$b_-(\lambda;x,t):=x+2\lambda t-x_-(\lambda)$ is strictly increasing
as a function of $\lambda$ on the interval $(-L,0)$, and therefore in
this interval there can exist at most one root of $b_-(\lambda;x,t)$,
regardless of the value of $x\in\mathbb{R}$.  Moreover,
$b_-(\lambda;x,t)\to +\infty$ as $\lambda\uparrow 0$, so there will be
exactly one root in $(-L,0)$ if $b_-(-L;x,t)=x-x_0-2Lt<0$ and no root
in $(-L,0)$ if $x-x_0-2Lt>0$.  Since $b_+(\lambda;x,t):=x+2\lambda
t-x_+(\lambda)<b_-(\lambda;x,t)$ for $-L<\lambda<0$, if $x-x_0-2Lt<0$,
all roots of $b_+(\lambda;x,t)$ in $(-L,0)$ must lie to the right
of the root of $b_-(\lambda;x,t)$.  Thus, for $x\in S_t\setminus\{x_0+2Lt\}$,
we either have
\begin{equation}
\begin{split}
U(x,t)&=\int_{-u_0^\mathrm{B}}^0(x+2\lambda t+\gamma(\lambda))\,d\lambda\\
&\quad\quad\quad{}
+\sum_{p=1}^{P(x,t)}\left[\pi\int_{-u_{2p-1}^\mathrm{B}}^{-u_{2p-2}^\mathrm{B}}
F(\lambda)\,d\lambda +\int_{-u_{2p}^\mathrm{B}}^{-u_{2p-1}^\mathrm{B}}
(x+2\lambda t+\gamma(\lambda))\,d\lambda\right] \\
&\quad\quad\quad\quad\quad{}
+\pi\int_{-L}^{-u_{2P(x,t)}^\mathrm{B}}F(\lambda)\,d\lambda,\quad 
x\in S_t,\quad x>x_0+2Lt,
\end{split}
\end{equation}
in which case $u_0^\mathrm{B}(x,t)<\cdots <u_{2P(x,t)}^\mathrm{B}(x,t)$
are all roots of $b_+(-u^\mathrm{B};x,t)$, 
or
\begin{equation}
\begin{split}
U(x,t)&=\int_{-u_0^\mathrm{B}}^0(x+2\lambda t+\gamma(\lambda))\,d\lambda\\
&\quad\quad\quad{}
+\sum_{p=1}^{P(x,t)}\left[\pi\int_{-u_{2p-1}^\mathrm{B}}^{-u_{2p-2}^\mathrm{B}}
F(\lambda)\,d\lambda +\int_{-u_{2p}^\mathrm{B}}^{-u_{2p-1}^\mathrm{B}}
(x+2\lambda t+\gamma(\lambda))\,d\lambda\right] \\
&\quad\quad\quad\quad\quad{}
-\pi\int_{-L}^{-u_{2P(x,t)}^\mathrm{B}}F(\lambda)\,d\lambda,\quad x\in S_t,
\quad x<x_0+2Lt,
\end{split}
\end{equation}
in which case $u_0^\mathrm{B}(x,t)<\cdots<u_{2P(x,t)-1}^\mathrm{B}(x,t)$
are roots of $b_+(-u^\mathrm{B};x,t)$ while $u_{2P(x,t)}^\mathrm{B}(x,t)$
with $u_{2P(x,t)}^\mathrm{B}(x,t)>u_{2P(x,t)-1}^\mathrm{B}(x,t)$ is a 
root of $b_-(-u^\mathrm{B};x,t)$.  In both cases, the condition $x\in S_t$
guarantees that all roots are differentiable with respect to $x$, so
we may calculate $\partial U/\partial x$ by Leibniz' rule:
\begin{equation}
\begin{split}
\frac{\partial U}{\partial x}(x,t) &= 
b_+(-u_{2P}^\mathrm{B}(x,t);x,t)
\frac{\partial u_{2P}^\mathrm{B}}{\partial x}(x,t) +
\sum_{n=0}^{2P-1} (-1)^nb_+(-u_n^\mathrm{B}(x,t);x,t)
\frac{\partial u_n^\mathrm{B}}{\partial x}(x,t) \\
&\quad\quad{}+
\sum_{n=0}^{2P} (-1)^nu_n^\mathrm{B}(x,t),\quad x\in S_t,\quad x>x_0+2Lt,
\end{split}
\end{equation}
or
\begin{equation}
\begin{split}
\frac{\partial U}{\partial x}(x,t) &= 
b_-(-u_{2P}^\mathrm{B}(x,t);x,t)
\frac{\partial u_{2P}^\mathrm{B}}{\partial x}(x,t) +
\sum_{n=0}^{2P-1} (-1)^nb_+(-u_n^\mathrm{B}(x,t);x,t)
\frac{\partial u_n^\mathrm{B}}{\partial x}(x,t) \\
&\quad\quad{}+
\sum_{n=0}^{2P} (-1)^nu_n^\mathrm{B}(x,t),\quad x\in S_t,\quad x<x_0+2Lt,
\end{split}
\end{equation}
where in both cases $P=P(x,t)$ is a constant nonnegative integer on each
connected component of $S_t$.  The terms on the first line in each of
these formulae arise from differentiating the limits of integration and
using $x_\pm(\lambda)=\pm \pi F(\lambda)-\gamma(\lambda)$, while the terms
on the second line arise from the explicit partial differentiation of the
integrand $x+2\lambda t+\gamma(\lambda)$ with respect to $x$.  It follows
from our division of the solutions of \eqref{eq:implicitBurgers}
among the roots of $b_+$ and $b_-$ that in both cases the terms on the
first line vanish identically, with the result that
\begin{equation}
\frac{\partial U}{\partial x}(x,t) = \sum_{n=0}^{2P(x,t)}(-1)^nu_n^\mathrm{B}(x,t),
\quad x\in S_t\setminus\{x_0+2Lt\}.
\label{eq:dUdxpen}
\end{equation}
This expression is clearly continuous in $x$ on each connected
component of $S_t\setminus\{x_0+2Lt\}$.  Moreover, it extends
continuously to the finite complement in $\mathbb{R}_x$ (at fixed
$t\ge 0$) because at caustics pairs of solution branches entering into
\eqref{eq:dUdxpen} with opposite signs simply coalesce.  Therefore
$U(\cdot,t)$ is indeed continuously differentiable for $t\ge 0$ and
its derivative is given by the desired simple formula \eqref{eq:Ux}.  
Virtually the
same argument applies to $t\le 0$ with the roles of $b_\pm(\lambda;x,t)$
reversed, and the resulting formula for $\partial U/\partial x$ is the
same.
\end{proof}

It follows from this result that we may integrate by parts in \eqref{eq:uUdist}
and obtain
\begin{equation}
\lim_{\epsilon\downarrow 0}\int_\mathbb{R}\ut(x,t)\phi(x)\,dx = 
\int_\mathbb{R}\frac{\partial U}{\partial x}(x,t)\phi(x)\,dx
\label{eq:uUxdist}
\end{equation}
for every test function $\phi\in\mathscr{D}(\mathbb{R})$.
Now let $v\in L^2(\mathbb{R})$.  Since $\mathscr{D}(\mathbb{R})$ is dense
in $L^2(\mathbb{R})$, for each $\sigma>0$ there exists a test function
$\phi_\sigma\in\mathscr{D}(\mathbb{R})$ such that 
\begin{equation}
\|\phi_\sigma-v\|_2^2:= \int_\mathbb{R}|\phi_\sigma(x)-v(x)|^2\,dx<\sigma^2.
\end{equation}
Then,
\begin{equation}
\begin{split}
\int_\mathbb{R}\left[\ut(x,t)-\frac{\partial U}{\partial x}(x,t)
\right]v(x)\,dx & =
\int_\mathbb{R}\left[\ut(x,t)-\frac{\partial U}{\partial x}(x,t)
\right]\phi_\sigma(x)\,dx \\
&\quad\quad {} +\int_\mathbb{R}\frac{\partial U}{\partial x}(x,t)\left[\phi_\sigma(x)-v(x)\right]\,dx\\
&\quad\quad {}-\int_\mathbb{R}\ut(x,t)\left[\phi_\sigma(x)-v(x)\right]\,dx.
\end{split}
\label{eq:udiffweak}
\end{equation}

Observe that, according to the definition (see Definition~\ref{def:modified}) of
$\ut(x,t)$ in terms of the modified scattering data, it follows
from \eqref{eq:I2} and 
\eqref{eq:equivalentintegrals} that
\begin{equation}
\int_\mathbb{R}\ut(x,t)^2\,dx = -4\pi\epsilon\sum_{n=1}^{N(\epsilon)}
\lamt_n.
\end{equation}
This Riemann sum converges as $\epsilon\downarrow 0$:
\begin{equation}
\lim_{\epsilon\downarrow 0}\int_\mathbb{R}\ut(x,t)^2\,dx = -4\pi
\int_{-L}^0 \lambda F(\lambda)\,d\lambda = \int_\mathbb{R}u_0(x)^2\,dx,
\label{eq:limintuepssquared}
\end{equation}
where the second equality follows from the identities 
\eqref{eq:momentrelations}, which essentially define $F(\lambda)$ in 
terms of the admissible initial condition $u_0$.  Therefore, 
$\|\ut(\cdot,t)\|_2$ is bounded for sufficiently small $\epsilon$,
independently of $t$.  

Also, $\partial U/\partial x$ is independent of
$\epsilon$ and from the formula \eqref{eq:Ux} it is easy to check that
it is positive and bounded above by the constant $L$ for all $(x,t)$.  
Therefore
\begin{equation}
\left\|\frac{\partial U}{\partial x}(\cdot,t)\right\|_2^2
\le L\int_\mathbb{R}\frac{\partial U}{\partial x}(x,t)\,dx.
\end{equation}
By the formula \eqref{eq:Ux}, the latter integral is equal to the 
area between the graph of the
multivalued solution curve for Burgers' equation and the $x$-axis.
Since points on the graph at the same height move with the same speed, 
this area is independent of time $t$, and hence we have
\begin{equation}
\left\|\frac{\partial U}{\partial x}(\cdot,t)\right\|_2^2\le
2\pi LM,
\end{equation}
where the mass $M$ is defined in terms of the initial condition $u_0$
by \eqref{eq:massdef}.
In fact, for $0\le t<T$, where $T$ is the breaking
time, it follows from the
fact that $\partial U/\partial x$ as given by \eqref{eq:Ux} reduces to the 
classical solution $u_0^\mathrm{B}(x,t)$ 
of Burgers' equation with initial data $u_0$, 
which conserves exactly the
$L^2(\mathbb{R}_x)$ norm, that
\begin{equation}
\left\|\frac{\partial U}{\partial x}(\cdot,t)
\right\|^2_2 = \|u_0^B(\cdot,t)\|^2_2 = \int_\mathbb{R}u_0(x)^2\,dx,
\quad 0\le t<T.
\label{eq:conservationBurgers}
\end{equation}
We will use this fact below in \S\ref{sec:corollary} when we prove
Corollary~\ref{corr:strong}.  In any case, these considerations show
that for all $\epsilon>0$ sufficiently small 
there exists a constant $K>0$ independent of $t$ such that
\begin{equation}
\left\|\frac{\partial U}{\partial x}(\cdot,t)\right\|_2 +
\|\ut(\cdot,t)\|_2\le K
\end{equation}
holds for all $t\ge 0$.

Now, by Cauchy-Schwarz it follows that
\begin{equation}
\left|\int_\mathbb{R}\frac{\partial U}{\partial x}(x,t)
\left[\phi_\sigma(x)-v(x)\right]\,dx
-\int_\mathbb{R}\ut(x,t)\left[\phi_\sigma(x)-v(x)\right]\,dx\right|\le
K\|\phi_\sigma-v\|_2.
\end{equation}
Given $\omega>0$ arbitrarily small, we then choose $\sigma=\omega/(2M)$ and
then \eqref{eq:udiffweak} implies that
\begin{equation}
\left|\int_\mathbb{R}\left[\ut(x,t)-\frac{\partial U}{\partial x}(x,t)
\right]v(x)\,dx\right|
\le\left|\int_\mathbb{R}\left[\ut(x,t)-\frac{\partial U}{\partial x}(x,t)
\right]\phi_{\omega/(2M)}(x)\,dx\right| +\frac{\omega}{2}.
\end{equation}
Finally, since $\phi_{\omega/(2M)}$ is a test function 
independent of $\epsilon$,
we may use \eqref{eq:uUxdist} to choose $\epsilon>0$ so small that the
first term on the right-hand side is less than $\omega/2$.  

This proves that 
\begin{equation}
\mathop{\mathrm{w}_x\mathrm{-lim}}_{\epsilon\downarrow 0}
\ut(x,t) = 
\frac{\partial U}{\partial x}(x,t)
\label{eq:uweaklimit}
\end{equation}
(weak $L^2$ convergence) uniformly for $t$ in bounded intervals.  
Combining \eqref{eq:Ux} with \eqref{eq:uweaklimit} completes the proof 
of Theorem~\ref{MainTheorem}.

\section{Strong Convergence Before Breaking}
\label{sec:corollary}
In this brief section we give a proof of Corollary~\ref{corr:strong}, following
closely Lax and Levermore (see Theorem~4.5 in part II of \cite{Lax 1983-1}).
Starting from the identity
\begin{equation}
\|\ut(\cdot,t)-u_0^\mathrm{B}(\cdot,t)\|_2^2 = 
\int_\mathbb{R}\ut(x,t)^2\,dx + \int_\mathbb{R}u_0^\mathrm{B}(x,t)^2\,dx
-2\int_\mathbb{R}\ut(x,t)u_0^\mathrm{B}(x,t)\,dx,
\end{equation}
we note that for $0\le t<T$, where $T$ is the breaking time,
\eqref{eq:limintuepssquared} and \eqref{eq:conservationBurgers} imply
that
\begin{equation}
\lim_{\epsilon\downarrow 0}\|\ut(\cdot,t)-u_0^\mathrm{B}(\cdot,t)\|_2^2
=2\int_\mathbb{R}u_0(x)^2\,dx-2\lim_{\epsilon\downarrow 0}\int_\mathbb{R}
\ut(x,t)u_0^\mathrm{B}(x,t)\,dx.
\end{equation}
But $u_0^\mathrm{B}(\cdot,t)\in L^2(\mathbb{R})$ is independent of $\epsilon$,
so by Theorem~\ref{MainTheorem},
\begin{equation}
\lim_{\epsilon\downarrow 0}\int_\mathbb{R}\ut(x,t)u_0^\mathrm{B}(x,t)\,dx
= \int_\mathbb{R}u_0^\mathrm{B}(x,t)^2\,dx = 
\int_\mathbb{R}u_0(x)^2\,dx,
\end{equation}
with the second equality following from \eqref{eq:conservationBurgers} for
$0\le t <T$.
Therefore
\begin{equation}
\lim_{\epsilon\downarrow 0}\|\ut(\cdot,t)-u_0^\mathrm{B}(\cdot,t)\|_2
= 0
\end{equation}
as desired, and the proof is complete.

\section{Numerical Verification}
\label{sec:numerics}
To illustrate the weak convergence of $\ut(x,t)$ as guaranteed
by Theorem~\ref{MainTheorem}, and to attempt to empirically
quantify the rate of convergence, we have directly used the
exact formula \eqref{small dispersion matrix calculation 3_1} 
for $\Ut(x,t)$ having first
chosen the modified scattering data corresponding to the admissible
initial condition $u_0(x)=2(1+x^2)^{-1}$ as specified in 
Definition~\ref{def:modified}, and compared the result for several different
values of $\epsilon$ with the limiting formula \eqref{eq:Uformula}
for $U(x,t)$.
Our results are shown in Figure~\ref{fig:UUepsilon}.
\begin{figure}[h]
\begin{center}
\includegraphics{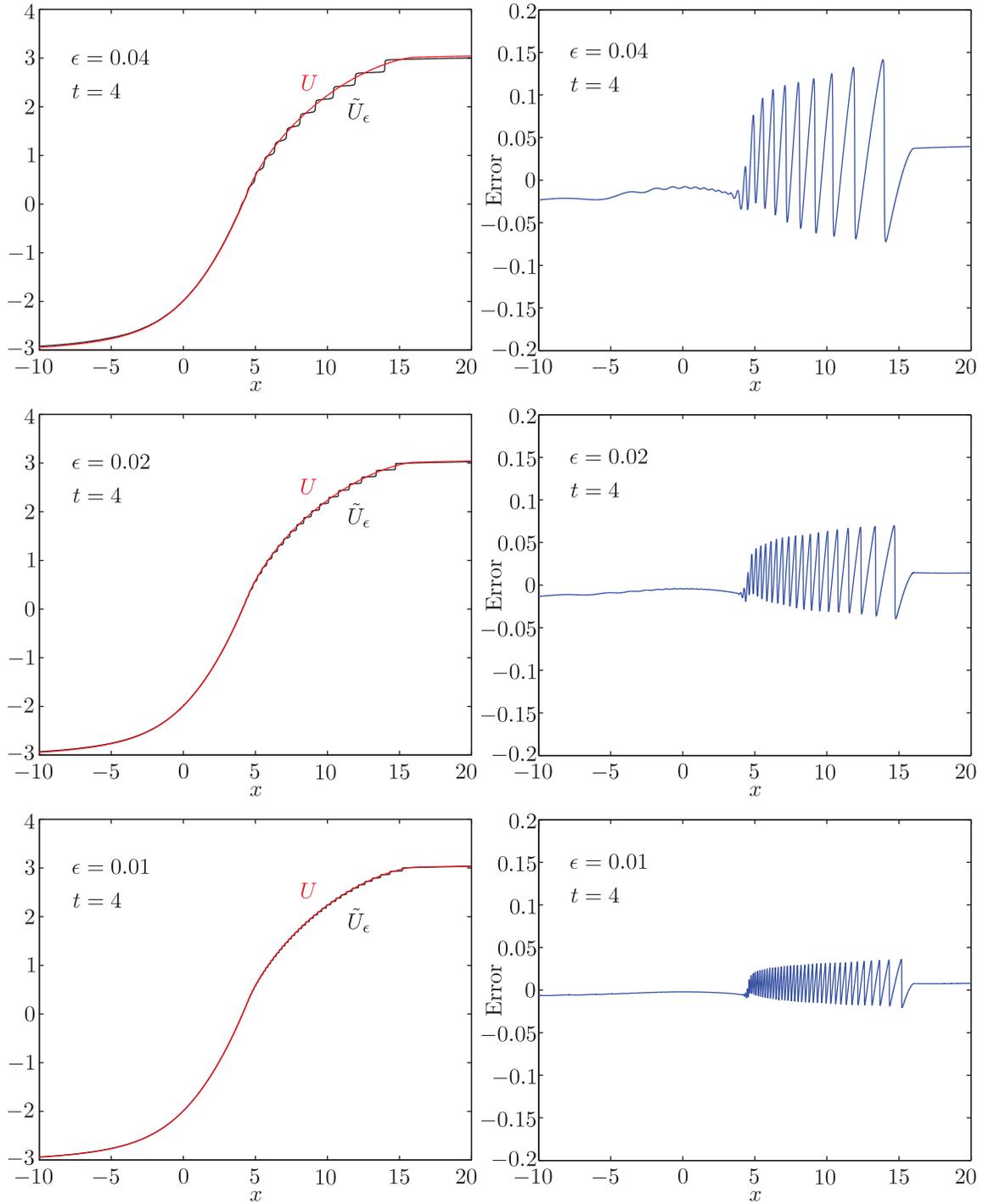}
\end{center}
\caption{\emph{Left: plots of $\Ut(x,t)$ (black) and its
    locally uniform limit $U(x,t)$ (red) at $t=4$ for various values
of $\epsilon$.  For these plots, $u_0(x):=2(1+x^2)^{-1}$.
Right:  corresponding plots of the error $U(x,t)-\Ut(x,t)$.
}}
\label{fig:UUepsilon}
\end{figure}
These plots clearly display the locally uniform convergence specified
in Proposition~\ref{limit of sx theorem}.  An interesting feature is the
apparent regular ``staircase'' form of the graph of $\Ut(x,t)$ 
as a function of $x$;
that the steps have nearly equal height is a consequence 
of the fact that near the leading edge of the oscillation
zone for $\ueps$ (which lies approximately in the range $4<x<16$ in these plots)
the undular bore wavetrain that is generated from the smooth initial data
resolves into a train of solitons of the BO equation,
each of which has a fixed mass proportional to $\epsilon$ (independent of 
amplitude and velocity).  

To the eye, the size of the error between $\Ut(x,t)$ and $U(x,t)$
appears to scale with $\epsilon$.  To confirm this more quantitatively,
we collected numerical data from several experiments, each performed with
a different value of $\epsilon$ at the fixed time $t=4$.  The supremum
norm, calculated over the interval $-10<x<20$, 
of the error resulting from each of these experiments is plotted in
Figure~\ref{fig:SupNormErrors}.  
\begin{figure}[h]
\begin{center}
\includegraphics{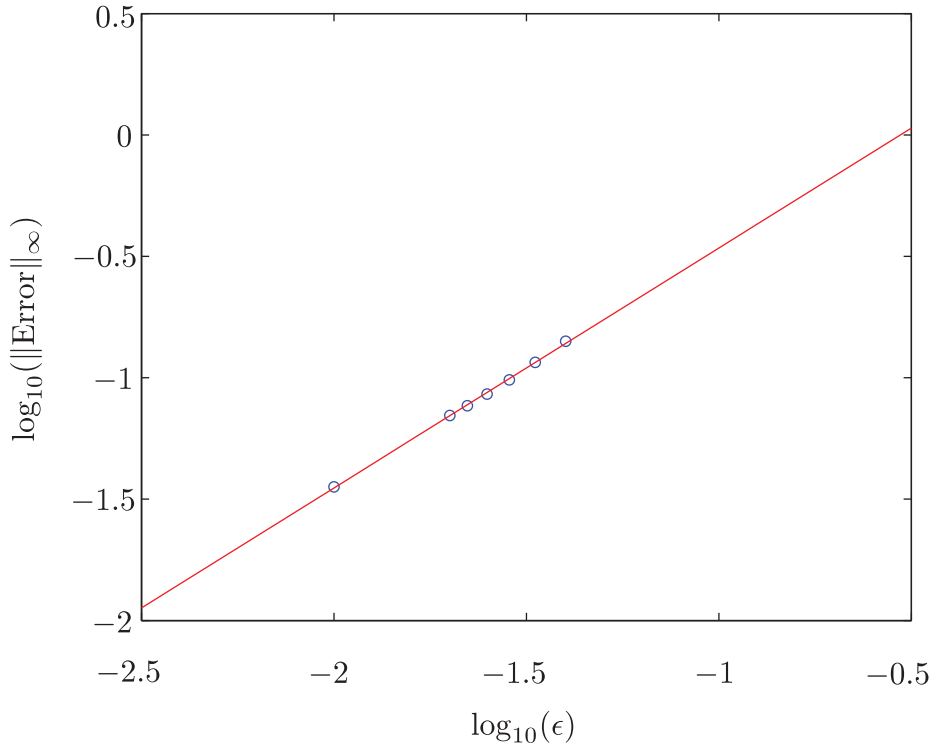}
\end{center}
\caption{\emph{Circles:  
$\log_{10}(\|\Ut(\cdot,4)-U(\cdot,4)\|_\infty)$ for
$\epsilon=1/25$, $1/30$, $1/35$, $1/40$, $1/45$, $1/50$, and $1/100$,
as a function of $\log_{10}(\epsilon)$.  In red:  The least-squares
linear fit.  }}
\label{fig:SupNormErrors}
\end{figure}
On this plot with logarithmic axes, the data points
appear to lie along a straight line, and we calculated the 
least squares linear fit to the data to be given by
\begin{equation}
\log_{10}(\|\Ut(\cdot,4)-U(\cdot,4)\|_\infty) = 
0.988 \log_{10}(\epsilon) + 0.523
\end{equation}
where the slope and intercept are given to three significant digits.
This strongly suggests a linear rate of convergence, in which the
error is asymptotically proportional to $\epsilon$ as
$\epsilon\downarrow 0$.

The initial data $u_0(x)=2(1+x^2)^{-1}$ was chosen for these
experiments because it is the only initial condition (up to a constant
multiple) for which the \emph{exact} scattering data is known for a
sequence of values of $\epsilon$ tending to zero.  This is the result
of a calculation of Kodama, Ablowitz, and Satsuma \cite{KodamaAS82},
who showed that if $u_0(x) = 2(1+x^2)^{-1}$, then
the reflection coefficient $\beta(\lambda)$ vanishes identically if
$\epsilon=1/N$ for any positive integer $N$.  Moreover, there are 
in this case exactly
$N$ eigenvalues $\lambda_1<\lambda_2<\cdots <\lambda_N$
of the operator $\mathcal{L}$ defined by 
\eqref{eq:Loperator}, and they are given implicitly by the equation
\begin{equation}
L_N\left(-\frac{2\lambda_n}{\epsilon}\right)=L_N\left(-2N\lambda_n\right) = 0,
\end{equation}
where $L_N$ is the Laguerre polynomial\footnote{The asymptotic 
density of zeros (here scaled by the factor $-2N$) 
of the Laguerre polynomials is well-known:
\[
F(\lambda) = \frac{1}{\pi}\sqrt{\frac{2+\lambda}{-\lambda}},\quad
-2<\lambda<0,
\]
a distribution also known in random matrix theory as the
Marchenko-Pastur law.  This asymptotic formula agrees exactly with
Matsuno's formula for $F(\lambda)$ in the case when $u_0(x)=2(1+x^2)^{-1}$, 
which gives some independent justification for its
validity.} of degree $N$.  The corresponding phase constants $\gamma_n$ 
all vanish
exactly.  The approximate eigenvalues determined from the initial
condition $u_0$ via the formula \eqref{value of eigenvalue given by a
  fomula} do not agree exactly with the scaled roots of the Laguerre
polynomial of degree $N$ (although the approximate phase constants
agree exactly with the true phase constants), so it is a worthwhile
exercise to compare the function $\ut(x,t)$ as specified by
Definition~\ref{def:modified} with the true solution $\ueps(x,t)$
of the Cauchy
problem for the BO equation with initial data $u_0(x)=2(1+x^2)^{-1}$.
Of course Corollary~\ref{corr:strong} guarantees strong convergence in
$L^2$ at $t=0$ (that is, $\ut(\cdot,0)$ is $L^2$-close to
$u_0(\cdot)$) but this alone does not guarantee that $\ut(x,t)$
approximates $\ueps(x,t)$ in any sense for
$t>0$.  We made the comparison for several values of $\epsilon>0$
corresponding to a reflectionless exact solution of the Cauchy problem
constructed\footnote{In fact this is
  the numerical method we used to create the plots in
  Figure~\ref{fig:numerics1}.  This has a tremendous advantage over
  taking a more traditional numerical approach to the Cauchy problem
  for the BO equation (that is, one involving time stepping) since the
  calculations necessary to find the solution for any two given values of 
$t$ are
completely independent, so errors do not propagate (and to find the
solution for any given time $t$ it is not necessary to perform any
calculations at all for intervening times from the initial instant).
The only source of error in the use of the determinantal formula
\eqref{eq:Hirotatransform}, at least if the differentiation is carried out
explicitly resulting in a sum of $N$ determinants, is due to
round-off.}
 from the determinantal formula
\eqref{eq:Hirotatransform}
at the time $t=4$, which is well beyond the breaking
time.  Our results are shown in Figure~\ref{fig:CauchyCompare}.
\begin{figure}[h]
\begin{center}
\includegraphics{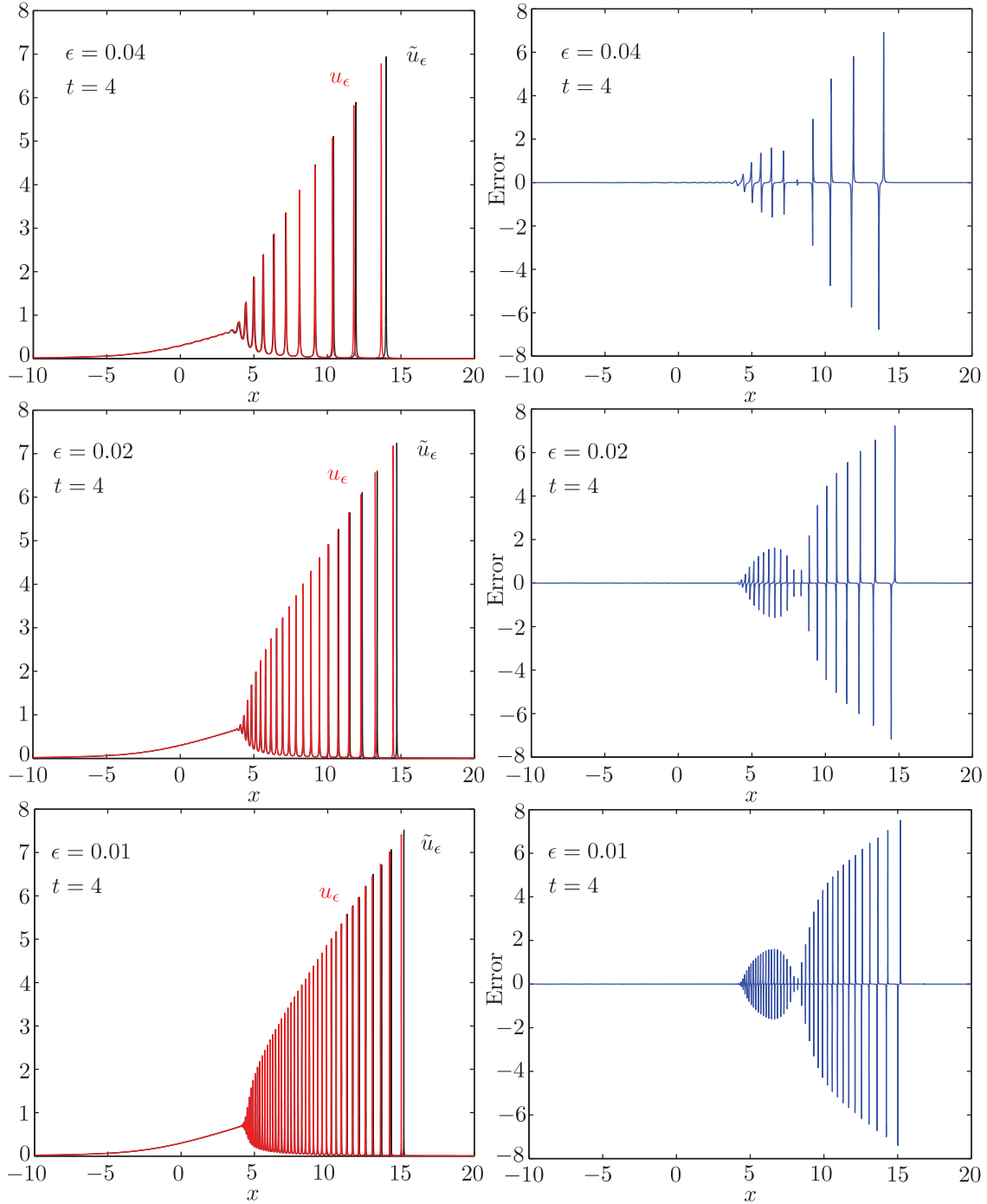}
\end{center}
\caption{\emph{Left:  plots of $\ut(x,t)$ (black) shown together with
$\ueps(x,t)$ (red) for the initial data
$u_0(x)=2(1+x^2)^{-1}$ shown for several values of $\epsilon$ at $t=4$.
Right:  The error $\ueps(x,t)-\ut(x,t)$.}}
\label{fig:CauchyCompare}
\end{figure}
These plots show that the modification of the scattering data used to
construct $\ut(x,t)$ results in a phase shift relative to $\ueps(x,t)$
that is proportional to $\epsilon$, the approximate wavelength of the 
oscillations in the undular
bore structure.  In particular, $\ut(x,t)$ does not remain close to
$\ueps(x,t)$ after the breaking time in any strong sense,
although is appears highly likely that convergence is restored in the weak 
topology.  

\section{Conclusion}
\label{sec:conclusion}
In this paper, we have obtained the first rigorous results regarding the 
zero-dispersion limit of the Cauchy problem for the BO
equation.  As suggested by the formal mulitphase 
averaging of modulated $N$-phase wavetrain solutions carried out by
Dobrokhotov and Krichever \cite{Dobrokhotov 1991}, the scalar inviscid 
Burgers' equation \emph{and its multivalued solution after wave breaking}
characterizes the limit.

To analyze the BO Cauchy problem, we used a remarkable formula of
Matsuno \cite{Matsuno 1981, Matsuno 1982} for the density of eigenvalues
of the nonlocal operator $\mathcal{L}$ appearing in the scattering theory,
and we have proposed a new asymptotic formula \eqref{formula for gamma 
in terms of x} for the corresponding phase constants necessary to set up 
the inverse-scattering problem.  We then developed an analogue of 
the Lax-Levermore method \cite{Lax 1983-1} to study the inverse-scattering
problem, and we obtained an explicit formula \eqref{limit of sx 2}
for a measure $\mu$ with density $G(\alpha;x,t)$ that is the BO equivalent
of the extremal measure in Lax-Levermore theory.  By contrast with the
KdV case, the formula we obtain for the weak limit from this measure
is remarkably simple and explicit, writing the weak limit as a signed sum
of branches of the multivalued solution of Burgers' equation.

A useful generalization of the weak limit given in this paper is to consider
the weak limits of all of the variational derivatives
\begin{equation}
K_n:=\frac{\delta I_n}{\delta u},\quad n=3,4,5,\dots,
\end{equation}
where the $I_n$ is the $n^\mathrm{th}$ conserved quantity given by 
\eqref{eq:integralsofmotion}.  Each of the quantities $K_n$ can be written
in the form
\begin{equation}
K_{n+2}=\frac{\partial}{\partial t_n}
\Ut(x,t_1,t_2,\dots,t_n)
\end{equation} 
where after the differentiation $t_1$ is set equal to $t$ and all $t_k$ for 
$k>1$ are set to zero.  Here $\Ut(x,t_1,t_2,\dots,t_n)$ is given by 
\eqref{small dispersion matrix calculation 3_1} 
with each occurrence of $2t$ in the expression $-2\lamt_k(x+2\lamt_kt)$
occurring in the matrix elements of $\tilde{\mat{A}}_\epsilon$ replaced by
$2t_1-3\lamt_k t_2+4\lamt_k^2 t_3 - 5\lamt_k^3 t_4 +\cdots +
(-1)^{n+1}(n+1)\lamt_k^{n-1}t_n$.  
Formulas for these higher weak limits can also be obtained
within the framework of our method and will be published in a subsequent
paper.

We are also currently investigating prospects for strengthening the limit
after wave breaking occurs.  The goal here is to rigorously establish
an asymptotic formula for $\ut(x,t)$ that is valid
\emph{pointwise} for $(x,t)$ in the oscillation zone. Such a formula
should accurately resolve the microscopic (wavelength proportional to
$\epsilon$) oscillations, including finding the phase up to error
terms that are bounded by a vanishingly small fraction of the
wavelength.  One expects the asymptotic form of the wavetrain to be
given by the rational-exponential formulae found by Dobrokhotov and
Krichever \cite{Dobrokhotov 1991}.  For the KdV equation such
pointwise asymptotics have been obtained \cite{Deift 1997} using the
Deift-Zhou steepest descent technique for matrix-valued
Riemann-Hilbert problems.  We are working to extend this kind of
methodology to the context of scalar Riemann-Hilbert problems with
\emph{nonlocal} jump conditions, as occurs in the inverse-scattering
transform (generally with nonvanishing reflection coefficient) for the
BO equation.

It has been recently conjectured by Dubrovin \cite{Dubrovin08} that
near the earliest breaking point $(x_\xi,t_\xi)$ the solution of the
Cauchy problem for quite general weakly dispersive Hamiltonian
perturbations of Burgers' equation should exhibit a universal form
expressed in terms of Painlev\'e transcendents.  This conjecture has
been confirmed for general initial data for the KdV equation (as a
particular case of a perturbation considered by Dubrovin) by Claeys
and Grava \cite{ClaeysG09}.  It would be interesting to determine by
direct calculation of the solution near the breaking point
$(x_\xi,t_\xi)$ whether the BO equation should be considered to fall
within the universality class of equations conjectured by Dubrovin.

\section{Acknowledgements}
This work was supported by the National Science Foundation under grant number
DMS-0807653.

\end{document}